\documentclass[aps, amsmath, amssymb, aps, pre,showkeys, twocolumn,groupedaddress,superscriptaddress,nofootinbib,longbibliography]{revtex4-1}

\usepackage{multirow}

\usepackage{enumitem}
\setlist{noitemsep,topsep=0pt,parsep=0pt,partopsep=0pt}
\usepackage{array}

\usepackage[english]{babel}

\usepackage[colorlinks,linkcolor=black,citecolor=blue,urlcolor=black]{hyperref}
\usepackage{amsmath}
\usepackage{tabularx}
\usepackage{amsthm}

\usepackage{graphics}
\usepackage{graphicx}
\usepackage{xcolor}
\usepackage[config, labelfont={sf,bf}, textfont=sf]{subfig}
\graphicspath{{./FIG/}}
\usepackage[suffix=, outdir=./FIG/]{epstopdf}
\usepackage{hhline}

\usepackage{lineno}

\modulolinenumbers[5]

\newtheorem{theorem}{Theorem}
\newtheorem{lemma}{Lemma}
\newtheorem{corollary}{Corollary}

\newtheorem{remark}{Remark}

\begin{document}

\title{Analytical and numerical study of the hidden boundary of practical stability: \\
complex versus real Lorenz systems}

\author{N.V. Kuznetsov}
\email[]{Corresponding author: nkuznetsov239@gmail.com}
\affiliation{Faculty of Mathematics and Mechanics, St. Petersburg State University,
Peterhof, St. Petersburg, Russia}
\affiliation{Faculty of Information Technology,
University of Jyv\"{a}skyl\"{a}, Jyv\"{a}skyl\"{a}, Finland}
\affiliation{Institute for Problems in Mechanical Engineering RAS, Russia}

\author{T.N. Mokaev}
\affiliation{Faculty of Mathematics and Mechanics, St. Petersburg State University,
Peterhof, St. Petersburg, Russia}

\author{A.A.-H. Shoreh}
\affiliation{Faculty of Mathematics and Mechanics, St. Petersburg State University,
Peterhof, St. Petersburg, Russia}
\affiliation{Department of Mathematics, Faculty of Science, Al-Azhar University, Assiut, Egypt}

\author{A. Prasad}
\affiliation{Department of Physics \& Astrophysics, Delhi University, India}

\author{M.D. Shrimali}
\affiliation{Central University of Rajasthan, Ajmer, India}

\date{\today}

\keywords{chaos, Lorenz system, complex Lorenz system, boundary of global stability, hidden attractors, transient chaos}

\begin{abstract}
This work presents the continuation of the recent article ''The Lorenz system: hidden boundary
of practical stability and the Lyapunov dimension''~\cite{KuznetsovMKKL-2020},
published in the Nonlinear Dynamics journal.
In this work, in comparison with the results
for classical real-valued Lorenz system (henceforward -- Lorenz system),
the problem of analytical and numerical
identification of the boundary of global stability for the complex-valued Lorenz system
(henceforward -- complex Lorenz system) is studied.
As in the case of the Lorenz system,
to estimate the inner boundary of global stability
the possibility of using the mathematical apparatus of Lyapunov functions
(namely, the Barbashin-Krasovskii and LaSalle theorems) is demonstrated.
For additional analysis of homoclinic bifurcations in complex Lorenz system
a special analytical approach by Vladimirov is utilized.
To outline the outer boundary of global stability and identify the
so-called hidden boundary of global stability,
possible birth of hidden attractors and transient chaotic sets
is analyzed.

\end{abstract}

\maketitle
\section{Introduction}

In the study of the dynamics of applied systems, an important role is played
by the property of \emph{global stability}\footnote{
  We use the term ``global stability'' for simplicity
  of further presentation, while in the literature there
  are used different notions, e.g.
  ``globally asymptotically stable'' \cite[p.~137]{Vidyasagar-1978}, \cite[p.~144]{HaddadC-2011},
  ``gradient-like'' \cite[p.~2]{LeonovRS-1992}, \cite[p.~56]{YakubovichLG-2004},
  ``quasi-gradient-like'' \cite[p.~2]{LeonovRS-1992}, \cite[p. 56]{YakubovichLG-2004}
  and others, reflecting the features of
  the stationary set and the convergence of trajectories to it.
},
when there are no oscillations in the system, and all trajectories are attracted to a stationary set.
In some cases, the global stability property can be established analytically
(e.g. by Lyapunov-type, either frequency-domain methods,
see~\cite{BarbashinK-1952,Lasalle-1960,LeonovPS-1996,KuznetsovLYYKKRA-2020-ECC}),
and the global stability boundary
can be constructed in the parameter space.
Crossing this analytical boundary can lead to the following scenarios.
After crossing the analytical border of global stability,
the system can still exhibits \emph{practical}\footnote{
  Let us remark here a different notion of \emph{practical} stability,
  suggested and developed in \cite{LaSalleL-1961,LakshmikanthamLM-1990}.
  According to this notion, a unique equilibrium of a perturbed system
  is practically stable, if for all bounded perturbations
  and all initial points from a vicinity
  $Q_0$ of equilibrium, corresponding trajectories stay inside a
  larger vicinity $Q \supset Q_0$ for all moments of time;
  at the same time, the equilibrium can be Lyapunov unstable.
} global stability~\cite{KuznetsovMKKL-2020},
i.e. almost all trajectories are attracted to a stationary set,
however, in this case a countable number of orbits may appear,
which nevertheless are not realized in practice and in standard
numerical experiments (e.g., unstable periodic or
homoclinic orbits~\cite{AfraimovicBSh-1977,AuerbachCEGP-1987,Cvitanovic-1991,ShilnikovTCh-2001,Leonov-2013-IJBC,LeonovKM-2015-EPJST,LeonovMKM-2020-IJBC}).
On the other hand, crossing the analytical boundary of global stability
can lead to the loss of stability of the stationary set.
Then, if the system is dissipative in the sense Levinson
(\cite{Levinson-1944}, see also~\cite{LeonovKM-2015-EPJST}),
but all stationary points are unstable, then inside the absorbing set
a self-excited nontrivial attractor can be observed and easily localized.
Finally, after crossing the boundary of global stability,
the stationary set can remain locally stable (or partially locally stable),
but at the same time \emph{hidden oscillations} (periodic or chaotic)~\cite{LeonovKV-2011-PLA,LeonovK-2013-IJBC,LeonovKM-2015-EPJST,Kuznetsov-2016,Kuznetsov-2020-TiSU}
may appear in the phase space, which basins of attraction do not intersect
with arbitrarily small open vicinities of equilibria.
This part of border of global stability, accordingly, can be also called
"hidden"~\cite{Kuznetsov-2020-TiSU,KuznetsovMKKL-2020}.

This work is the continuation of the recent article
''The Lorenz system: hidden boundary
of practical stability and the Lyapunov dimension''~\cite{KuznetsovMKKL-2020},
where the analytical and numerical analysis of the boundary of
global stability was performed for the classical Lorenz system~\cite{Lorenz-1963}:
\begin{equation}\label{sys:lorenz}
\begin{cases}
 \dot{x} =  - \sigma(x-y),\\
 \dot{y} = r x - y - x z, \\
 \dot{z} = - b z + x y,
\end{cases}
\end{equation}
where parameters $r$, $\sigma > 0$, $b \in (0,4]$.
Here in the introduction, we briefly outline these results
to compare them further with the results for the complex Lorenz system.

\medskip
\renewcommand{\labelenumi}{\bf\arabic{enumi})}
\begin{enumerate}[leftmargin=0pt,labelwidth=-10pt,labelsep=0pt]
\setlength\itemsep{1em}
\item {\bf Dissipativity}.
The Lorenz system \eqref{sys:lorenz} is dissipative in the sense
of Levinson, since it has the absorbing set~\cite{LeonovB-1992}
\begin{multline}\label{eq:abs_set}
   \mathfrak{B}\!=\!\!\big\{\!(x,\!y,\!z) \!\in\! \mathbb{R}^{3} \, | \, x^{2}+y^{2}+(z-r-\sigma)^{2} \!\leq\!
   \tfrac{b(\sigma+r)^{2}}{2\min(1,\sigma,\frac{b}{2})}\!\big\}.
\end{multline}
This implies that all solutions of \eqref{sys:lorenz} exist for $t\in[0,+\infty)$ and,
thus, system \eqref{sys:lorenz} generates a dynamical system.
For $0 < r \leq 1$, the stationary set of
system~\eqref{sys:lorenz} consists of a unique stable equilibrium $S_0 = (0,0,0)$;
for $r > 1$ a pair of symmetric equilibria
$S_\pm = (\pm\sqrt{b(r-1)},\pm\sqrt{b(r-1)},r-1)$ is added to the stationary set,
and $S_0$ turns into a saddle.

\item {\bf Inner estimation of global stability}.
For the Lorenz system~\eqref{sys:lorenz}, combining several approaches
based on the construction of Lyapunov functions \cite{BarbashinK-1952,Lasalle-1960,Leonov-2018-UMZh-rus}
it is possible to prove the following
criterion for the absence of self-excited and hidden oscillations
(see Figs.~\ref{fig:lorenz:GS:inner:barbashin},~\ref{fig:lorenz:GS:inner:lasalle}).
\begin{theorem}\label{thm:gs}
If for parameters of system~\eqref{sys:lorenz} one of
the following cases holds:
\begin{equation}\label{eq:gs_condition}
  2 \sigma \leq b, \qquad \text{or} \qquad
  \begin{cases}
    2 \sigma > b, \\
    r < r_{\rm gs} = \frac{(\sigma + b) (b + 1)}{\sigma},
  \end{cases}
\end{equation}
then there are no nontrivial self-excited
and hidden oscillations in the phase space
of system~\eqref{sys:lorenz}, and
any its solution $(x(t),y(t),z(t))$
tends to the stationary set as $t \to +\infty$.
\end{theorem}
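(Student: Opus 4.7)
The plan is to cover the two cases of \eqref{eq:gs_condition} with two tailored Lyapunov functions and invoke the Barbashin--Krasovskii theorem together with LaSalle's invariance principle. Levinson dissipativity via the absorbing set $\mathfrak{B}$ already guarantees bounded forward orbits, so the task reduces to showing that every $\omega$-limit set is contained in the stationary set $\{S_0, S_\pm\}$, which will rule out both self-excited and hidden oscillations.

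For the regime $2\sigma \le b$ a purely quadratic candidate
\begin{equation*}
V_1(x,y,z) = \tfrac{1}{2}\big(x^2 + \sigma y^2 + \sigma(z-\zeta)^2\big)
\end{equation*}
should suffice: along \eqref{sys:lorenz} the cubic $xyz$-terms cancel by the choice of the weight in front of $(z-\zeta)^2$, the $xy$-cross term is handled by a completion of squares, and the $z$-block is non-positive precisely because $b - 2\sigma \ge 0$. After fixing $\zeta$ to absorb the $z$-linear piece, I would verify $\dot V_1 \le 0$ and observe that $\dot V_1 = 0$ forces $x = y = 0$; on this set the $z$-equation reduces to $\dot z = -bz$, so the largest invariant subset of $\{\dot V_1 = 0\}$ collapses to the equilibria, and LaSalle delivers convergence.

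For the harder regime $2\sigma > b$, $r < r_{\mathrm{gs}}$, I would enlarge the ansatz to
\begin{equation*}
V_2(x,y,z) = \tfrac{1}{2}\big(\alpha x^2 + \beta y^2 + \gamma(z-\zeta)^2\big) + \delta z,
\end{equation*}
choose $\alpha,\beta,\gamma,\delta,\zeta$ so that the $xyz$-contributions to $\dot V_2$ cancel identically, and then impose negative semidefiniteness of the residual quadratic form in $(x,y,z)$. The Sylvester inequalities on the resulting $3{\times}3$ matrix are expected to saturate \emph{simultaneously} on the curve $r = (\sigma+b)(b+1)/\sigma$; matching this precise value---rather than some weaker sufficient bound---is the crux of the argument and the main obstacle, since the free parameters are coupled and must be balanced against one another. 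A cleaner route, presumably the one Leonov pursued, is to recast the parameter search as a frequency-domain / S-procedure problem in the spirit of Leonov--Yakubovich, in which $r_{\mathrm{gs}}$ emerges as the extremal value of a matrix-pencil eigenvalue problem. Once $\dot V_2 \le 0$ has been established (strict away from the equilibria whenever $r < r_{\mathrm{gs}}$), Barbashin--Krasovskii---augmented by LaSalle to dispose of any remaining degenerate direction, exactly as in the first case---yields the stated convergence to the stationary set.
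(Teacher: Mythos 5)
Your overall strategy (Lyapunov functions plus Barbashin--Krasovskii/LaSalle) is the right family of tools, but the quadratic ansatz cannot reach the claimed bound, and the obstruction is structural rather than a matter of ``balancing coupled parameters.'' Carry your $V_2=\tfrac12\big(\alpha x^2+\beta y^2+\gamma(z-\zeta)^2\big)+\delta z$ through the computation: cancelling the cubic $xyz$-terms forces $\gamma=\beta$; killing the resulting linear term in $z$ (which would otherwise make $\dot V_2>0$ on part of the $z$-axis) forces $\delta=\beta\zeta$, and then the $\zeta$-dependent $xy$-contributions cancel too, leaving
\[
\dot V_2=-\alpha\sigma x^2-\beta y^2-\beta b z^2+(\alpha\sigma+\beta r)\,xy .
\]
Negative semidefiniteness requires $(\alpha\sigma+\beta r)^2\le 4\alpha\sigma\beta$, i.e.\ $(s+r)^2\le 4s$ with $s=\alpha\sigma/\beta>0$, and the discriminant $16(1-r)$ is nonnegative only for $r\le 1$. (Adding $xy$, $xz$ or $yz$ cross terms to $V$ only creates cubics $x^2z$, $x^2y$, $xz^2$ with no cancelling partners, so the diagonal form above is already the general case.) Hence no quadratic Lyapunov function gets past $r=1$, while $r_{\rm gs}=(\sigma+b)(b+1)/\sigma>1$ always; worse, the first alternative of the theorem ($2\sigma\le b$) imposes \emph{no} bound on $r$ at all, which no quadratic form can deliver. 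The same computation defeats your $V_1$: the condition $2\sigma\le b$ simply never enters, contrary to your claim that it makes the ``$z$-block'' non-positive.

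The missing idea is a \emph{non-quadratic} Lyapunov function built on the shifted variable $z-x^2/b$, so that $V$ contains a square of that expression (hence an $x^4$ term) together with a negative $-x^2$ contribution. This is the Leonov construction, and it is exactly the machinery the present paper exhibits in Case~3 of the proof of Theorem~\ref{theorem1} for the complex system: after the transformation \eqref{transformation-global} with $\xi=\varepsilon^2(Z-|X|^2/b)$, the function \eqref{Laypunovfanc2}, $V=\tfrac12\big[\tfrac{1}{|\Lambda|}\xi^{2}+|\eta|^{2}-|\chi|^{2}+\tfrac{\Theta}{2}|\chi|^{4}\big]$ with $\Lambda$ proportional to $2\sigma-b$, is bounded below by $-1/(4\Theta)$ and has $\dot V\le0$. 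The sign of $2\sigma-b$ decides whether the mixed terms $\chi\xi\eta^{*}$ survive in $\dot V$: for $2\sigma\le b$ they vanish identically and LaSalle applies with no restriction on $r$; for $2\sigma>b$ they must be dominated using the absorbing-set bound on $|x|^2$ from the dissipativity theorem, and it is this estimate --- not a Sylvester condition on a quadratic form, and not a frequency-domain reformulation --- that produces the threshold $r_{\rm gs}=(\sigma+b)(b+1)/\sigma$. Your argument would need to be rebuilt around that quartic ansatz, combined with the absorbing set \eqref{eq:abs_set}, to close the gap.
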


Beyond the estimate \eqref{eq:gs_condition} in Theorem~\ref{thm:gs},
the analysis of global stability and the birth of nontrivial attractors can be performed numerically.
It is further known that the separatrix of saddle $S_{0}$ can form a homoclinic loop
from which unstable cycles can arise and violate the global stability
(however, a set of measure zero does not affect the global attraction
on a stationary set from a practical point of view).
Using the Fishing principle \cite{Leonov-2013-IJBC,LeonovKM-2015-EPJST,LeonovMKM-2020-IJBC}
for the Lorenz system~\eqref{sys:lorenz}
it is possible to prove the following:
\begin{theorem}\label{thm:homo}
For $\sigma$ and $b$ fixed, there exists $r = r_{\rm h}\in (1,+\infty)$
corresponding to the homoclinic orbit of the
saddle equilibrium $S_0$ if and only if $3\sigma > 2 b + 1$.
\end{theorem}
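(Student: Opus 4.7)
The plan is to invoke Leonov's Fishing principle \cite{Leonov-2013-IJBC,LeonovKM-2015-EPJST,LeonovMKM-2020-IJBC}, which converts the existence of a homoclinic loop at $S_0$ into an intermediate-value statement about a Lyapunov-type functional evaluated along the one-dimensional unstable manifold of $S_0$ as $r$ varies.

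First I would fix the geometric setup. For every $r>1$ the equilibrium $S_0$ is a saddle with a one-dimensional unstable manifold $W^u(S_0)$; by the symmetry $(x,y,z)\mapsto(-x,-y,z)$ this manifold has two branches. Let $\Gamma^+(r)$ denote the branch leaving $S_0$ into the half-space containing $S_+$. A homoclinic orbit at $S_0$ is precisely a value $r=r_{\rm h}$ for which $\Gamma^+(r_{\rm h})\subset W^s(S_0)$. Thanks to the absorbing set \eqref{eq:abs_set}, the separatrix $\Gamma^+(r)$ stays bounded for all $r$ and depends continuously on $r$ on every compact time interval.

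Next, under the assumption $3\sigma>2b+1$, I would construct an indefinite quadratic form $V(x,y,z)$ (the ``fishing'' Lyapunov functional) with three properties: (i)~$V(S_0)=0$ and the zero level $\{V=0\}$ contains the tangent plane to $W^s(S_0)$ at the origin, forming a cone-like surface that separates the basins of the two nontrivial equilibria $S_\pm$; (ii)~$\dot V\leq 0$ along trajectories on the appropriate side of this cone; (iii)~$V(S_+)$ and $V(S_-)$ have opposite signs. The inequality $3\sigma>2b+1$ is precisely the algebraic feasibility condition for choosing admissible coefficients of $V$ so that $\dot V$ has the required definite sign --- this is where the hypothesis enters.

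With $V$ in hand, I would carry out the fishing argument. For $r$ slightly larger than $1$, the linearization at $S_+$ is stable, so $\Gamma^+(r)$ spirals into $S_+$ and eventually lies on the $S_+$-side of $\{V=0\}$. For $r$ sufficiently large, one shows that $\Gamma^+(r)$ is forced to cross the cone, so the ``fish'' $V(\Gamma^+(r))$ changes sign as a function of $r$. By continuity in $r$ there exists $r_{\rm h}\in(1,+\infty)$ at which $\Gamma^+(r_{\rm h})$ lies on $\{V=0\}$, and the monotonicity $\dot V\leq 0$ then forces it to approach $S_0$ along $W^s(S_0)$, producing the homoclinic loop. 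For the converse direction, when $3\sigma\leq 2b+1$, the quadratic form $V$ above cannot be realized with the required sign structure; I would instead exhibit a globally defined Lyapunov-type function that traps $\Gamma^+(r)$ monotonically on one side of any candidate separating surface for every $r>1$, ruling out a homoclinic return.

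The main obstacle will be the large-$r$ half of the fishing step: proving that for $r$ sufficiently large the branch $\Gamma^+(r)$ actually leaves the $S_+$-side of the cone. This is not automatic from dissipativity --- it requires the combination of the absorbing-set bound \eqref{eq:abs_set}, quantitative cone estimates near $S_0$, and careful control of the first excursion of $\Gamma^+(r)$ around $S_+$ to guarantee the sign change of $V(\Gamma^+(r))$.
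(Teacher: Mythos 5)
First, note that the paper itself gives no proof of Theorem~\ref{thm:homo}: the statement is quoted from the Fishing-principle literature \cite{Leonov-2013-IJBC,LeonovKM-2015-EPJST,LeonovMKM-2020-IJBC}, so your attempt is being measured against that method rather than an in-text argument. You correctly name the method and the overall continuation-in-$r$ structure (separatrix tends to $S_+$ for $r$ near $1$, behaves differently for large $r$, intermediate value gives $r_{\rm h}$), but your rendering of the Fishing principle is off in a way that matters. The fishing object is not the zero level of an indefinite quadratic form separating the basins of $S_\pm$; it is a bounded two-dimensional piece $\Omega$ of a surface whose boundary contains $S_0$ and a distinguished curve $\gamma$, chosen so that trajectories cannot exit through $\gamma\setminus\{S_0\}$ and so that the separatrix either avoids $\Omega$ or crosses its relative interior. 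The trichotomy ``avoids / crosses / passes through the boundary point $S_0$'' is what produces $r_{\rm h}$, and the inequality $3\sigma>2b+1$ is the feasibility condition for the no-exit property along $\gamma$. Your requirement (iii), that $V(S_+)$ and $V(S_-)$ have opposite signs, is both unnecessary and unachievable: any $V$ compatible with the symmetry $(x,y,z)\mapsto(-x,-y,z)$ of \eqref{sys:lorenz} satisfies $V(S_+)=V(S_-)$, and the argument nowhere needs such a sign split.

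Second, the two genuinely hard steps are left as acknowledged gaps without the ideas that close them. For the large-$r$ half, the essential device in the cited proofs is a rescaling with small parameter $\varepsilon=r^{-1/2}$ under which \eqref{sys:lorenz} converges to an explicitly analyzable conservative limit system, from which the crossing of $\Omega$ by $\Gamma^+(r)$ for large $r$ is read off; ``careful control of the first excursion of $\Gamma^+(r)$ around $S_+$'' is not a workable substitute and would not by itself yield the sign change. For the converse, observing that your particular $V$ ``cannot be realized'' when $3\sigma\le 2b+1$ proves nothing about non-existence of homoclinic orbits; one must exhibit an explicit function vanishing at $S_0$ and strictly decreasing along the separatrix for every $r>1$ precisely under $3\sigma\le 2b+1$, so that $\Gamma^+(r)$ can never return to the level $V=0$. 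Finally, continuity of $\Gamma^+(r)$ on compact time intervals is not enough for the intermediate-value step: you need continuity in $r$ of the first intersection of $\Gamma^+(r)$ with $\Omega$, which is exactly what the transversality and no-exit hypotheses of the Fishing principle are there to guarantee.
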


For instance, for the classical values of parameters
$\sigma = 10$, $b = 8/3$ of system~\eqref{sys:lorenz}
it is possible to find numerically the approximate value
of such homoclinic bifurcation $r_{\rm h} \approx 13.926$,
when two symmetric homoclinic orbits forming a homoclinic butterfly
appear (see Fig.~\ref{fig:lorenz:GS:inner:homoclinic}). Further increase of
the parameter $r$ leads to the birth of two saddle periodic orbits
from each homoclinic orbit~\cite{ShilnikovTCh-2001}.
\begin{figure*}[!ht]
 \centering
 \subfloat[
 {\scriptsize $r < 1$}
 ] {
 \label{fig:lorenz:GS:inner:barbashin}
 \includegraphics[width=0.33\textwidth]{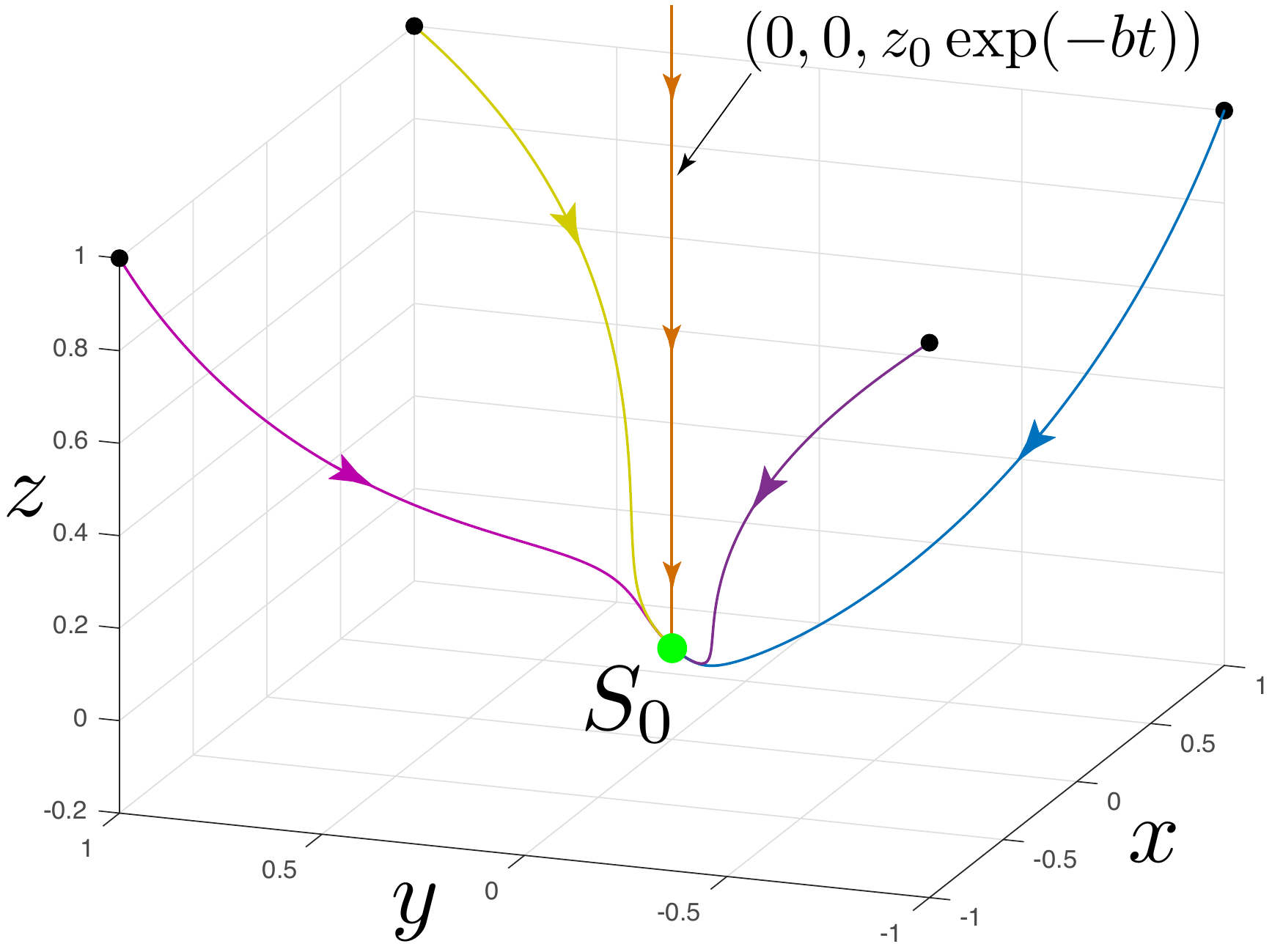}
 }~
 \subfloat[
 {\scriptsize $1 < r < r_{\rm gs} \approx 4.64$}
 ] {
 \label{fig:lorenz:GS:inner:lasalle}
 \includegraphics[width=0.33\textwidth]{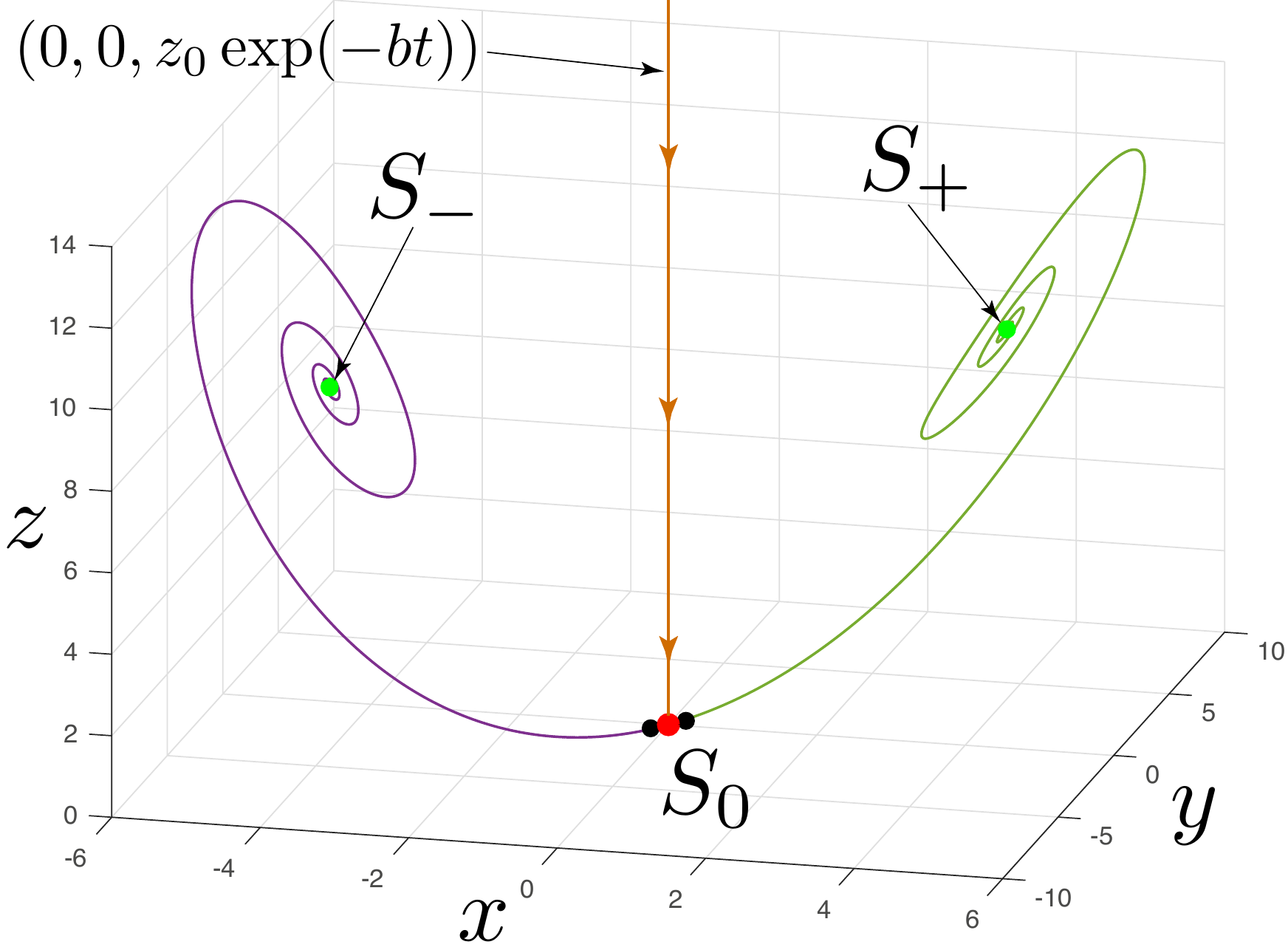}
 }
 \subfloat[{\scriptsize $r = r_{\rm h} \approx 13.926$}
 ] {
 \label{fig:lorenz:GS:inner:homoclinic}
 \includegraphics[width=0.33\textwidth]{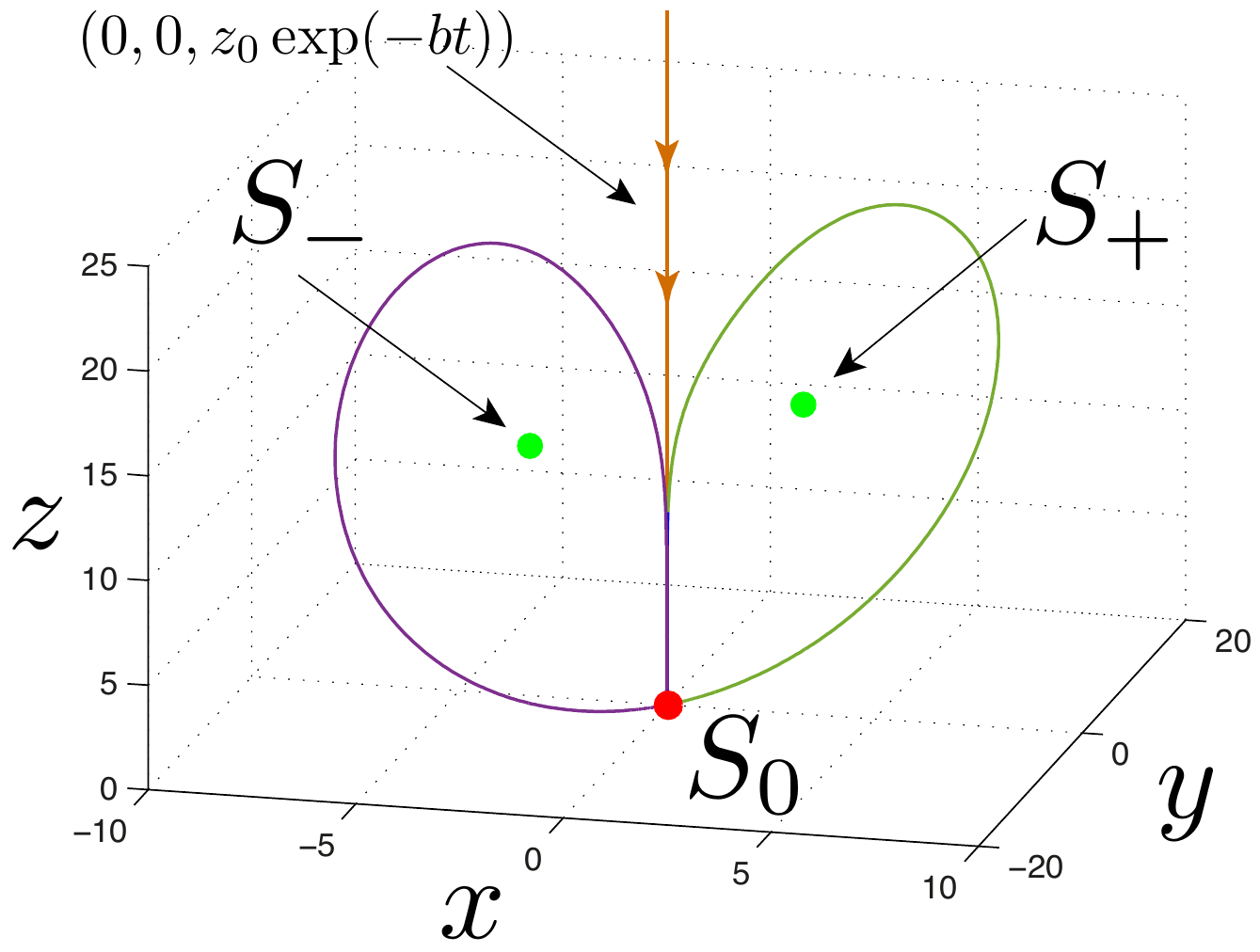}
 }
 \caption{
 Dynamics of the Lorenz system~\eqref{sys:lorenz} with fixed
 parameters $\sigma = 10$, $b = \tfrac{8}{3}$, when parameter $r$
 varying according to the inner estimation of global stability.
 }
 \label{fig:lorenz:GS:inner}
\end{figure*}

\item {\bf Outer estimation of global stability}.
For systems with a global absorbing set
and an unstable stationary set,
the existence of self-excited attractors is obvious.
From a computational perspective
this allows one to use a \emph{standard computational
procedure}, in which after a transient process,
a trajectory, starting from a point of unstable
manifold in a neighborhood of equilibrium, reaches
a state of oscillation, thus one can easily
identify it.

System~\eqref{sys:lorenz} possesses the absorbing set $\mathcal{B}$
(see Eq.~\eqref{eq:abs_set})
and for $\sigma > b~+~1$,
$r > r_{\rm cr} = \sigma (\tfrac{\sigma + b + 3}{\sigma - b - 1})$
all equilibria are unstable.
Thus, in this case, system~\eqref{sys:lorenz} has
a nontrivial self-excited attractor:
if we consider classical values of parameters
$\sigma = 10$, $b = 8/3$, then
for $r > r_{\rm cr}$, e.g. for $r = 28$,
it is possible to observe the self-excited chaotic attractor
with respect to all three equilibria $S_0$, $S_\pm$
(see Fig.~\ref{fig:lorenz:GS:outer:se_attr_wrt3eqs}).
\begin{figure*}[!ht]
 \centering
 \subfloat[
 {\scriptsize $r = 24$, hidden transient chaotic set}
 ] {
 \label{fig:lorenz:GS:outer:hid_trans_set}
 \includegraphics[width=0.32\textwidth]{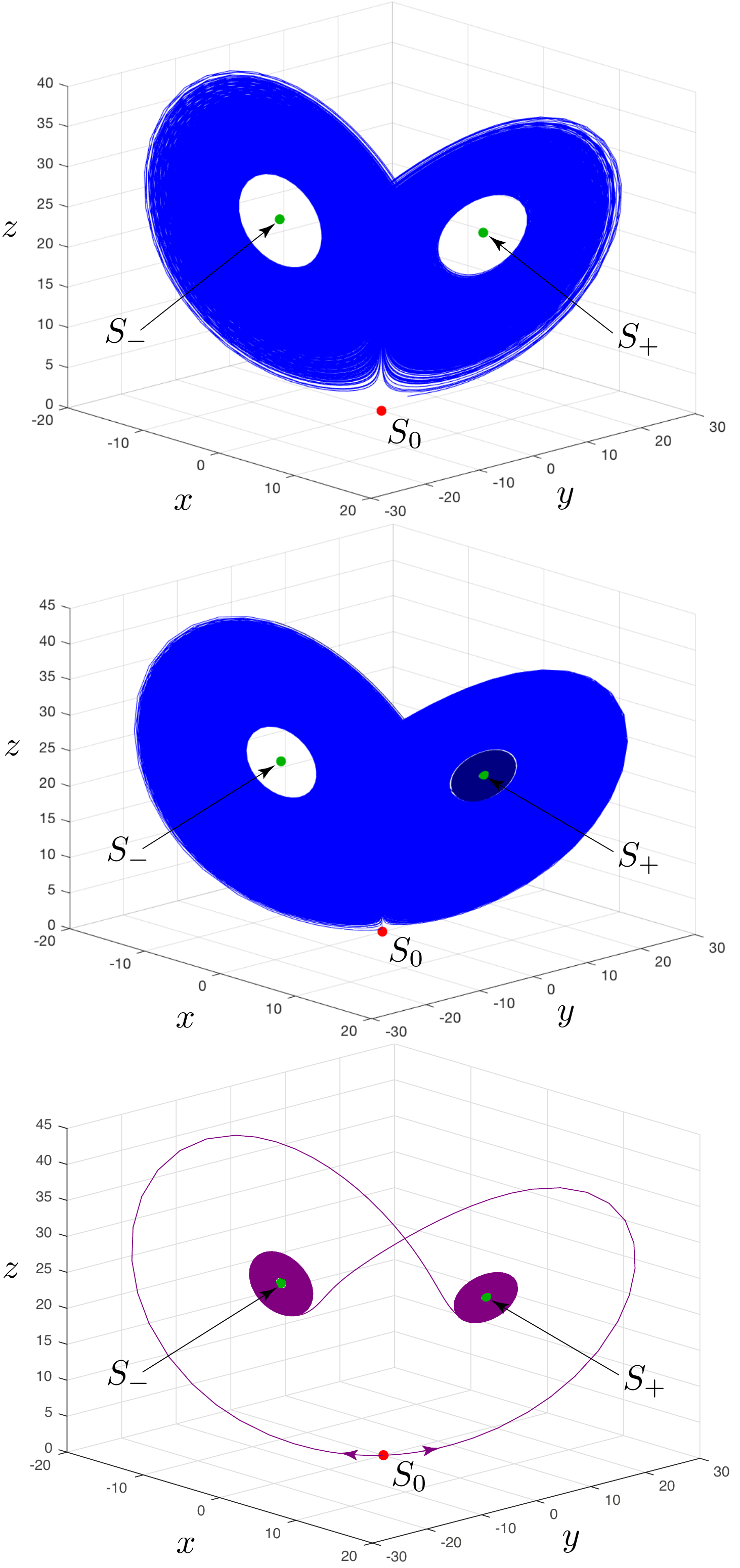}
 }~
 \subfloat[
 {\scriptsize $r = 24.5$, self-excited attractor w.r.t $S_0$}
 ] {
 \label{fig:lorenz:GS:outer:se_attr_wrt1eq}
 \includegraphics[width=0.32\textwidth]{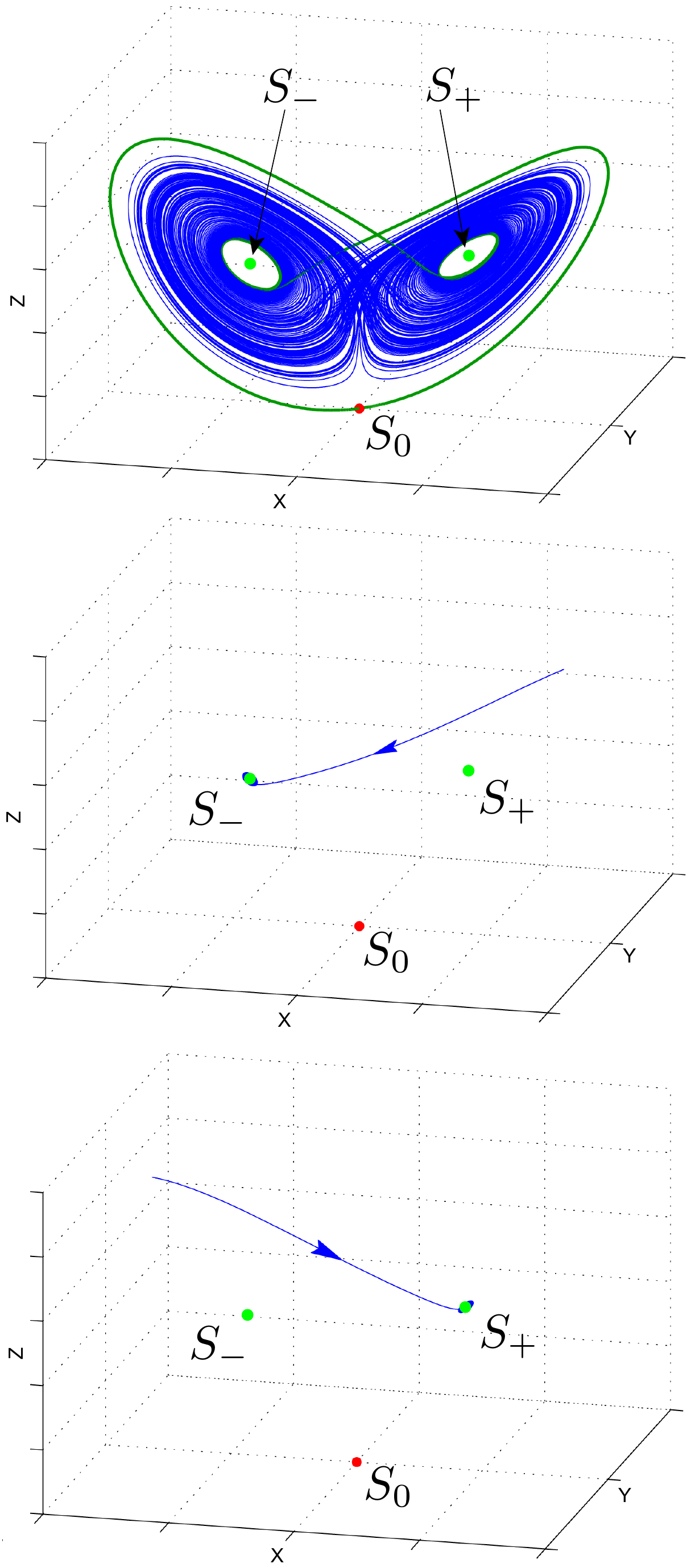}
 }~
 \subfloat[
 {\scriptsize $r = 28$, self-excited attractor w.r.t $S_0$, $S_\pm$}
 ] {
 \label{fig:lorenz:GS:outer:se_attr_wrt3eqs}
 \includegraphics[width=0.32\textwidth]{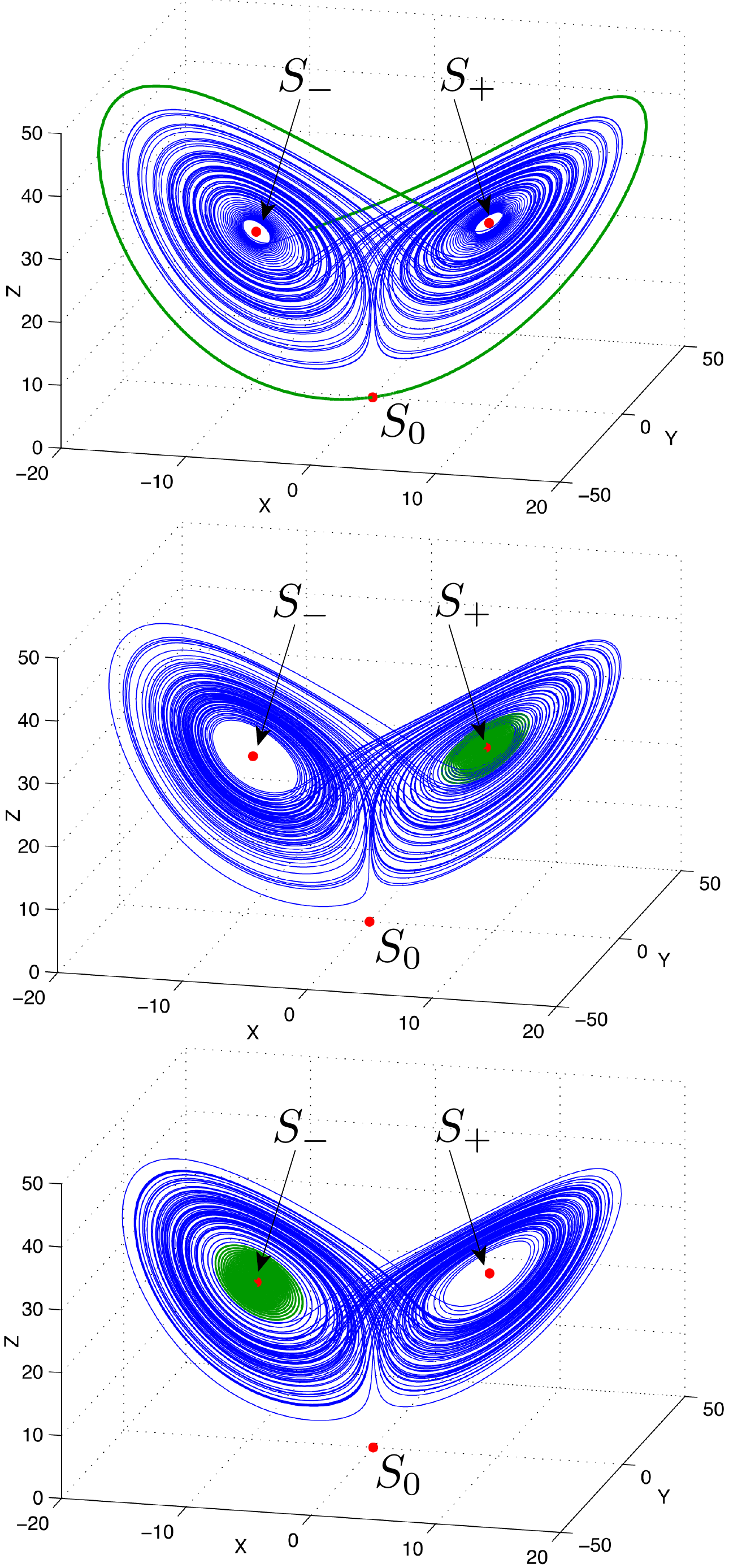}
 }
 \caption{
 Dynamics of the Lorenz system~\eqref{sys:lorenz} with fixed
 parameters $\sigma = 10$, $b = \tfrac{8}{3}$, when parameter $r$
 varying according to the outer estimation of global stability.
 }
 \label{fig:lorenz:GS:outer}
\end{figure*}

\item {\bf Boundary of practical global stability}.
The presence of an absorbing set
implies the existence of a globally attractor $\mathcal{A}_{\rm glob}$,
which contains all local self-excited and hidden attractors, and stationary set.
Thus, inside the set $\mathcal{B}$
it is possible to study numerically
the presence of nontrivial self-excited and hidden attractors
for parameters $r$, $\sigma$, $b$ not satisfying
conditions~\eqref{eq:gs_condition} of global stability, i.e.
by fixing $\sigma$ and $b$, and by decreasing $r$ from $r_{\rm cr}$.
For $\sigma = 10$, $b = 8/3$,
this gives us the following region
$r \in (r_{\rm gs}, r_{\rm cr})$, where
$r_{\rm gs} \approx 4.64$, $r_{\rm cr} \approx 24.74$.

A nontrivial self-exited attractor can be observed
numerically for $24.06 \lessapprox r < r_{\rm cr} \approx 24.74$
(see e.g. \cite{Sparrow-1982}).
In this case of nontrivial multistability,
system~\eqref{sys:lorenz} possesses
a local chaotic attractor $\mathcal{A}$ which is
self-excited with respect to equilibrium $S_0$ and
co-exists with the trivial attractors $S_\pm$
(see Fig.~\ref{fig:lorenz:GS:outer:se_attr_wrt1eq}).

\item {\bf Hidden attractor or transient set?}
For the Lorenz system~\eqref{sys:lorenz}
it is still an open question~\cite[p.~14]{Kuznetsov-2016},
whether for some parameters there exists a hidden chaotic attractor,
i.e. whether it is possible by changing parameters
to disconnect
the basin of attraction from equilibria $S_0$, $S_\pm$
(e.g. for the parameters $\sigma = 10$, $b = \tfrac{8}{3}$:
if $r = 28$, then attractor is connected with $S_0$, $S_\pm$;
if $r = 24.5$, then attractor is connected with only $S_0$).
The current results on the existence of the hidden attractors
in the Lorenz system are the following:
recently reported hidden attractors in the Lorenz system
with $r < r_{\rm cr}$ and locally stable equilibria $S_\pm$
turn out to be a \emph{transient chaotic set}
(a set in the phase space, which can persist for a long time,
but after all collapses), but not a \emph{sustained} hidden chaotic attractor
\cite{YuanYW-2017-HA,MunmuangsaenS-2018-HA}.

In a numerical computation of a trajectory over a finite-time interval
it is difficult to distinguish a \emph{sustained chaos} from
a \emph{transient chaos}
\cite{GrebogiOY-1983,LaiT-2011}, thus it is reasonable to give
a similar classification for transient chaotic sets~\cite{DancaK-2017-CSF,ChenKLM-2017-IJBC}:
\emph{transient chaotic set} is a \emph{hidden} one
if it does not involve and attract trajectories
from a small neighborhood of equilibria;
otherwise, it is \emph{self-excited}.

For the Lorenz system \eqref{sys:lorenz}
with parameters $\sigma = 10$, $b = \tfrac{8}{3}$ fixed,
near the point $r \approx 24.06$ it is possible
to observe a long living transient chaotic set, which is hidden
since it's basin of attraction does not intersect
with the small vicinities of equilibrium $S_0$
(see Fig.~\ref{fig:lorenz:GS:outer:hid_trans_set}).

In our experiments, consider system \eqref{sys:lorenz}
with parameters $r = 24$, $\sigma = 10$, $b = 8/3$.
Using MATLAB's standard procedure \texttt{ode45}
with default parameters
(relative tolerance $10^{-3}$, absolute tolerance $10^{-6}$)
for trajectory of system~\eqref{sys:lorenz}
with initial point $u_0 = (20,20,20)$
a transient chaotic behavior is observed
on the time interval $[0, \, 1.8 \cdot 10^{4}]$,
for initial point $u_0 = (-7,8,22)$ --- on the time interval $[0,\,7.2\!\cdot\!10^{4}]$,
for initial point $u_0 = (2,2,2)$ --- on the time interval $[0,\,2.26\!\cdot\!10^{5}]$,
and for initial point $u_0 = (0, -0.5, 0.5)$ a transient chaotic behavior
continues over a time interval of more than $[0,\,10^{7}]$.
Remark that, if we consider the same initial points, but
use MATLAB's procedure \texttt{ode45}
with relative tolerance $10^{-6}$, for all these initial points
the chaotic transient behavior will last
over a time interval of more than $[0,\, 10^6]$,
and corresponding transient chaotic sets won't collapse.
\end{enumerate}

\section{The complex Lorenz system}

The classical Lorenz system~\eqref{sys:lorenz}
has gained and still attracts much attention of scientists\footnote{
  The original celebrated work by Lorenz~\cite{Lorenz-1963} has gained more than
  24000 cites according to Google Scholar.
},
however, it is a rather simple mathematical model of the fluid dynamics problem,
which E.~Lorenz has initially investigated.
A more complicated model of many real-world physical problems,
such as two-layer baroclinic instability with weak viscosity and beta-effect
\cite{GibbonM-1982,FowlerGM-1982,FowlerGM-1983}
and detuned laser optics \cite{Ning-1990,NeimarkL-1992},
is described by the following complex Lorenz system:
\begin{equation}\label{sys:complex-Lorenz}
  \left\{
    \begin{array}{ll}
      \dot{X} = \sigma (Y-X),  \\
      \dot{Y} = r X -a Y-X Z,  \\
      \dot{Z} = -b Z + \frac{1}{2}( X^* Y + X Y^*),
    \end{array}
  \right.
\end{equation}
where $X = x_{1} + x_{2} i$, $Y = x_{3} + x_{4} i$ are complex variables,
$Z = x_{5}$ is real, and $"*"$ denotes complex conjugation operator.
Parameters $\sigma$, $b > 0$ are real, $r = r_{1}+r_{2}i$ and $a=1-ei$ are complex
with $r_{1}, e \in \mathbb{R}$.
Note that since the transformation
$X \rightarrow X^*, Y \rightarrow Y^*$
changes only the sign of $r_{2}$ in \eqref{sys:complex-Lorenz},
this parameter may be chosen positive~\cite{RauhHA-1996}.

System (\ref{sys:complex-Lorenz}) is also dissipative in the
sense of Levinson, and the bounded convex absorbing set could
be constructed according to the following theorem~\cite{LeonovR-1987}.
\begin{theorem}
\label{dissipativity}
Let for $\lambda_{0} = \min(1,b,\sigma)$
there exist $\lambda \in (0, \lambda_{0})$,
$\gamma > 0$, $\vartheta$ such that
\begin{equation}\label{comp_lorenz:cond:dissipativity}
  \gamma(\sigma-\lambda)(1-\lambda)-\tfrac{1}{4}\mid\sigma+\gamma r-\vartheta \mid^{2} \geq 0.
\end{equation}
Define $\Gamma=\tfrac{\vartheta^{2}(b-2\lambda)^{2}}{8\lambda \gamma(b-\lambda)}$,
$\beta=\frac{\vartheta^{2}}{\gamma(1+\gamma)}\Big(\frac{(b-2\lambda)^{2}}{4\lambda (b-\lambda)}+1\Big)$
and the following function $W: \mathbb{C}\times \mathbb{C}\times \mathbb{R}\rightarrow \mathbb{R}$:
\[
  W(X,Y,Z)= \tfrac{1}{2}\big[|X|^{2}+\gamma(|Y|^{2}+Z^{2})\big]-\vartheta Z.
\]
Then for any solution $(X(t),Y(t),Z(t))$ of \eqref{sys:complex-Lorenz}
and any $\delta>0$ with $T = T\big(\delta, X(0), Y(0), Z(0)\big)$,
the following inequalities hold for all $t\geq T$:
 \begin{equation}\label{inequa1Lemma}
   W(X(t), Y(t), Z(t)) \leq \Gamma+\delta,
 \end{equation}
 and
\begin{equation}\label{inequa2Lemma}
   |X(t)|^{2} < \beta + \tfrac{4\delta}{1+\gamma}.
 \end{equation}
 \end{theorem}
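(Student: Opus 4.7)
The plan is a classical Lyapunov argument: differentiate $W$ along the flow of \eqref{sys:complex-Lorenz}, use the algebraic condition \eqref{comp_lorenz:cond:dissipativity} to absorb the indefinite cross term via Young's inequality, obtain a differential inequality of the form $\dot W + 2\lambda(W-\Gamma)\le 0$, and then apply a Gronwall comparison to extract the ultimate bound \eqref{inequa1Lemma}. The estimate \eqref{inequa2Lemma} will then follow by combining the resulting $W$-bound with a sharper observation drawn directly from the $X$-equation.

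First I would compute $\frac{d}{dt}|X|^2$, $\frac{d}{dt}|Y|^2$ and $\frac{d}{dt}Z^2$ from \eqref{sys:complex-Lorenz}. Using $a+a^{*}=2$, the two quadratic contributions proportional to $Z\,\mathrm{Re}(X^{*}Y)$ (coming from the $|Y|^2$ and $Z^2$ derivatives) cancel, leaving $\dot W = -\sigma|X|^2-\gamma|Y|^2-\gamma b Z^2+\vartheta b Z+\mathrm{Re}\bigl[(\sigma+\gamma r-\vartheta)\,X Y^{*}\bigr]$. The key step is to assemble the three separate cross contributions ($\sigma\,\mathrm{Re}(X^{*}Y)$ from the $|X|^2$ derivative, $\gamma\,\mathrm{Re}(rX Y^{*})$ from the $|Y|^2$ derivative, and $-\vartheta\,\mathrm{Re}(X^{*}Y)$ from the $\vartheta Z$ term) into the single coefficient $(\sigma+\gamma r-\vartheta)$, which is precisely the complex number entering \eqref{comp_lorenz:cond:dissipativity}. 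I then bound this cross term by $|\sigma+\gamma r-\vartheta|\,|X||Y|$ and split via Young's inequality with $|X|^2$-weight $(\sigma-\lambda)$; hypothesis \eqref{comp_lorenz:cond:dissipativity} is exactly the feasibility condition forcing the resulting $|Y|^2$-weight to be $\le\gamma(1-\lambda)$. After absorption, $\dot W\le-\lambda|X|^2-\lambda\gamma|Y|^2-\gamma b Z^2+\vartheta b Z$, and adding $2\lambda W$ cancels the $|X|^2$ and $|Y|^2$ contributions exactly, leaving the $Z$-quadratic $-\gamma(b-\lambda)Z^2+\vartheta(b-2\lambda)Z$, whose supremum on $\mathbb{R}$ equals $2\lambda\Gamma$ by construction of $\Gamma$. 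Hence $\dot W+2\lambda(W-\Gamma)\le 0$, and Gronwall yields $W(t)-\Gamma\le(W(0)-\Gamma)^{+} e^{-2\lambda t}$, which is \eqref{inequa1Lemma} once $T$ is chosen so that the exponential tail falls below $\delta$.

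For \eqref{inequa2Lemma}, the direct estimate $|X|^2\le 2W+\vartheta^2/\gamma$ obtained by completing the square in $Z$ only yields the cruder bound $(1+\gamma)\beta+O(\delta)$ and misses the $(1+\gamma)^{-1}$ improvement. The sharpening comes from the $X$-equation itself: since $\frac{d}{dt}|X|^2=2\sigma\bigl(\mathrm{Re}(X^{*}Y)-|X|^2\bigr)$, Cauchy--Schwarz forces $|X(t)|\le|Y(t)|$ at every instant where $|X|^2$ is non-decreasing. Inserting the pointwise inequality $|X|^2\le|Y|^2$ into $W\le\Gamma+\delta$ and optimizing $\vartheta Z-\tfrac{\gamma}{2}Z^2$ over $Z\in\mathbb{R}$ produces $\tfrac{1+\gamma}{2}|X|^2\le\Gamma+\delta+\tfrac{\vartheta^2}{2\gamma}$, that is $|X|^2\le\beta+\tfrac{2\delta}{1+\gamma}$ at every such instant. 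A continuity-and-contradiction argument then propagates this pointwise estimate globally: the first time at which $|X(t)|^2$ were to cross the larger threshold $\beta+\tfrac{4\delta}{1+\gamma}$ from below would have to satisfy $\dot{|X|^2}\ge 0$ at the crossing, contradicting the bound just obtained.

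The hardest part will be the algebraic bookkeeping in the second step---assembling the three cross contributions into the single coefficient $(\sigma+\gamma r-\vartheta)$ and recognising \eqref{comp_lorenz:cond:dissipativity} as precisely the feasibility condition allowing Young's inequality to close the estimate; any other grouping leaves an unabsorbed $|X||Y|$ term. The propagation step for \eqref{inequa2Lemma} is also delicate, because the initial value $|X(T)|^2$ may lie above the target level, and one must argue that the flow drives it below and keeps it below the threshold rather than merely bounding it at a single instant.
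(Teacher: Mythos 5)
Your derivation is correct, and it is the standard Lyapunov--Gronwall argument that the paper itself does not reproduce (Theorem~\ref{dissipativity} is quoted from \cite{LeonovR-1987}); your bookkeeping reproduces the stated constants exactly, since the cross terms do assemble into $\mathrm{Re}\bigl[(\sigma+\gamma r-\vartheta)XY^{*}\bigr]$, the supremum $\sup_{Z}\bigl[-\gamma(b-\lambda)Z^{2}+\vartheta(b-2\lambda)Z\bigr]$ equals $2\lambda\Gamma$, and $(1+\gamma)\beta=2\Gamma+\vartheta^{2}/\gamma$. The one step you leave as a sketch --- the case $|X(T)|^{2}\geq\beta+\tfrac{4\delta}{1+\gamma}$ --- does close: there the bound $\tfrac12|X|^{2}+\tfrac{\gamma}{2}|Y|^{2}\leq\tfrac{1+\gamma}{2}\beta+\delta$ forces $|X|^{2}-|Y|^{2}\geq 2\delta/\gamma$, hence $\tfrac{d}{dt}|X|^{2}\leq 2\sigma|X|\,(|Y|-|X|)\leq -c<0$ with $c$ depending only on $\delta$ and the ultimate bound \eqref{inequa1Lemma}, so $|X(t)|^{2}$ must drop below the threshold in finite time (and $T$ is permitted to depend on the initial data), after which your crossing argument keeps it there.
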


Thus, if condition \eqref{comp_lorenz:cond:dissipativity} holds,
then the solutions of system~\eqref{sys:complex-Lorenz}
exist for $t\in[0,+\infty)$
and, thus, system~\eqref{sys:complex-Lorenz} generates a dynamical system and
possesses a global attractor~\cite{Chueshov-2002}
containing a set of all equilibria.

It is also useful to consider the following equivalent
form of system~\eqref{sys:complex-Lorenz}
in terms of real variables
$x_{1}, x_{2}, x_{3}, x_{4}, x_{5}$
and real parameters $r_{1}, r_{2}, \sigma, b, e$ (see, e.g.~\cite{RauhHA-1996}):
\begin{equation}\label{13}
  \left\{
    \begin{array}{ll}
      \dot{x}_{1} = \sigma(x_{3}-x_{1}),  \\
      \dot{x}_{2} = \sigma(x_{4}-x_{2}),  \\
      \dot{x}_{3} = r_{1} x_{1} -r_{2} x_{2}- x_{3}-e x_{4}-x_{1} x_{5},  \\
      \dot{x}_{4} = r_{2} x_{1}+r_{1} x_{2}+e x_{3} - x_{4}-x_{2} x_{5},  \\
      \dot{x}_{5} = -b x_{5} + x_{1}x_{3} + x_{2} x_{4}.  \\
    \end{array}
  \right.
\end{equation}
For
\begin{equation}\label{comp_lorenz:cond:unique_equil}
  {\rm Im}(r-a) \neq 0, \quad \text{or} \quad
  \left\{\begin{array}{l}
  {\rm Im}(r-a) = 0,\\
  {\rm Re}(r-a)\leq 0
  \end{array}\right.
\end{equation}
system~(\ref{13}) has a unique equilibrium $S_{0}=(0,0,0,0,0)$.
For
\begin{equation}\label{comp_lorenz:cond:circle_equil}
  \left\{\begin{array}{l}
  {\rm Im}(r-a) = 0,\\
  {\rm Re}(r-a) > 0
  \end{array}\right.
\end{equation}
system~(\ref{13}) has a stationary set containing
$S_{0}$ and a whole circle
of equilibria given by the expression:
$$
  x_{1}^{2}+x_{2}^{2}=b(r_{1}-1), \quad x_{1}=x_{3},\quad x_{2}=x_{4}, \quad x_{5}=r_{1}-1.
$$
These equilibria could be parameterized as follows:
\[
  S_{\theta} = (\pm\rho\cos(\theta),
                \pm\rho\sin(\theta),
                \pm\rho\cos(\theta),
                \pm\rho\sin(\theta),
                r_{1} - 1),
\]
where $\rho = \sqrt{b(r_{1}-1)}$ and $\theta\in[0,2\pi)$.
In literature, relation ${\rm Im}(r-a) = 0$
is often called the "laser case", since in this case
system~\eqref{sys:complex-Lorenz} describe the dynamics of some lasers,
e.g. a detuned laser~\cite{Ning-1990,RauhHA-1996}.

Let us outline the following statements about local stability of
the stationary set of system (\ref{13}) (see e.g.~\cite{FowlerGM-1982}).
\begin{lemma}\label{Lemma1}
The equilibrium $S_{0}$ of system (\ref{13})
is stable if and only if the following condition hold:
\[
  r_{1} < r_{1c}, \quad \text{where} \quad
  r_{1c} = 1 + \tfrac{(e +r_{2})(e - \sigma \, r_{2})}{(\sigma + 1)^{2}}.
\]
\end{lemma}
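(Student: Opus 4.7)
The plan is to reduce the spectral analysis of the $5 \times 5$ Jacobian $J$ of \eqref{13} at $S_0$ to a $2\times 2$ complex eigenvalue problem. First, I would observe that since every nonlinearity in \eqref{13} is a product of two state variables, the $x_5$-row and $x_5$-column of $J$ vanish except for the diagonal entry $-b$, so $J$ is block-diagonal with a $4\times 4$ block $J_0$ acting on $(x_1,x_2,x_3,x_4)$ and the scalar block $-b < 0$. Stability of $S_0$ therefore reduces to spectral stability of $J_0$.

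Next, I would note that each of the four $2\times 2$ sub-blocks of $J_0$ has the form $\alpha I + \beta\bigl(\begin{smallmatrix}0 & -1 \\ 1 & 0\end{smallmatrix}\bigr)$, which under the identification $x_1 + i x_2 \leftrightarrow X$, $x_3 + i x_4 \leftrightarrow Y$ realizes multiplication by the complex scalar $\alpha + \beta i$. Hence $J_0$ is the real $4\times 4$ representation of the complex $2\times 2$ matrix
\[
  M = \begin{pmatrix} -\sigma & \sigma \\ r & -a \end{pmatrix}
\]
obtained by linearizing \eqref{sys:complex-Lorenz} at the origin, and by the standard correspondence the real spectrum of $J_0$ equals $\{\mu,\bar\mu : \mu \in \mathrm{spec}(M)\}$. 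Stability of $S_0$ is thus equivalent to $\mathrm{Re}(\mu_{1,2}) < 0$ for both eigenvalues of $M$.

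Third, I would solve the complex quadratic $\mu^2 + (\sigma+a)\mu + \sigma(a-r) = 0$ by the usual formula to get $\mu_{1,2} = \tfrac12\bigl(-(\sigma+a)\pm\sqrt{(\sigma-a)^2 + 4\sigma r}\bigr)$, writing the principal square root as $u+iv$ with $u\geq 0$. Because $\mathrm{Re}(\sigma + a) = \sigma + 1 > 0$, only the $+$-branch can violate stability, so the criterion collapses to the single scalar inequality $u^2 < (\sigma+1)^2$.

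Finally, setting $P = \mathrm{Re}\bigl((\sigma-a)^2 + 4\sigma r\bigr) = (\sigma-1)^2 - e^2 + 4\sigma r_1$ and $Q = \mathrm{Im}\bigl((\sigma-a)^2 + 4\sigma r\bigr) = 2[e(\sigma-1) + 2\sigma r_2]$, the relations $u^2 - v^2 = P$, $2uv = Q$, $u^2 + v^2 = \sqrt{P^2 + Q^2}$ yield $u^2 = \tfrac12(P + \sqrt{P^2 + Q^2})$. After one routine squaring step, the inequality $u^2 < (\sigma+1)^2$ becomes $Q^2 < 4(\sigma+1)^2[(\sigma+1)^2 - P]$. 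The hard part — and the only place algebra must be watched — is recognizing the difference-of-squares factorization
\[
  e^2(\sigma+1)^2 - [e(\sigma-1) + 2\sigma r_2]^2 = 4\sigma(e+r_2)(e-\sigma r_2),
\]
after which dividing through by $4\sigma(\sigma+1)^2$ gives exactly $r_1 < 1 + (e+r_2)(e-\sigma r_2)/(\sigma+1)^2 = r_{1c}$, and reversing the chain of equivalences yields the ``if and only if''.
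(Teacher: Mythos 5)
Your proposal is correct: the block-diagonal split of the Jacobian at $S_0$ into the scalar $-b$ and the real $4\times 4$ representation of the complex matrix $M=\bigl(\begin{smallmatrix}-\sigma & \sigma\\ r & -a\end{smallmatrix}\bigr)$, the reduction to $\operatorname{Re}\mu_{+}<0$ via $u^2<(\sigma+1)^2$, and the final difference-of-squares factorization $e^2(\sigma+1)^2-[e(\sigma-1)+2\sigma r_2]^2=4\sigma(e+r_2)(e-\sigma r_2)$ all check out and reproduce $r_{1c}$ exactly. The paper itself states Lemma~\ref{Lemma1} without proof, citing \cite{FowlerGM-1982}; your derivation is the standard one used there, so there is nothing to compare beyond noting that you have supplied the argument the paper omits.
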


\begin{lemma}\label{Lemma2}
The equilibria $S_{\theta}$ of system~(\ref{13}) are stable\footnote{
  One could easily check that the eigenvalues of the Jacobi matrix at all equilibria $S_{\theta}$ are
  the same for any $\theta \in [0, 2\pi)$.
}
if and only if one of the following
conditions hold:
\[
  \sigma < b+1, \quad \text{or} \quad
  \left\{\begin{array}{l}
    \sigma > b+1, \\
    1<r_{1} < r'_{1c} = 1 + \frac{\sqrt{Q_{2}^{2} + 4 Q_{1} Q_{3}} - Q_{2}}{2 b Q_{1}},
  \end{array}\right.
\]
where
\[
\begin{aligned}
  Q_{1} = & (3\sigma+1)(\sigma-b-1), \\
  Q_{2} = & \gamma_2 (b + 2 \gamma_2) (-2 b \gamma_2 - b \gamma_3 + 2 \gamma_2 \gamma_3) \\
          & - \gamma_1 (b^2 + b \gamma_2 - b \gamma_3 + 2 \gamma_2^2 + 2 \gamma_2 \gamma_3),\\
  Q_{3} = & 2 \gamma_1 \gamma_2 \, b \, (b^2 + 2 b \gamma_2 + \gamma_1),
\end{aligned}
\]
and $\gamma_1 = (\sigma+1)^2+\big(\tfrac{2 \sigma (e+r_2)}{\sigma+1} - e\big)^2$,
$\gamma_2 = \sigma + 1$, $\gamma_3 = 2 \sigma$.
\end{lemma}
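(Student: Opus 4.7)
The plan is a direct Routh--Hurwitz analysis of the Jacobian of system~(13) at a representative equilibrium on the ring. First, observe that the ring $\{S_\theta\}$ exists only in the laser subcase $\mathrm{Im}(r-a)=0$, i.e.\ $r_2=-e$, where system~(3) is equivariant under the $U(1)$ phase action $(X,Y,Z)\mapsto(e^{i\phi}X,e^{i\phi}Y,Z)$; the ring is precisely one group orbit, which explains why $\mathrm{spec}\,DF(S_\theta)$ is independent of $\theta$ and why $\theta=0$ may be chosen without loss of generality. I would write down the $5\times 5$ Jacobian $J$ explicitly at $S_0=(\rho,0,\rho,0,r_1-1)$, $\rho=\sqrt{b(r_1-1)}$.

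Second, the $U(1)$ symmetry forces a zero eigenvalue of $J$ along the tangent direction to the ring; exhibiting the corresponding null vector lets me factor a simple $\lambda$ out of $\det(\lambda I-J)$ and reduces the stability problem to a quartic
\[
p(\lambda)=\lambda^4+a_1\lambda^3+a_2\lambda^2+a_3\lambda+a_4
\]
with coefficients polynomial in $(\sigma,b,e,r_1)$. Orbital stability of the ring is then equivalent to the classical Routh--Hurwitz conditions
\[
a_i>0\ (i=1,\dots,4),\quad a_1a_2-a_3>0,\quad H_3:=a_3(a_1a_2-a_3)-a_1^2a_4>0.
\]

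Third, I would set $\mu:=r_1-1>0$ and ask which of these inequalities fails first as $\mu$ grows. A direct sign inspection should show that $a_1,\dots,a_4$ and $a_1a_2-a_3$ stay strictly positive for every admissible $\mu>0$, so that the binding condition is $H_3(\mu)>0$. Re-collecting $H_3$ in powers of $\mu$ I expect an expression proportional to $Q_3-bQ_2\,\mu-b^2Q_1\,\mu^2$, where $Q_1,Q_2,Q_3$ assemble from the abbreviations $\gamma_1,\gamma_2,\gamma_3$ exactly as in the statement (these arise naturally as recurring combinations in the coefficients of $p(\lambda)$). At $\mu=0$ the value is $Q_3>0$, since $Q_3=2\gamma_1\gamma_2 b(b^2+2b\gamma_2+\gamma_1)$ is a product of positive factors. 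When $\sigma<b+1$ we have $Q_1<0$, so $-b^2Q_1>0$, the parabola opens upward, and a residual discriminant check should rule out positive roots, giving $H_3(\mu)>0$ for every $\mu>0$. When $\sigma>b+1$ we have $Q_1>0$, the parabola opens downward, and since $H_3(0)>0$ there is exactly one positive root; solving the resulting quadratic by the usual formula yields that root in the stated closed form $r'_{1c}=1+\bigl(\sqrt{Q_2^2+4Q_1Q_3}-Q_2\bigr)/(2bQ_1)$.

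The main obstacle is algebraic rather than conceptual: the $5\times 5$ symbolic determinant, its symmetry-induced factorisation, and the matching of $H_3(\mu)$ against the specific combinations $Q_i(\gamma_1,\gamma_2,\gamma_3)$ in the statement are lengthy and error-prone. A subtler point is to verify rigorously that none of the other Routh--Hurwitz inequalities becomes binding on $\{(\sigma,b,e):\sigma>b+1\}\times(0,r'_{1c}-1)$, which is what ultimately justifies the clean dichotomy stated in the lemma.
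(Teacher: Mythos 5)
The paper does not actually prove this lemma: it is stated as a known result with a pointer to Fowler--Gibbon--McGuinness (reference \cite{FowlerGM-1982}), so there is no in-text argument to compare yours against. That said, your Routh--Hurwitz outline is the standard (and essentially the only) route, and it is structurally consistent with the statement: the ring exists exactly in the laser case $r_2=-e$, $r_1>1$; the $U(1)$ equivariance does force a zero eigenvalue tangent to the ring and makes the spectrum $\theta$-independent; the characteristic polynomial factors as $\lambda\,p(\lambda)$ with $p$ quartic; and the claimed threshold $r'_{1c}$ is precisely the positive root of $b^2Q_1\mu^2+bQ_2\mu-Q_3=0$ with $\mu=r_1-1$, which is what one gets if the last Hurwitz minor $H_3=a_3(a_1a_2-a_3)-a_1^2a_4$ is proportional to $Q_3-bQ_2\mu-b^2Q_1\mu^2$. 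The sign of $Q_1=(3\sigma+1)(\sigma-b-1)$ then produces the $\sigma\lessgtr b+1$ dichotomy, in exact analogy with $r_{\rm cr}$ for the real Lorenz system.

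The genuine gaps are the two places you yourself flag with ``should show'': (i) that $a_1,\dots,a_4$ and $a_1a_2-a_3$ remain strictly positive for all admissible $\mu>0$, so that $H_3$ is the unique binding condition; and (ii) in the case $\sigma<b+1$, that the upward-opening parabola $H_3(\mu)$ has no positive roots. Point (ii) is not automatic: with $Q_1<0$ and $Q_3>0$ the product of the roots of $H_3$ is positive, so if $Q_2>0$ and $Q_2^2+4Q_1Q_3\geq 0$ both roots would be positive and the ring would destabilize on an intermediate $r_1$-interval, contradicting the lemma. You must therefore verify that $\sigma<b+1$ forces either $Q_2<0$ or a negative discriminant. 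Since these verifications, together with the symbolic computation of the quartic and the matching of $H_3$ against the specific combinations $Q_i(\gamma_1,\gamma_2,\gamma_3)$, constitute essentially the entire content of the lemma, what you have is a correct and well-targeted plan rather than a proof; the remaining work is the (admittedly mechanical but decisive) algebra carried out in \cite{FowlerGM-1982}.
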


\section{Inner estimation: the global stability and trivial attractors}\label{Inner:estimation}
Constructing complex-valued Lyapunov functions,
by analogy with Theorem~\ref{thm:gs},
it is possible to derive the following
criterion for the absence of self-excited and hidden oscillations
in the complex Lorenz system \eqref{sys:complex-Lorenz}.
\begin{theorem}\label{theorem1}
If for parameters of the system~\eqref{sys:complex-Lorenz}
one of the following conditions holds:
\renewcommand{\labelenumi}{\bf\arabic{enumi})}
\begin{enumerate}[leftmargin=1pt,labelwidth=-10pt,labelsep=1pt]
  \item $|r + 1| < 2$ ,
  \item $4 r_1 < - r_2^2$ ,
  \item $2\sigma\!-\!b \!\neq\! 0$ and conditions~\eqref{comp_lorenz:cond:circle_equil} hold.
  If $2\sigma\!-\!b\!>\!0$, then for $\lambda_{0}=\min(1,b,\sigma)$,
  there exist $\lambda\in(0,\lambda_{0})$, $\gamma > 0$, $\vartheta$, such that
  condition on dissipativity \eqref{comp_lorenz:cond:dissipativity} and the following
  inequality hold:
  \begin{equation}\label{cond-glonal}
       \tfrac{\vartheta^{2}}{\gamma(1+\gamma)}
       \big(\tfrac{(b-2\lambda)^{2}}{4\lambda (b-\lambda)}+1\big)<\tfrac{b^{2}(\sigma+1)}{2\sigma-b}.
 \end{equation}
\end{enumerate}
Then in the phase space of system~\eqref{sys:complex-Lorenz},
there are no nontrivial self-excited and hidden oscillations, and any of its solution
$(X(t),Y(t),Z(t))$ tends to the stationary set as $t\rightarrow\infty$.
\end{theorem}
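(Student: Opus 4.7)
The plan is to treat the three alternatives separately by constructing a tailored quadratic Lyapunov functional and applying either the Barbashin-Krasovskii theorem or LaSalle's invariance principle, paralleling the proof of Theorem~\ref{thm:gs} for the real Lorenz system but exploiting the complex coordinates.

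For cases 1 and 2 the key observation is that the one-parameter ansatz $V_\alpha = \alpha|X|^2+|Y|^2+Z^2$ with $\alpha>0$ has derivative along~\eqref{sys:complex-Lorenz} in which the cross terms $Z\,\mathrm{Re}(X^*Y)$ produced by $\dot Y$ and $\dot Z$ cancel (thanks to the equal weights on $|Y|^2$ and $Z^2$), giving
\[
  \dot V_\alpha = -2\alpha\sigma|X|^2-2|Y|^2-2bZ^2+2(\alpha\sigma+r_1)\,\mathrm{Re}(X^*Y)+2r_2\,\mathrm{Im}(X^*Y).
\]
Cauchy-Schwarz together with the identity $|X^*Y|=|X|\cdot|Y|$ bounds the indefinite part by $2\sqrt{(\alpha\sigma+r_1)^2+r_2^2}\,|X|\cdot|Y|$, and negative-definiteness of the resulting quadratic form in $(|X|,|Y|,|Z|)$ reduces to $(\alpha\sigma+r_1)^2+r_2^2<4\alpha\sigma$. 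Taking $\alpha=1/\sigma$ yields exactly $|r+1|<2$, which is case 1; taking $\alpha=-r_1/\sigma$ (legitimate, since condition 2 forces $r_1<0$) kills the $\mathrm{Re}(X^*Y)$ coefficient entirely and reduces the condition to $-4r_1>r_2^2$, which is exactly case 2. In both sub-cases Barbashin-Krasovskii delivers global attraction to $S_0$, whose uniqueness under either condition follows from~\eqref{comp_lorenz:cond:unique_equil} since both conditions force $r_1<1$.

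Case 3, the laser case, is the principal obstacle: the stationary set is the continuum $\{S_\theta\}_{\theta\in[0,2\pi)}\cup\{S_0\}$ rather than an isolated point, so Barbashin-Krasovskii must be replaced by LaSalle's invariance principle, and the dissipativity estimates of Theorem~\ref{dissipativity} must be interwoven with the Lyapunov computation. Condition~\eqref{comp_lorenz:cond:dissipativity} produces an absorbing set on which~\eqref{inequa2Lemma} yields $|X|^2\leq\beta+\varepsilon$; inequality~\eqref{cond-glonal} is precisely the compatibility statement that, in the regime $2\sigma-b>0$, the bound $\beta$ is dominated by $b^2(\sigma+1)/(2\sigma-b)$, which is what is required to keep $\dot V_\alpha$ (for a suitably chosen $\alpha$) nonpositive throughout the absorbing set. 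The delicate step, which I expect to be the main technical obstacle, is the LaSalle identification: one must verify that the maximal invariant subset of $\{\dot V_\alpha=0\}$ inside the absorbing set coincides exactly with the stationary set, whereupon LaSalle's theorem delivers convergence.
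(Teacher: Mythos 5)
Your treatment of cases \textbf{1)} and \textbf{2)} is correct and is essentially the paper's own argument, presented more compactly: the paper uses $V=\tfrac12[|X|^2+\sigma(|Y|^2+Z^2)]$ for case 1 and $V=\tfrac12[D^2|X|^2+|Y|^2+Z^2]$ with $D^2=-r_1/\sigma$ for case 2, which are exactly your $V_\alpha$ with $\alpha=1/\sigma$ and $\alpha=-r_1/\sigma$; your Cauchy--Schwarz reduction to $(\alpha\sigma+r_1)^2+r_2^2<4\alpha\sigma$ replaces the paper's explicit rescaling and orthogonal diagonalization and yields the same inequalities. The cancellation of the $Z\,\mathrm{Re}(X^*Y)$ cross terms and the uniqueness of $S_0$ under $r_1<1$ are also as in the paper.

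Case \textbf{3)} is where your proposal breaks down, and the failure is structural rather than a missing detail. Under conditions~\eqref{comp_lorenz:cond:circle_equil} one has $r_1-1={\rm Re}(r-a)>0$, i.e.\ $r_1>1$. For the quadratic ansatz the derivative is
\begin{equation*}
\dot V_\alpha=-2\alpha\sigma|X|^2-2|Y|^2-2bZ^2+2(\alpha\sigma+r_1)\,\mathrm{Re}(X^*Y)+2r_2\,\mathrm{Im}(X^*Y),
\end{equation*}
and the $(|X|,|Y|)$-part is negative semidefinite only if $(\alpha\sigma+r_1)^2+r_2^2\le 4\alpha\sigma$; setting $s=\alpha\sigma$, the inequality $(s+r_1)^2\le 4s$ has discriminant $16(1-r_1)<0$ for $r_1>1$, so \emph{no} choice of $\alpha>0$ works. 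Restricting to the absorbing set does not rescue this: taking $Y=tX$, $Z=0$ with $t=(\alpha\sigma+r_1)/2$ gives $\dot V_\alpha=|X|^2\big(\tfrac12(\alpha\sigma+r_1)^2-2\alpha\sigma\big)>0$ for arbitrarily small $|X|$, and such points lie inside every absorbing set. So the sign defect is local near the origin, not a large-$|X|$ effect that the bound $|X|^2\le\beta+\varepsilon$ could suppress, and your reading of \eqref{cond-glonal} as "$\beta$ dominated by $b^2(\sigma+1)/(2\sigma-b)$ keeps $\dot V_\alpha\le 0$ on the absorbing set" is not tenable.

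What the paper actually does in case 3 is pass to new variables $\chi=\tfrac{\varepsilon}{\sqrt{2\sigma}}X$, $\eta=\tfrac{\varepsilon^2}{\sqrt2}(Y-X)$, $\xi=\varepsilon^2(Z-|X|^2/b)$ and use the non-quadratic Leonov--Reitmann function $V=\tfrac12\big[\tfrac{1}{|\Lambda|}\xi^2+|\eta|^2-|\chi|^2+\tfrac{\Theta}{2}|\chi|^4\big]$, whose indefinite quadratic part plus quartic term is adapted to the circle of equilibria (it is bounded below by $-1/(4\Theta)$, which is what LaSalle needs in place of positive definiteness). Its derivative is $-\tfrac{\kappa}{|\Lambda|}\xi^2-\mathrm{Re}\,\varrho\,|\eta|^2$ plus, only when $\Lambda=2\sigma-b>0$, a cross term $-2\xi\,\mathrm{Re}(\chi^*\eta)$; the latter is dominated precisely when $|\chi|^2\le\tfrac{\kappa}{\Lambda}\mathrm{Re}\,\varrho$, and it is \emph{this} bound that the dissipativity estimate \eqref{inequa2Lemma} together with \eqref{cond-glonal} delivers. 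So your instinct that the dissipativity bound on $|X|^2$ must be interwoven with the Lyapunov computation is correct, but it must be attached to the quartic function in the shifted variables (where the dangerous term is a $\xi$--$\eta$ cross term with coefficient $|\chi|$), not to a quadratic $V_\alpha$; without the quartic term and the shift $Z\mapsto Z-|X|^2/b$ there is no sign-definite derivative to which LaSalle could be applied.
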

\begin{proof} We consider the following cases:

\noindent {\bf Case 1}. For $|r+1|<2$, one can check that
conditions~\eqref{comp_lorenz:cond:unique_equil} are satisfied and
in this case $S_{0}$ is the only equilibria of system~\eqref{sys:complex-Lorenz}.
So, the absence of self-excited oscillations follows from
the Routh-Hurwitz criterion on local stability for the equilibrium $S_{0}$.
The absence of hidden oscillations can be obtained
by Barbashin-Krasovskii theorem (see e.g. \cite{BarbashinK-1952,KuznetsovLYYKKRA-2020-ECC})
and the Lyapunov function,
if $|r+1|<2$, from Theorem~\ref{dissipativity}
for $\vartheta := 0$ and $\gamma := \sigma$
we get the following Lyapunov function:
$$
  V(X,Y,Z) = \tfrac{1}{2}[|X|^{2}+\sigma(|Y|^{2}+Z^{2})]
$$
with  $\lambda_{0}=\min(1,b,\sigma)$, $\lambda\in(0,\lambda_{0} )$, $\delta>0$
and the inequalities $\sigma(\sigma-\lambda)(1-\lambda)-\frac{1}{4}\sigma^{2} \mid r+1 \mid^{2}\geq0$
and $|X(t)|^{2}<\frac{4\delta}{1+\sigma}$ hold.
The derivative of $V$ with respect to system~\eqref{sys:complex-Lorenz} is as follows~\cite{LeonovR-1987}:
\begin{equation*}
  \dot{V}(X,Y,Z)\leq-\lambda\big[|X|^{2}+\sigma|Y|^{2}\big]<0,\quad \forall X, Y,Z \neq0.
\end{equation*}
Also, $V(X,Y,Z)\geq0$, $V(0,0,0)=0$ and $V(X,Y,Z)\rightarrow\infty$ as $|X,Y,Z|\rightarrow \infty$;
thus, all conditions of Barbashin-Krasovskii theorem are satisfied
implying global stability of the unique equilibrium $S_{0}$.

\noindent {\bf Case 2}. For $4 r_1 < - r_2^2$,
conditions~\eqref{comp_lorenz:cond:unique_equil} are also fulfilled and
for this case we introduce the following Lyapunov function (see,~\cite{RauhHA-1996}):
\begin{equation}\label{Laypunovfanc}
  V(X,Y,Z) = \tfrac{1}{2}\big[ D^2 |X|^{2}+|Y|^{2}+Z^{2} \big],
\end{equation}
where $D = \sqrt{\frac{-r_{1}}{\sigma}}$.
It is clear that Lyapunov function~\eqref{Laypunovfanc} is positive definite.
Rewrite Lyapunov function \eqref{Laypunovfanc}
in terms of real variables:
\begin{equation}\label{Laypunovfanc1}
  V(x_{1},x_{2},x_{3},x_{4},x_{5}) =
  \tfrac{1}{2}\big[D^{2}(x_{1}^{2}\!+\!x_{2}^{2})\!+\!x_{3}^{2}\!+\!x_{4}^{2}\!+\!x_{5}^{2}\big].
\end{equation}

The derivative of $V$ with respect to system \eqref{13} reads:
\begin{multline}\label{deriv-Laypunovfanc1}
   \dot{V} = -\sigma D^{2}(x_{1}^{2} + x_{2}^{2}) - (x_{3}^{2}+x_{4}^{2})
   -  r_{2}(x_{2}x_{3}-x_{1}x_{4}) - b x_{5}^2.
\end{multline}
By using the following scaled variables
$(u_{1},u_{2},u_{3},u_{4},u_{5})$ as:
$x_{1}=\frac{u_{1}}{D\sqrt{\sigma}}$, $x_{2}=\frac{u_{2}}{D\sqrt{\sigma}}$,
$x_{3}=u_{3}$, $x_{4}=u_{4}$, $x_{5}=\frac{u_{5}}{\sqrt{b}}$, we have
\begin{equation}\label{deriv1-Laypunovfanc1}
  \dot{V} = -(u_{1}^{2} + u_{2}^{2} + u_{3}^{2} + u_{4}^{2} + u_{5}^{2})
  - 2 \Delta (u_{2}u_{3} - u_{1}u_{4}),
\end{equation}
where $\Delta=\frac{r_{2}}{2 D\sqrt{\sigma}}$.

The transformation $u \rightarrow v$, $u = A v$ with
\begin{equation*}
  A=\frac{1}{\sqrt{2}}\left(
    \begin{array}{ccccc}
      0 & -1 & 0 & 1 & 0 \\
      0 & 0 & 1 & 0 & -1 \\
      0 & 0 & 1 & 0 & 1 \\
      0 & 1 & 0 & 1 & 0 \\
      \sqrt{2}& 0 & 0 & 0 & 0 \\
    \end{array}
  \right),
\end{equation*}
leads to the following expression:
\begin{equation*}
 \dot{V} = - v_{1}^{2} - (1+\Delta)(v_{2}^{2} + v_{3}^{2}) - (1-\Delta)(v_{4}^{2}+v_{5}^{2}).
\end{equation*}
One can see that initial assumption $4 r_1 < - r_2^2$ implies $\Delta^{2} < 1$ and, therefore, $\dot{V}<0$,
and all conditions of Barbashin-Krasovskii
theorem are satisfied.
Thus, the equilibrium $S_{0}$ is globally stable.

As we discussed earlier,
if conditions~\eqref{comp_lorenz:cond:unique_equil} are not satisfied,
system~\eqref{sys:complex-Lorenz} has a stationary set containing a continuum of equilibria,
i.e the zero equilibrium $S_{0}$ and the equilibria $S_{\theta}$.
In this case, the Barbashin-Krasovskii theorem is not applicable.

\noindent {\bf Case 3}.
Suppose conditions \eqref{comp_lorenz:cond:circle_equil} are satisfied,
then the absence of nontrivial oscillations
(and, thus, the global stability of the stationary set $\{S_{0}, S_{\theta}\}$
can be demonstrated (see, \cite{LeonovR-1987}) by the LaSalle principal \cite{Lasalle-1960}.
For this purpose, consider the following time and coordinate transformations:
\begin{equation}\label{transformation-global}
  \left\{
    \begin{array}{ll}
       t \rightarrow \tau, \quad \psi:  (X, Y, Z) \rightarrow (\chi, \eta, \xi),  \\
     \tau=\frac{\sqrt{\sigma}}{\varepsilon}t,  \quad \chi=\frac{\varepsilon}{\sqrt{2\sigma}}X, \quad \eta=\frac{\varepsilon^{2}}{\sqrt{2}}(Y-X),\\
     \xi=\varepsilon^{2}(Z-\frac{|X|^{2}}{b}), \quad \varepsilon=\frac{1}{\sqrt{r-a}} > 0.
    \end{array}
  \right.
\end{equation}
System \eqref{sys:complex-Lorenz} transform into the following system:
\begin{equation}\label{Transf:sys:complex-Lorenz}
  \left\{
    \begin{array}{ll}
      \vspace{0.3cm}\dot{\chi}=\frac{d\chi}{d\tau}=\eta,  \\
      \vspace{0.3cm}\dot{\eta}=\frac{d\eta}{d\tau}=-\varrho \eta -\xi \chi-\varphi(\chi),  \\
     \vspace{0.3cm} \dot{\xi}=\frac{d\xi}{d\tau}=-\kappa \xi-\frac{\Lambda}{2}(\chi^{\star} \eta+ \chi\eta^{\star}),
    \end{array}
  \right.
\end{equation}
where $\varrho=\frac{\varepsilon(a+\sigma)}{\sqrt{\sigma}}$, $\varphi(\chi)=-\chi+\Theta\chi|\chi|^{2}$, $\Theta=\frac{2\sigma}{b}$, $\kappa=\frac{\varepsilon b}{\sqrt{\sigma}}$, $\Lambda=\frac{2}{b}(2\sigma-b)$. Here, the variables $\chi$ and $\eta$ are complex while $\xi$ is real. The new parameters $\Theta$, $\kappa$, and $\Lambda$ are real with $\Theta>0$, $\kappa>0$, and $\varrho$ is complex. Consider the following Lyapunov function:
\begin{equation}\label{Laypunovfanc2}
 V(\chi,\eta,\xi)=\tfrac{1}{2}\Big[\tfrac{1}{|\Lambda|}\xi^{2}+|\eta|^{2}-|\chi|^{2}+\tfrac{\Theta}{2}|\chi|^{4}\Big].
\end{equation}

Note that, the LaSalle principle requires the compactness of the set,
where the Lyapunov function $V$ is defined to show its boundedness from below.
In our case, one can show that the inequality $V(\chi,\eta,\xi)>-\frac{1}{4\Theta}$
is valid for any $(\chi,\eta,\xi)\in \mathbb{C}\times \mathbb{C}\times\mathbb{R}$.

From the relation $\dot{V}(\chi,\eta,\xi)=0,$ it follows that the largest
invariant set
$$
  M \subset \{(\chi,\eta,\xi)\in \mathbb{C}\times \mathbb{C}\times\mathbb{R} ~|~ \dot{V}(\chi,\eta,\xi)=0\}
$$
consists of the equilibrium points of system \eqref{Transf:sys:complex-Lorenz}.

The derivative of function $V$ with respect to system~\eqref{Transf:sys:complex-Lorenz} is as follows:
\begin{multline}\label{deriv-Lyapunovfanc2}
\dot{V}(\chi,\eta,\xi)= -\tfrac{\kappa}{|\Lambda|}\xi^{2}-\tfrac{1}{2}(sgn(\Lambda)+1)\chi\xi\eta^* \\
 -\tfrac{1}{2}(sgn(\Lambda)+1)\chi^*\xi\eta  - Re \varrho |\eta|^{2}.
\end{multline}

The last thing to check is that $\dot{V}(\chi,\eta,\xi) \leq 0$,
$\forall \chi,\!\eta \in \mathbb{C}$, $\xi \in \mathbb{R}$.
Consider two cases.

\noindent {\bf 3.1}. If $\Lambda = 2\sigma -b > 0$, then
following~\cite{LeonovR-1987} and using Theorem~\ref{dissipativity},
there exist $\epsilon>0$ such that for any solution
$(\chi(\tau),\eta(\tau),\xi(\tau))$ of \eqref{Transf:sys:complex-Lorenz}
with $t_{0}=t_0(\chi(0),\eta(0),\xi(0))$ the following inequality:
\begin{equation}\label{inequality-Transf}
  |\chi(t)|^{2}\leq \tfrac{\kappa}{\Lambda} Re \varrho-\epsilon,
\end{equation}
holds for all $t \geq t_{0}$.

From relations~\eqref{transformation-global}
we have $X(t)=\frac{\sqrt{2\sigma}}{\varepsilon}\chi(t)$,
and taking into account~\eqref{inequa2Lemma} we get:
\begin{equation}\label{etacondi}
 |\chi(t)|^{2}< \tfrac{\varepsilon^{2}\beta}{2\sigma}
 + \tfrac{2\delta\varepsilon^{2}}{\sigma(1+\gamma)} \qquad
 \forall \; t\geq t_{0}.
\end{equation}
It is easy to check that condition~\eqref{cond-glonal} with \eqref{etacondi} and
using the following constant in Theorem~\ref{dissipativity}:
\begin{equation*}
\vartheta = \tfrac{\pm 2 \sqrt{2\gamma\lambda(2 \sigma\!-\!b)(b-\lambda)[b^{2}\sigma(\gamma\!+\!1)(\sigma\!+\!1)\!+\!
\varepsilon(b\!-\!2\sigma)(\sigma(\gamma\!+\!1)\!+\!2\varepsilon\delta)]}}{\varepsilon b (2 \sigma-b)},
\end{equation*}
implies~\eqref{inequality-Transf}.

If $\Lambda > 0$, then expression~\eqref{deriv-Lyapunovfanc2} reads as follows:
\begin{equation}\label{deriv1-Lyapunovfanc2}
  \dot{V}(\chi,\eta,\xi)= -\tfrac{\kappa}{|\Lambda|}\xi^{2} - \chi\xi\eta^* - \chi^*\xi\eta
  - Re \varrho |\eta|^{2}.
\end{equation}
Since $\kappa>0$ and $Re \varrho=\frac{\varepsilon(1+\sigma)}{\sqrt{\sigma}}>0$
(see relation~\eqref{inequality-Transf}),
Eq.~\eqref{deriv1-Lyapunovfanc2} can be written as follows:
\begin{equation}\label{deriv2-Lyapunovfanc2}
  \dot{V}(\chi,\eta,\xi)\leq -\delta_{1}(\xi^{2}+|\eta|^{2})\leq0, \qquad \delta_{1} > 0.
\end{equation}

\noindent {\bf 3.2}.
If $\Lambda < 0$, then the mixed products in \eqref{deriv-Lyapunovfanc2} do not appear.
Hence, expression~\eqref{deriv-Lyapunovfanc2} can be written immediately in the form \eqref{deriv2-Lyapunovfanc2}.

Then, according to LaSalle principle any solution of system~\eqref{Transf:sys:complex-Lorenz}
(and thus system~(\ref{sys:complex-Lorenz}))
tends to an equilibrium state as $\tau\rightarrow\infty$.
\end{proof}

\begin{remark}
For some values of parameters of the complex Lorenz system~\eqref{sys:complex-Lorenz},
one can find similar conditions on the global stability as for the Lorenz system~\eqref{sys:lorenz}.
For instance, if $|r+1|<2$, $S_{0}$ of the system~\eqref{sys:complex-Lorenz} is globally stable,
and choosing $r_{2} = 0$ we obtain $r_{1}<1$ that corresponds to the condition
of global stability of zero equilibrium point of system~\eqref{sys:lorenz} (see Theorem~\ref{thm:gs}).
\end{remark}

\begin{remark}
In the special case, when $r_{2} = e = 0$, the remaining parameters
in the complex Lorenz system~\eqref{sys:complex-Lorenz} are the same
as the parameters of the real one \eqref{sys:lorenz}.
However, we can not strictly deduce the condition of global stability of
the stationary set $\{S_0,S_{\pm}\}$ of the Lorenz system~\eqref{sys:lorenz}
from the conditions of Theorem~\ref{theorem1}.
This due to the different behavior of these two systems in this case.
The complex Lorenz system~\eqref{sys:complex-Lorenz} still
has a continuum of equilibria $S_{\theta}$ in its phase space,
while the real one has only two symmetric equilibria $S_{\pm}$.
Remark that, for the complex Lorenz system~\eqref{sys:complex-Lorenz}
when $2\sigma \neq b$, we can only apply the LaSalle principle,
and yet it is not possible to apply the Leonov approach~\cite{Leonov-1991-Vest,KuznetsovAL-2016,Kuznetsov-2016-PLA}.
While for the real Lorenz system~\eqref{sys:lorenz},
if $2\sigma < b$, we can use the LaSalle principle,
and if $2\sigma>b$, we can apply the Leonov approach~\cite{KuznetsovMKKL-2020}.
It is also possible to prove global stability,
when in system~\eqref{sys:lorenz} we have
$2\sigma = b$~\cite{KuznetsovMKKL-2020,Leonov-2018-UMZh}.

Note also that, if in the third condition of Theorem~\ref{theorem1}
we take the parameters $\lambda, \gamma, \vartheta$ as follows:
$\gamma=\sigma$, $\vartheta=\sigma+\sigma r_{1}-2\sigma \sqrt{r_{1}}$, $\lambda\in (0, b)$
with $b\in(0, 1]$, then we get the following relations between
the conditions of global stability for systems
\eqref{sys:lorenz} and \eqref{sys:complex-Lorenz}:
\begin{equation*}
\begin{aligned}
  r < r_{gs} &= \tfrac{(\sigma+b)(1+b)}{\sigma} & \text{for system \eqref{sys:lorenz}}, \\
  r_1 < r_{1gs} &= \tfrac{(\sigma-\lambda)(1-\lambda)}{\sigma} & \text{for system \eqref{sys:complex-Lorenz}}.
\end{aligned}
\end{equation*}

\end{remark}

The case, corresponding to condition $3)$ of Theorem~\ref{theorem1},
when all trajectories tend to the stationary set, however,
not all equilibria of the stationary set are locally stable,
is illustrated in Fig.~\ref{fig:global_stability}.
\begin{figure}[!ht]
  \centering
  \includegraphics[width=\columnwidth]{./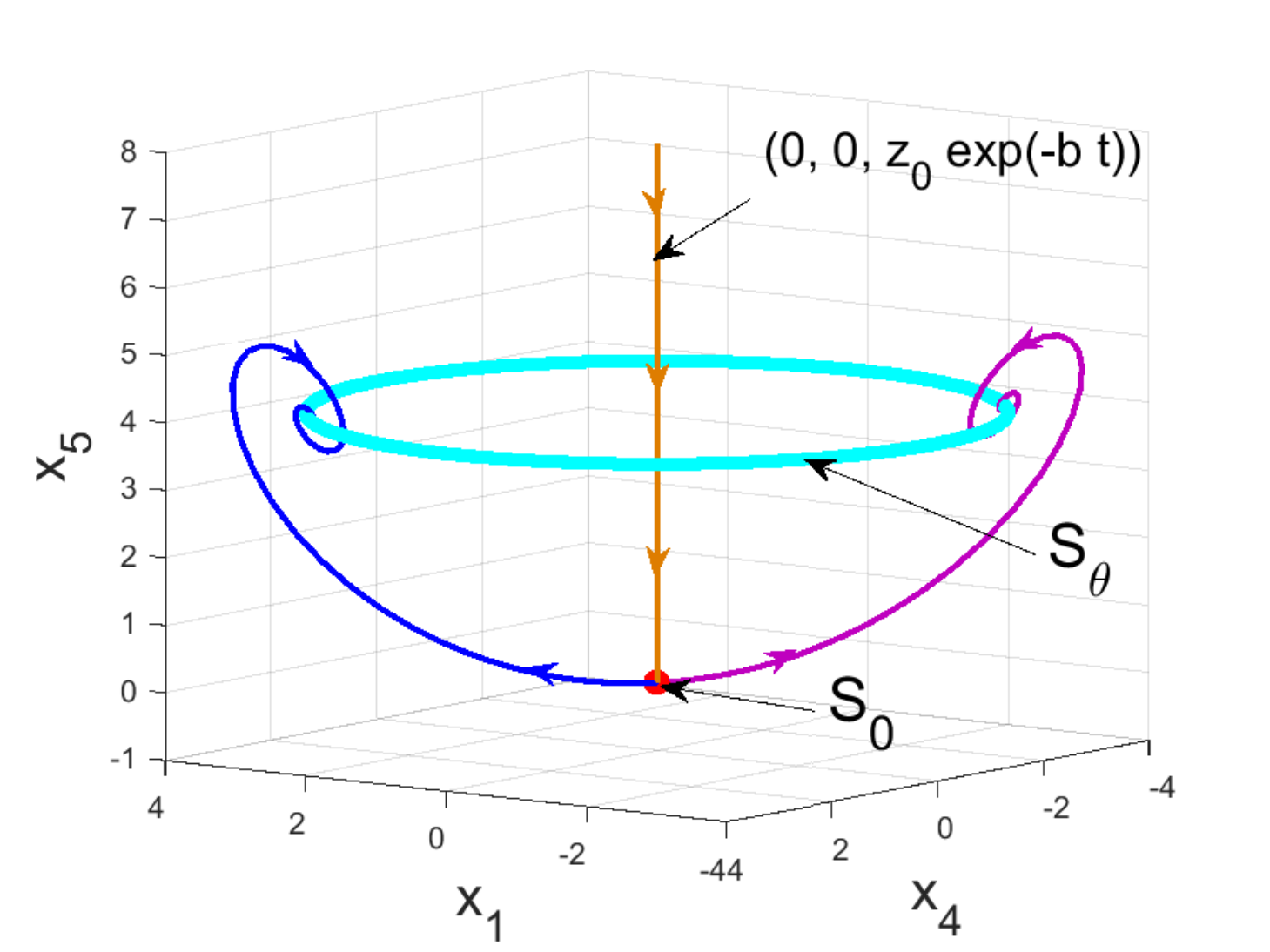}
  \caption{The absence of self-excited and hidden attractors and the global stability
  of the stationary set $\{S_{0}, S_{\theta}\}$ in the system~(\ref{13})
  with parameters \\ $\sigma=4$, $r_{1}=5$, $r_{2} = -e = 0.001$, $b=4$.
  Trajectories (blue, purple) in a small vicinity
  of the unstable equilibrium $S_{0}$ tend to
  to the stable set of equilibria~$S_{\theta}$ (trivial attractors).}
  \label{fig:global_stability}
\end{figure}

\begin{remark}
One of the significant differences between the complex Lorenz~\eqref{sys:complex-Lorenz}
and the real one is the following: in the case when $S_{0}$ is the only equilibria
of system~\eqref{sys:complex-Lorenz}, there could exist a nontrivial attractor,
that is a stable limit cycle or even a torus (see Fig.~\ref{fig:global_stability-torus}).
\end{remark}
\begin{figure}[!ht]
  \centering
  \includegraphics[width=\columnwidth]{./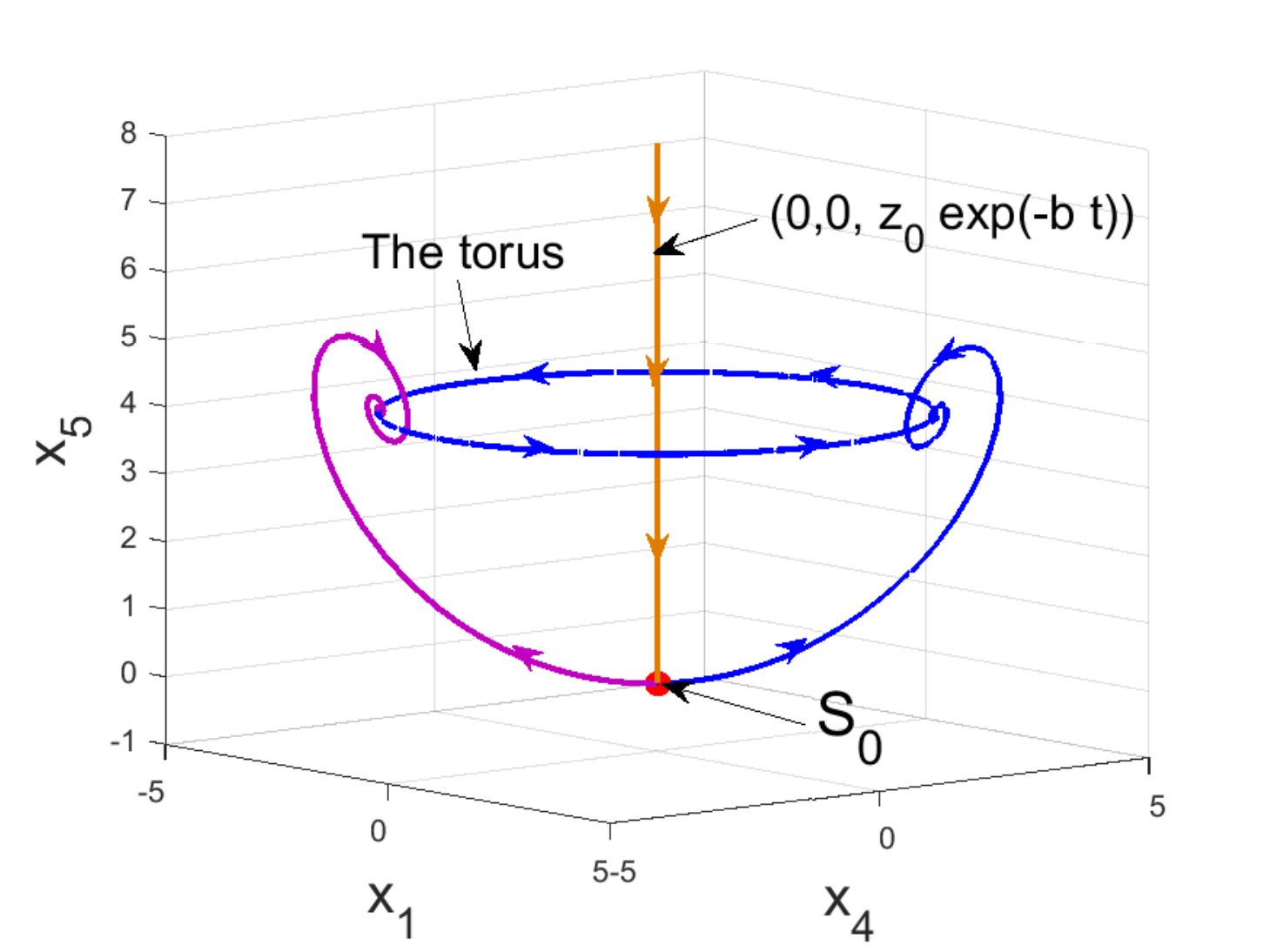}
  \caption{Co-existence of a nontrivial attractive torus
  and the unstable equilibria $S_{0}$ in the system~(\ref{13})
  with parameters $\sigma=4, r_{1}=5, r_{2}=0.002, e=-0.001, b=4$.
  Trajectories (blue, purple) in small neighborhoods of the unstable
  equilibrium $S_{0}$ attractors to the torus (nontrivial attractors).}
  \label{fig:global_stability-torus}
\end{figure}

Beyond the conditions of Theorem~\ref{theorem1},
the analysis of global stability and the birth of nontrivial attractors
in system~\eqref{sys:complex-Lorenz} can be carried out numerically.
For the convenience of this analysis, additional transformations
taking into account the structure
of a complex system may be useful~\cite{SiminosC-2011,FroehlichC-2012}.
For instance, in~\cite{VladimirovTD-1997,VladimirovTD-1998-IJBC},
Vladimirov et. al introduced the following
transformation for system~\eqref{sys:complex-Lorenz}.
If the conditions:
\begin{equation}\label{comp_lorenz:condition:transform:vladimirov}
  2\sigma>b, \quad
  \sigma(r_{1}-1)-\frac{e^{2}}{4}\equiv\eta > 0
\end{equation}
are satisfied, then using the time and
coordinate transformations:
\begin{equation}\label{transformation}
  \left\{
    \begin{array}{ll}
      t \rightarrow t', \quad \psi:  (X, Y, Z) \rightarrow (X', Y', Z'),  \\
     t'\!=\!\sqrt{\eta} t,  \; X'\!=\!\eta^{\frac{-3}{4}}\delta X, \;
     Y'\!=\!\eta^{\frac{-5}{4}}\sigma\delta[Y\!-\!(1\!+\!i\delta/2\sigma)X], \\
     Z'\!=\!\eta^{-1}\sigma(Z \!-\! XX^*/2), \quad
     \delta=\exp(iet/2)[(2\sigma\!-\!b)/2]^{\frac{1}{2}}.
    \end{array}
  \right.
\end{equation}
one can re-write system~\eqref{sys:complex-Lorenz} in the form:
\begin{equation}\label{sys:projective:homoc}
  \left\{
    \begin{array}{ll}
      \frac{dX'}{dt'}=Y', \\
      \frac{dY'}{dt'}=(1+i\kappa)X'-\mu Y'-X'Z'-\varrho X'|X'|^{2}, \\
      \frac{dZ'}{dt'}=-\beta Z'+|X'|^{2}
    \end{array}
  \right.
\end{equation}
where $\kappa=\frac{(2\sigma r_{2}+e(\sigma-1))}{2\eta}$,
$\mu=\frac{1+\sigma}{\sqrt{\eta}}$, $\varrho=\frac{\sqrt{\eta}}{2\sigma-b}$,
$\beta=\frac{b}{\sqrt{\eta}}$.
Since the transformation~\eqref{transformation} is continuous
and invertible then its a diffeomorphism, and the dynamical behavior
of systems~\eqref{sys:complex-Lorenz} and~\eqref{sys:projective:homoc}
are topologically equivalent~\cite{VladimirovTD-1997,VladimirovTD-1998-IJBC}.
Next, introduce the following real variables $\xi', \upsilon', w', Z'$:
\begin{equation}\label{projec1-coordinates}
  \xi' = (|X'|^{2}-|Y'|^{2})/2, \quad
  \upsilon' + iw' = X'^* Y'.
\end{equation}
Expression~\eqref{projec1-coordinates}
defines the projection map:
$\Pi : \mathcal{H} \rightarrow \mathcal{P}$.
This map projects all states $(X',Y',Z')$ in the phase space
$\mathcal{H}=\mathbb{C}\times \mathbb{C}\times\mathbb{R}$
of system~\eqref{sys:projective:homoc} (and system~\eqref{sys:complex-Lorenz}),
which are invariant with respect to
a common phase factor $\psi$ in $X'$ and $Y'$, i.e.
\begin{equation}\label{invariance:transf}
  \left(
    \begin{array}{c}
      X' \\
      Y' \\
      Z'
    \end{array}
  \right) \to
  \left(
    \begin{array}{c}
      X' \exp(i \psi) \\
      Y' \exp(i \psi) \\
      Z'
    \end{array}
  \right),
\end{equation}
into the same point ($\xi'$, $\upsilon'$, $w'$, $Z'$)
in $\mathcal{P} =\mathbb{R}^{4}$.
Indeed, if $X'' = X' \exp(i\psi)$ and $Y'' = Y'\exp(i\psi)$,
then
\begin{align*}
 \xi'' &= (|X''|^{2}-|Y''|^{2})/2=(|\exp(i\psi)|^{2}[|X'|^{2}-|Y'|^{2}])/2\\
       &=((\cos^{2}(\psi)+\sin^{2}(\psi))[|X'|^{2}-|Y'|^{2}])/2\\
       &=(|X'|^{2}-|Y'|^{2})/2 = \xi', \\
 \upsilon'' &+ iw'' = X''^* Y'' = X'^{\star}\exp(-i\psi)\, Y'\exp(i\psi) \\
  & \qquad\quad = X'^{\star}Y' = \upsilon' + iw'.
\end{align*}
As mentioned in~\cite{VladimirovTD-1997,VladimirovTD-1998-IJBC},
the set of points in $\mathcal{H}$ corresponding to the same point in $\mathcal{P}$
in geometry is called a ''ray'', and the space $\mathcal{P}$ is called a ''ray space''.

The derivative of expressions~\eqref{projec1-coordinates} with respect to
the dynamics of system \eqref{sys:projective:homoc} leads
to the following equations of motion in
the projective space $(\xi', \upsilon', w', Z') \in \mathcal{P}$:
\begin{equation}\label{projec1-system}
  \left\{
    \begin{array}{ll}
      \dot{\xi'} =\upsilon'+\mu(R-\xi')-\kappa w'-\upsilon'(1-Z'-\varrho(R+\xi')),\\
      \dot{\upsilon'}=-\mu\upsilon'+R-\xi'+(R+\xi')(1-Z'-\varrho(R+\xi')),  \\
      \dot{w'}=-\mu w'+\kappa(R+\xi'), \\
     \dot{ Z'}=-\beta Z'+R+\xi',
    \end{array}
  \right.
\end{equation}
where $R$ satisfies the following relations:
\begin{align*}
& R = (\xi'^{2} \!+\! \upsilon'^{2} \!+\! w'^{2})^{\frac{1}{2}} =
(|X'|^{2} \!+\! |Y'|^{2})/2, & \\
& R + \xi=|X'|^{2}, \qquad R-\xi=|Y'|^{2}. &
\end{align*}
\begin{figure}[!ht]
  \centering
  \includegraphics[width=\columnwidth]{./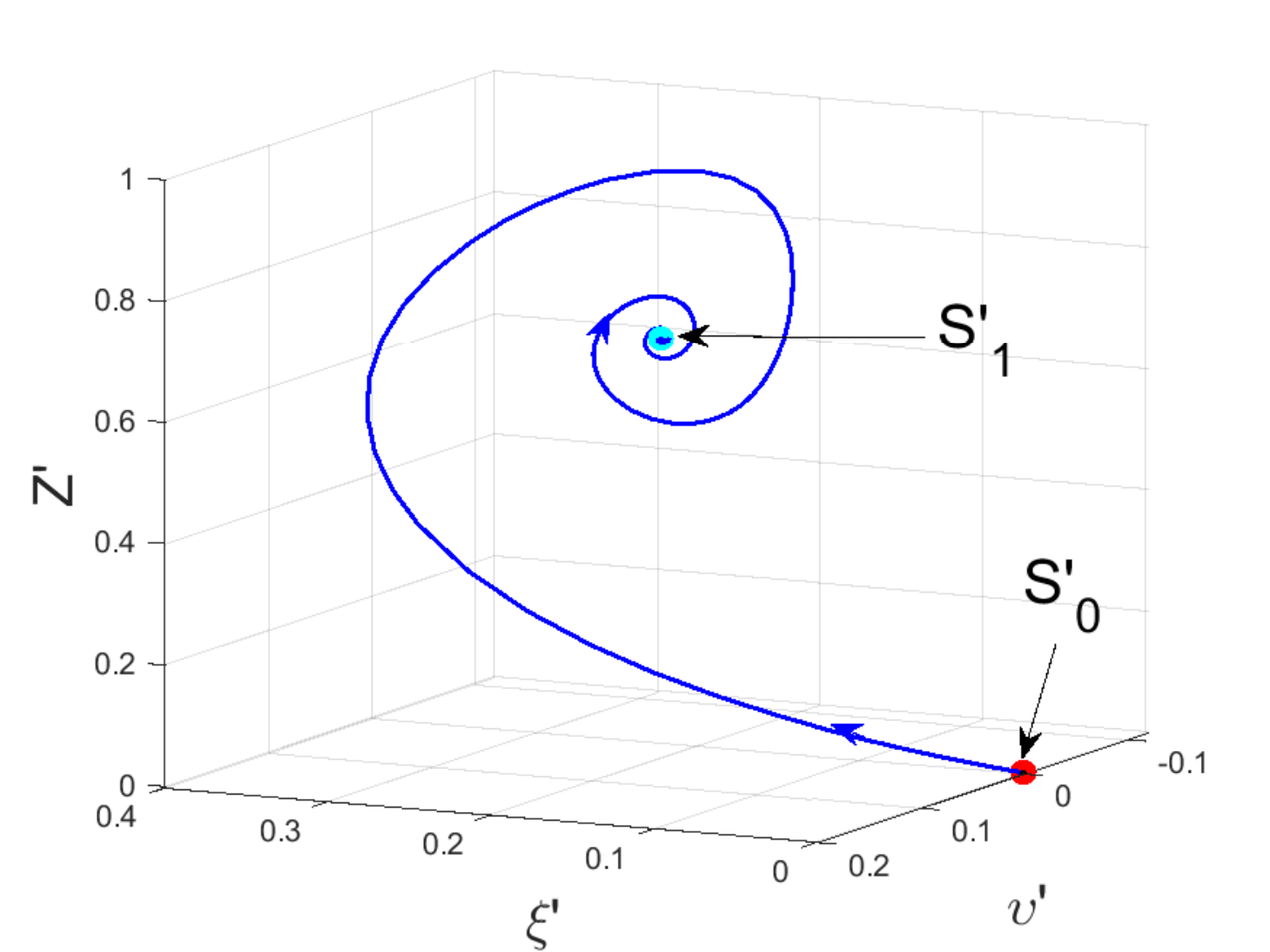}
  \caption{Global stability of the stationary set $\{S'_{0}, S'_{1}\}$
  in the projective system \eqref{projec1-system} with parameters
  $\sigma=4$, $r_{1}=5$, $r_{2} = -e = 0.001$, $b=4$.}
  \label{fig:global_stability_projec}
\end{figure}

Using projective system~\eqref{projec1-system} it is convenient to
study and visualize the dynamics of the initial system~\eqref{sys:complex-Lorenz}.
For instance, it possible to demonstrate the presence of global stability,
when conditions of Theorem~\ref{theorem1} and
condition~\eqref{comp_lorenz:condition:transform:vladimirov} hold.
Taking the same values of parameters
$\sigma=4$, $r_{1} = 5$, $r_{2}=0.001$, $e=-0.001$, $b=5$,
considered for system~\eqref{sys:complex-Lorenz} (see Fig.~\ref{fig:global_stability}),
one gets the same qualitative behavior
in the phase space of system~\eqref{projec1-system}
(see Fig.~\ref{fig:global_stability_projec}).

As in the classical Lorenz system~\eqref{sys:lorenz},
for the complex Lorenz system~\eqref{sys:complex-Lorenz}
it is also known that the separatrix of saddle $S_{0}$
can form a homoclinic loop, from which unstable cycles can arise
and violate global stability (however, a set of measure zero does
not affect the global attraction on a stationary set from
a practical point of view).
The following theorem provide the necessary condition
for the existence of homoclinic orbits
in the complex Lorenz system~\eqref{sys:complex-Lorenz}
(see \cite{VladimirovTD-1997,VladimirovTD-1998-IJBC,VladimirovTD-1998-TP}):
\begin{theorem}
If conditions~\eqref{comp_lorenz:condition:transform:vladimirov} are satisfied,
then the necessary condition for the presence of a homoclinic orbit
of the saddle equilibrium $S_{0}$ is as follows:
\begin{equation}\label{condi-homoclinic}
  r_{2} = \tfrac{e(1-\sigma)}{2\sigma}.
\end{equation}
\end{theorem}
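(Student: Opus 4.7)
My plan is to pass to the projective normal form \eqref{sys:projective:homoc} via Vladimirov's transformation \eqref{transformation} and to observe the algebraic equivalence: the condition $r_{2} = e(1-\sigma)/(2\sigma)$ is equivalent to $\kappa = 0$, where $\kappa = (2\sigma r_{2} + e(\sigma - 1))/(2\eta)$ is the parameter introduced in \eqref{sys:projective:homoc}. Since \eqref{transformation} is a diffeomorphism under \eqref{comp_lorenz:condition:transform:vladimirov}, homoclinic orbits of $S_{0}$ correspond bijectively to homoclinic orbits of $S'_{0}$ in \eqref{sys:projective:homoc}. The theorem thus reduces to: if $S'_{0}$ admits a homoclinic orbit, then $\kappa = 0$.

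The key structural observation is that the real subspace $\mathcal{R} = \{{\rm Im}(X') = {\rm Im}(Y') = 0\}$ of \eqref{sys:projective:homoc} is invariant under the flow if and only if $\kappa = 0$. Writing $X' = X_{1} + iX_{2}$ and $Y' = Y_{1} + iY_{2}$ and restricting to $X_{2} = Y_{2} = 0$, one finds $\dot{Y}_{2}|_{\mathcal{R}} = \kappa X_{1}$, which vanishes identically on $\mathcal{R}$ exactly when $\kappa = 0$. When $\kappa = 0$, the flow restricted to $\mathcal{R}$ is a three-dimensional Lorenz-type system in which a homoclinic orbit of $S'_{0}$ can indeed occur.

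For the converse I would examine the linearization of \eqref{sys:projective:homoc} at $S'_{0}$: besides the stable eigenvalue $-\beta$ along $Z'$, the remaining four eigenvalues in the $(X', Y')$-block satisfy $(\lambda^{2} + \mu\lambda - 1)^{2} + \kappa^{2} = 0$, i.e.\ $\lambda^{2} + \mu\lambda - 1 = \pm i\kappa$. These roots are real (each of algebraic multiplicity two) precisely when $\kappa = 0$; for $\kappa \neq 0$ they form two complex-conjugate pairs with nonzero imaginary parts, so both the two-dimensional unstable manifold and the focal sheet of the stable manifold of $S'_{0}$ spiral. Taking the Lorenz-type homoclinic $\Gamma_{0} \subset \mathcal{R}$ at $\kappa = 0$ as the unperturbed orbit, I would compute a Melnikov-type distance functional measuring the splitting between $W^{u}(S'_{0})$ and $W^{s}(S'_{0})$. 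Equivalently, one can exploit the scalar equation $\dot{w'} = -\mu w' + \kappa|X'|^{2}$ (which decouples from $(\xi', \upsilon')$) along a candidate orbit: solving by variation of constants with the unstable-manifold condition $w'(-\infty) = 0$ gives $w'(t) = \kappa\int_{-\infty}^{t} e^{-\mu(t-s)}|X'(s)|^{2}\,ds$, a quantity whose sign is fixed by that of $\kappa$. Matching this with the required tangent direction of $W^{s}(S'_{0})$ at the origin (which for $\kappa \neq 0$ is transverse to the $w'$-axis) yields a constraint of the form $\kappa\cdot I(\Gamma_{0}) = 0$ with $I(\Gamma_{0}) \neq 0$, forcing $\kappa = 0$.

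The principal technical obstacle will be this last step: making the Melnikov/direction-matching argument rigorous, so as to rule out any homoclinic orbit arising from the spiraling correction when $\kappa \neq 0$. Once $\kappa = 0$ has been established, solving $2\sigma r_{2} + e(\sigma - 1) = 0$ immediately delivers $r_{2} = e(1-\sigma)/(2\sigma)$, which is the claim.
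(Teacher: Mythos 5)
Your proposal follows essentially the same route as the paper: pass to Vladimirov's transformation \eqref{transformation} and the projective system \eqref{projec1-system}, show that a homoclinic orbit of $S_{0}$ forces $\kappa=0$, and note that $\kappa=0$ is algebraically equivalent to \eqref{condi-homoclinic}. The step you flag as the principal obstacle---rigorously excluding a homoclinic connection when $\kappa\neq0$---is precisely the part the paper does not prove either but delegates to the cited works of Vladimirov et al.\ (via the assertion that $\Pi(W^{u})$ and $\Pi(W^{s})$ can meet only along the $Z'$-axis); your variation-of-constants formula $w'(t)=\kappa\int_{-\infty}^{t}e^{-\mu(t-s)}|X'(s)|^{2}\,ds$, which fixes the sign of $w'$ along the entire orbit, can in fact be closed without a Melnikov computation by observing that the stable and unstable eigenvalues $\lambda_{\pm}$ of the $(X',Y')$-block satisfy ${\rm Im}\,\lambda_{+}=-{\rm Im}\,\lambda_{-}\neq0$ for $\kappa\neq0$, so that near the two asymptotic ends of the orbit $w'\sim|c_{\pm}|^{2}\,{\rm Im}(\lambda_{\pm})\,e^{2{\rm Re}(\lambda_{\pm})t}$ would have to take opposite signs, contradicting the constant sign forced by the integral representation.
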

\begin{corollary}\label{corollary:1}
For the "laser case", i.e., when $e=-r_{2}$,
the homoclinic orbit can only be obtained if $e=r_{2}=0$.
\end{corollary}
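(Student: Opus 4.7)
The plan is straightforward: this is a direct algebraic consequence of the necessary condition \eqref{condi-homoclinic} in the preceding theorem. I would simply substitute the laser-case constraint $e = -r_2$ into the relation $r_2 = \tfrac{e(1-\sigma)}{2\sigma}$ and show the resulting equation forces $r_2$ to vanish.

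More concretely, after the substitution the relation becomes
\begin{equation*}
  r_2 = \tfrac{-r_2 (1-\sigma)}{2\sigma} = \tfrac{r_2(\sigma - 1)}{2\sigma}.
\end{equation*}
Multiplying through by $2\sigma$ (which is nonzero since $\sigma > 0$ by hypothesis on the parameters of system~\eqref{sys:complex-Lorenz}) and collecting terms yields $r_2(\sigma+1) = 0$. Since $\sigma > 0$ implies $\sigma + 1 \neq 0$, we conclude $r_2 = 0$, and then the laser-case identity $e = -r_2$ immediately gives $e = 0$ as well.

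There is essentially no obstacle here; the only point worth mentioning explicitly is that we must be entitled to apply Theorem~5 in the first place, i.e.\ conditions~\eqref{comp_lorenz:condition:transform:vladimirov} must be presumed to hold so that the Vladimirov transformation~\eqref{transformation} makes sense and the necessary condition~\eqref{condi-homoclinic} applies. Under that hypothesis, the corollary reduces to the one-line linear algebra above. I would state this clearly, then present the two-line calculation, and conclude that in the laser case $e = -r_2$ a homoclinic orbit of $S_0$ is compatible with~\eqref{condi-homoclinic} only in the degenerate subcase $e = r_2 = 0$.
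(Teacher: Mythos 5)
Your proposal is correct and is exactly the argument the paper intends: the corollary is stated as an immediate consequence of the necessary condition \eqref{condi-homoclinic}, and substituting $e=-r_{2}$ into it yields $r_{2}(\sigma+1)=0$, hence $r_{2}=e=0$ since $\sigma>0$. The paper gives no separate proof, so your two-line computation (together with the sensible remark that conditions \eqref{comp_lorenz:condition:transform:vladimirov} must be assumed for the theorem to apply) fills in precisely what was left implicit.
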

To prove condition~\eqref{condi-homoclinic},
Vladimirov et al.
considered the projective system~\eqref{projec1-system},
because unlike system~\eqref{sys:complex-Lorenz},
it has a 1-dimensional unstable manifold.
The projections of unstable manifold $W^{u}$ and stable manifold $W^{s}$
in the space $\mathcal{P}$, i.e. $\Pi(W^{u})$ and $\Pi(W^{s})$,
can intersect only along the $Z'$-axis and a possible homoclinic orbit exists only
if $\kappa = 0$, which coincides with condition~\eqref{condi-homoclinic}.

For the parameters $\sigma=10$, $r_{2}=4.5\times 10^{-4}$, $e=-0.001$, $b=\frac{8}{3}$
of system~\eqref{sys:complex-Lorenz}, condition \eqref{condi-homoclinic} is satisfied.
In this case, system~\eqref{sys:complex-Lorenz} has a unique equilibrium $S_{0}$
(since $e\neq-r_{2}$), and it is possible to
find numerically the approximate value of such homoclinic bifurcation
$r_{1h}\approx 13.9,$ when two symmetric
homoclinic orbits appear, forming a homoclinic butterfly
(see Fig.~\ref{fig:homoclinicorbit11}).
A further increase in the parameter $r_{1}$ leads
to the birth of two periodic saddle orbits from each homoclinic orbit.
For the parameters $\sigma=10, r_{2}=0, e=0, b=\frac{8}{3}$,
condition \eqref{condi-homoclinic} holds and system~\eqref{sys:complex-Lorenz}
has the following equilibrium points: $S_{0}, S_{\theta}$ (since $e=-r_{2}$),
with $r_{1h}\approx 13.9,$ system~\eqref{sys:complex-Lorenz}
has homoclinic orbits (see Fig.~\ref{fig:homoclinicorbit22}.)
\begin{figure*}[!ht]
 \centering
 \subfloat[
 {\scriptsize  Two symmetric homoclinic orbits (homoclinic butterfly)
in the complex Lorenz system (\ref{13})}
 ] {
 \label{fig:homoclinicorbit1}
 \includegraphics[width=0.33\textwidth]{./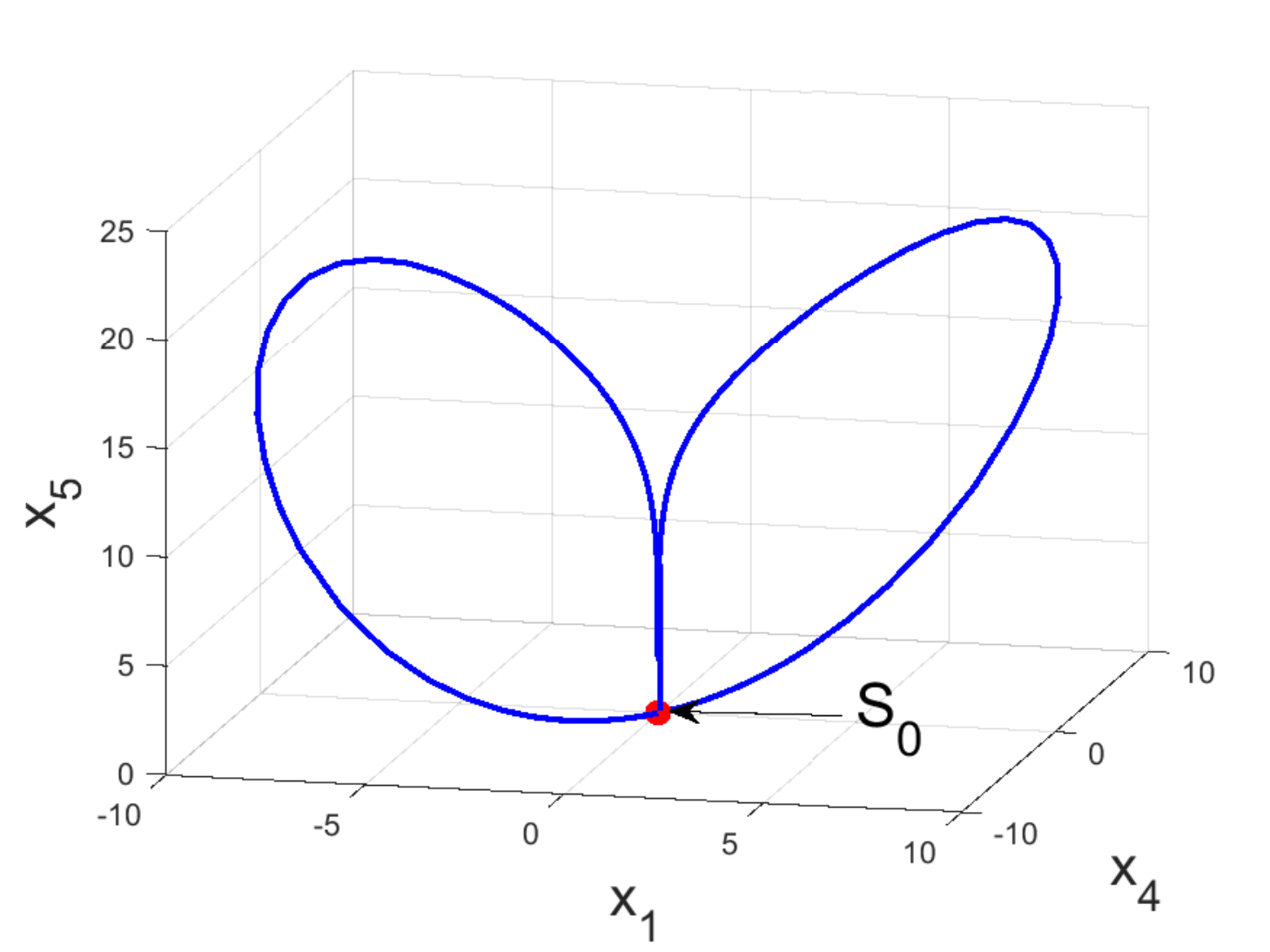}
 }~
 \subfloat[
 {\scriptsize  Some homoclinic orbits in the phase space of the original system \eqref{13}  }
 ] {
 \label{fig:homoclinicorbit3}
 \includegraphics[width=0.33\textwidth]{./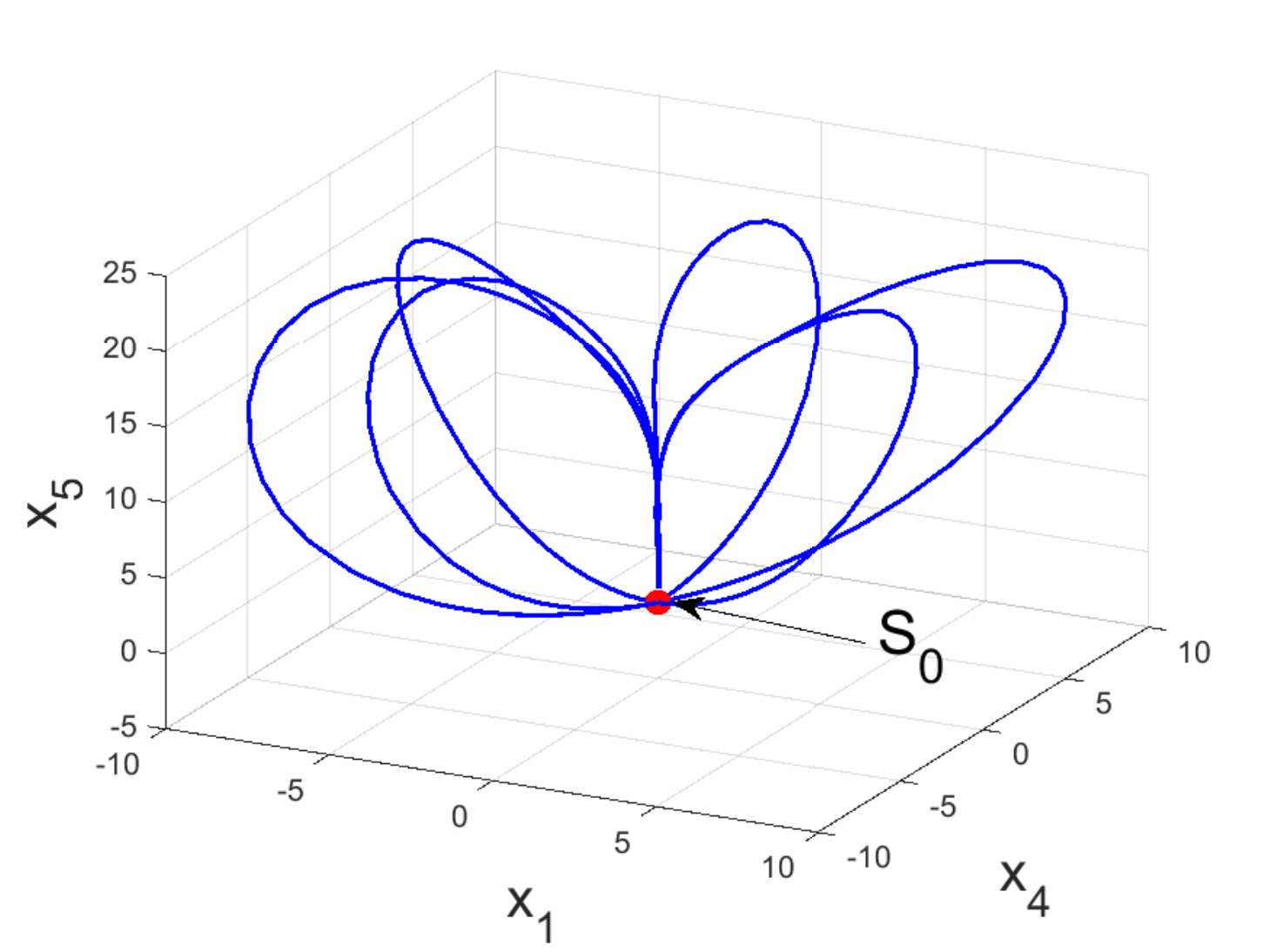}
 }
 \subfloat[
 {\scriptsize  Projection of a homoclinic orbit in the projective system \eqref{projec1-system}   }
 ] {
 \label{fig:homoclinicorbit1-projec}
 \includegraphics[width=0.33\textwidth]{./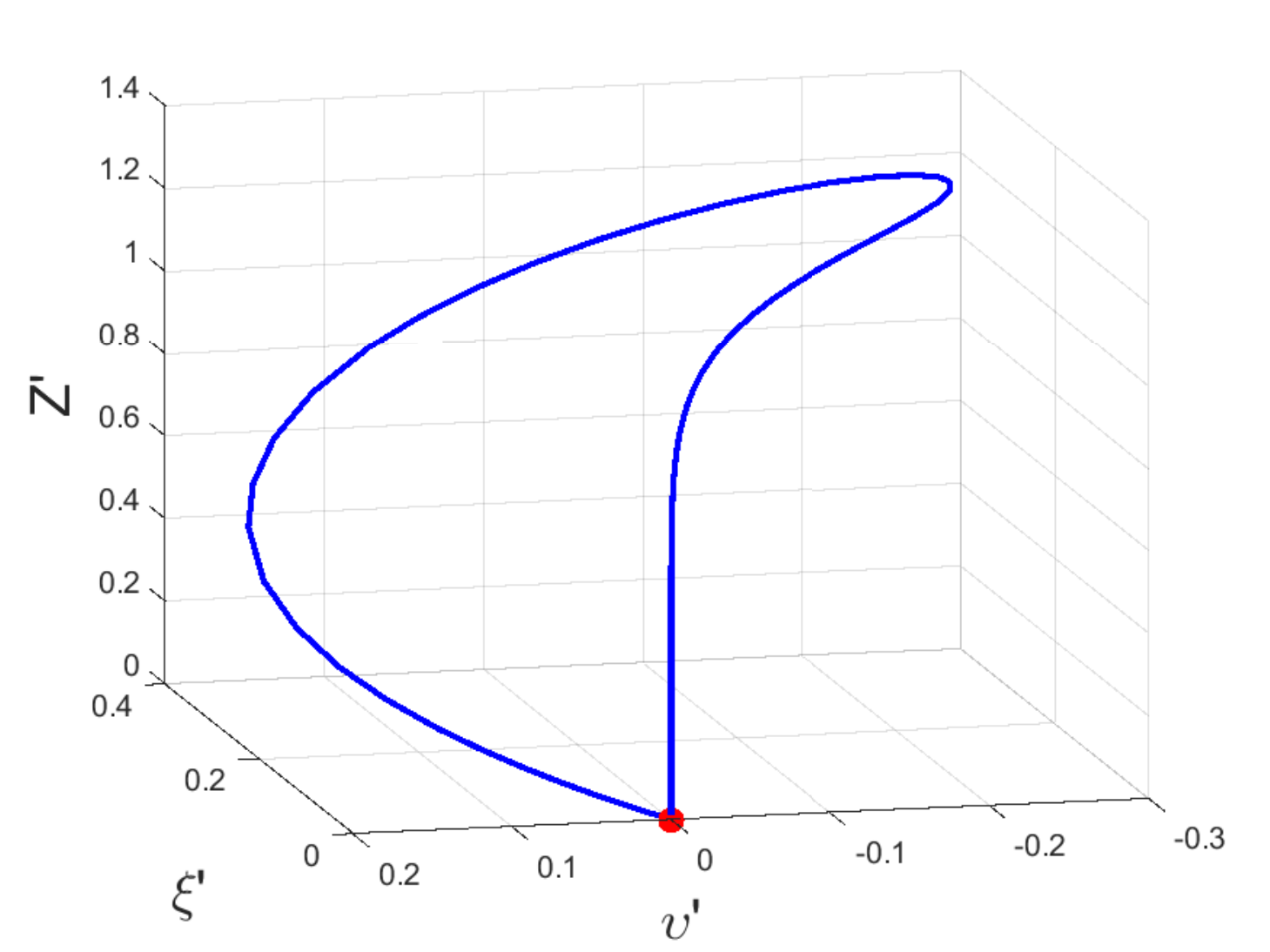}
 }
 \caption{ Visualization of homoclinic orbits with $\sigma=10, r_{2}= 4.5\times 10^{-4}, e=-0.001, b=\frac{8}{3}$ and $r_{1h}\approx 13.9$.
 }
 \label{fig:homoclinicorbit11}
\end{figure*}
\begin{figure*}[!ht]
 \centering
  \subfloat[
 {\scriptsize  Two symmetric homoclinic orbits in the complex Lorenz system \eqref{13}}
 ] {
 \label{fig:homoclinicorbit2}
 \includegraphics[width=0.33\textwidth]{./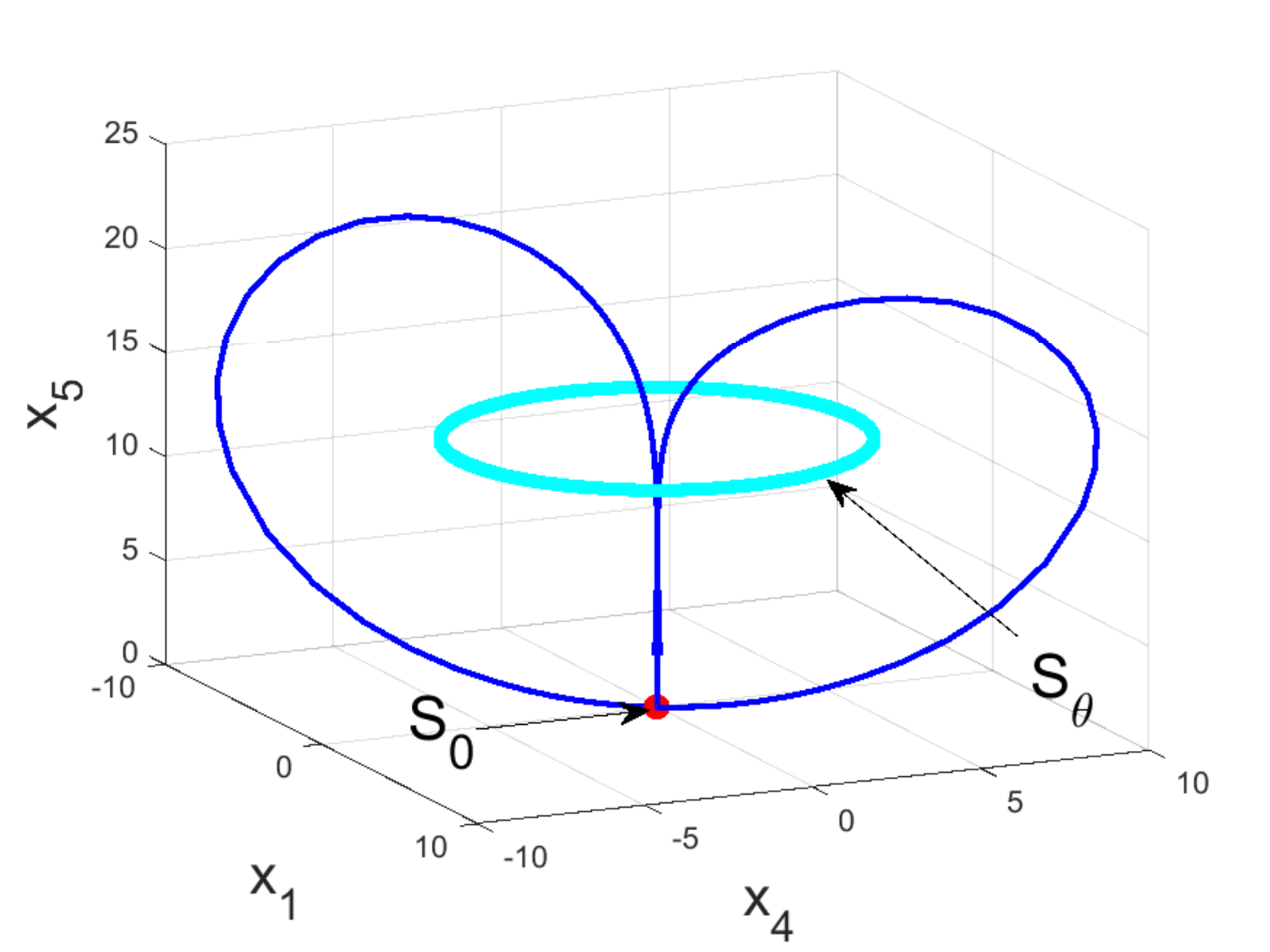}
 }~
 \subfloat[
 {\scriptsize  Some homoclinic orbits in the phase space of the original system \eqref{13}}
 ] {
 \label{fig:homoclinicorbit31}
 \includegraphics[width=0.33\textwidth]{./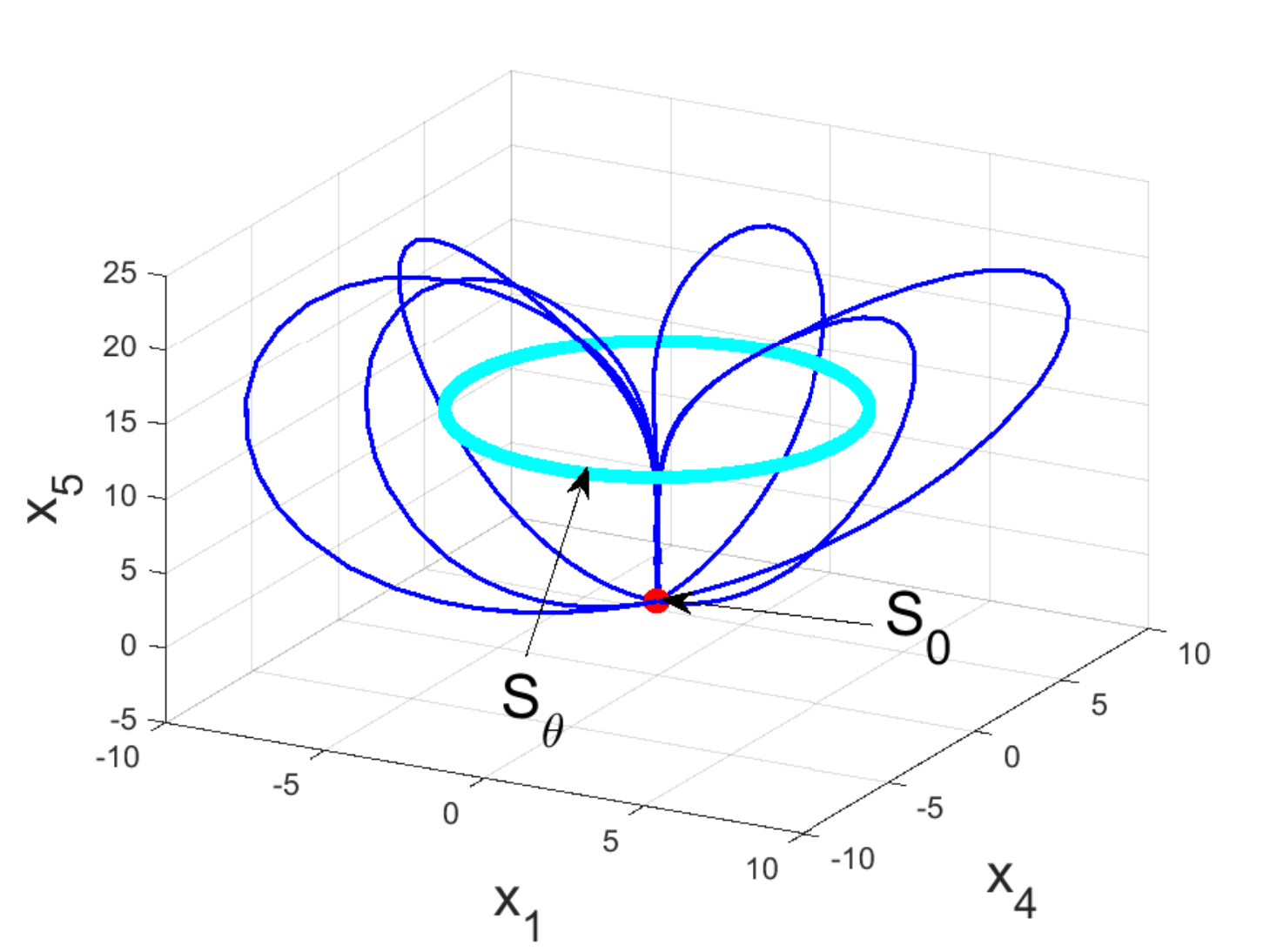}
 }
 \subfloat[
 {\scriptsize  Projection of a homoclinic orbit in the projective space \eqref{projec1-system}}
 ] {
 \label{fig:homoclinicorbit2-projec}
 \includegraphics[width=0.33\textwidth]{./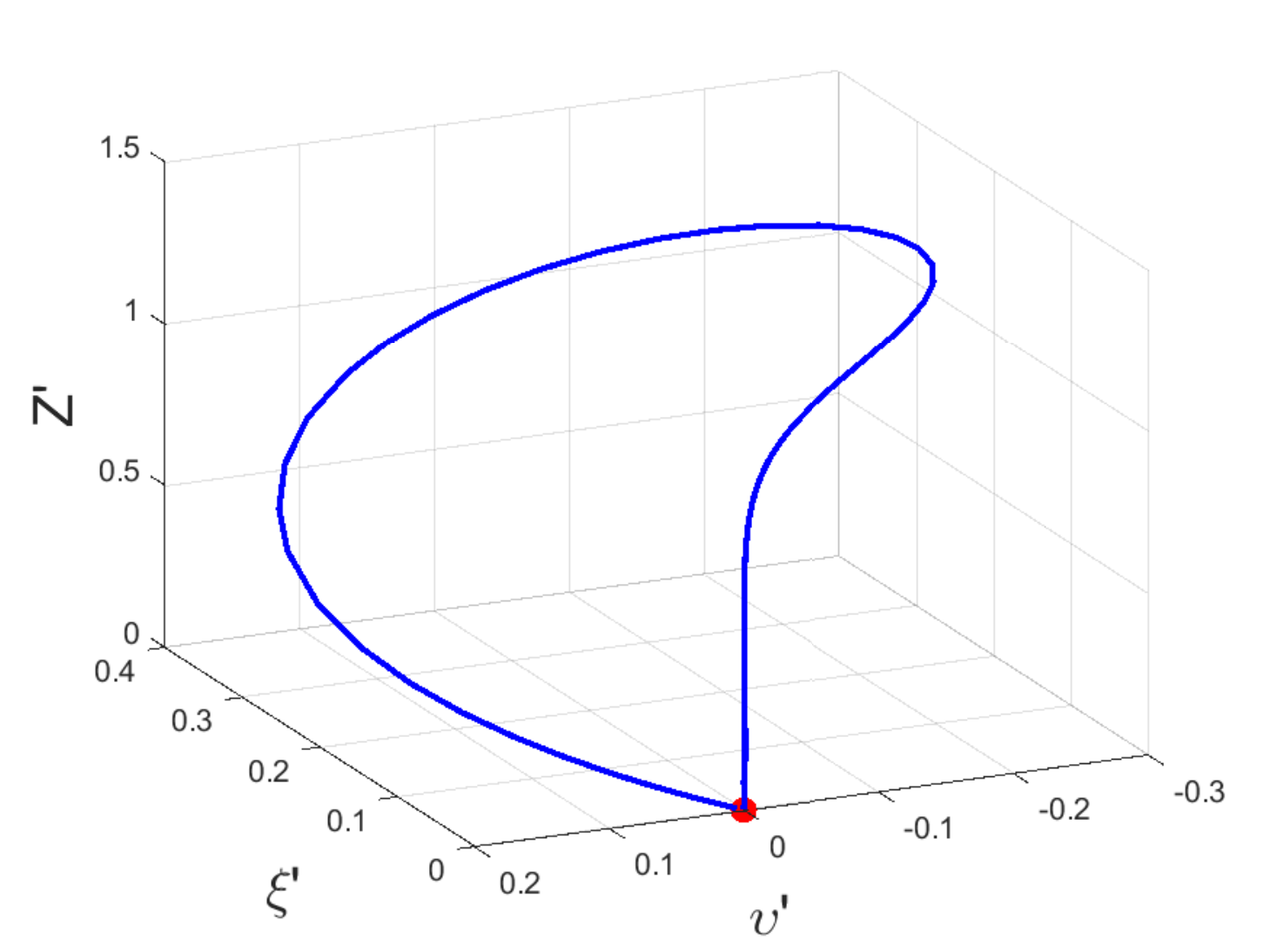}
 }
 \caption{ Visualization of homoclinic orbits with $\sigma=10, r_{2}=0, e=0, b=\frac{8}{3}$ and $r_{1h}\approx 13.9$.
 }
 \label{fig:homoclinicorbit22}
\end{figure*}

Finally, let us mention the action of the projection map $\Pi$
on various attractors in the space $\mathcal{H}$.
In other words, preimages of different kinds of attractors
in the original space $\mathcal{H}$ and the corresponding images
in the projection space $\mathcal{P}$.
Basing on some geometrical properties of the map $\Pi$,
in \cite{VladimirovTD-1997,VladimirovTD-1998-IJBC}
it is discussed that every limit set
in $\mathcal{H}/\mathcal{Z}$ can be represented locally
(in a neighborhood of the given ray)
by the direct product of a set in $\mathcal{P}/\mathcal{Z}$
and the ray, i.e., the set $\mathbb{R}^1$.
Here we exlude the sets of points $\mathcal{Z}$ on the $Z'$-axes
in phase spaces $\mathcal{H}$ and $\mathcal{P}$, which
are invariant with respect to systems~\eqref{sys:projective:homoc}
and~\eqref{projec1-system}, respectively.
Moreover, the triplet $(\mathcal{H}/\mathcal{Z},P/Z,\Pi)$ forms
a fiber bundle (see e.g.~\cite{KobayashiN-1963}), $\Pi$ is continuous map and, thus, it maps
connected/compact sets into connected/compact sets.

To illustrate this statement,
let us list the following different types of attractors:
\renewcommand{\labelenumi}{\bf(\arabic{enumi})}
\begin{enumerate}[leftmargin=1pt,labelwidth=-15pt,labelsep=1pt]
  \item For the zero equilibrium $S_{0}$ of system \eqref{sys:projective:homoc}:
  $\Pi(S_{0})=S'_{0}$, where $S'_{0}=(0,0,0,0)$
  is the zero equilibrium point of projective system~\eqref{projec1-system};
  \item For the circle of equilibria $S_{\theta}$ of system \eqref{sys:projective:homoc}:
  $\Pi(S_{\theta})=S'_{1}$, where $S'_{1}$
  is an equilibrium point of system~\eqref{projec1-system};
  \item For limit cycle of system \eqref{sys:projective:homoc} the image
  is an equilibrium $S'_{1}$ of system~\eqref{projec1-system} in $\mathcal{P}$;
  \item For a torus of system \eqref{sys:projective:homoc}
  the image is a limit cycle of system~\eqref{projec1-system} in $\mathcal{P}$;
  \item For a single homoclinic orbit of projective system~\eqref{projec1-system}
  one has (as a preimage of map $\Pi$) a continuum of homoclinic orbits
  in the original space $\mathcal{H}$,
  differing only by the common phase according
  to Eq.~\eqref{invariance:transf};
  \item For a chaotic attractor of projective system~\eqref{projec1-system} in $\mathcal{P}$
  one has (as a preimage) a chaotic attractor in~$\mathcal{H}$;
  \item For a transient chaotic set of projective system~\eqref{projec1-system} in $\mathcal{P}$,
  which collapses by colliding with stable equilibria or stable limit cycle,
  there exists (as a preimage) a transient chaotic set
  which after a certain time approaches eventually circle of equilibria $S_{\theta}$ or a torus, respectively.
\end{enumerate}

Summarizing the written above,
further in our experiments we will examine
the appearance of attractors in
projective system~\eqref{projec1-system}, which
has $1$-dimensional unstable manifold
of zero saddle equilibrium $S'_{0}$
and use map~\eqref{projec1-coordinates}
to get the corresponding preimages of the discovered
attractors in the original system~\eqref{sys:complex-Lorenz}.

\section{Outer estimation: the absence of trivial attractors}

System \eqref{sys:complex-Lorenz} possesses the absorbing set
(defined by Eq.~\eqref{inequa1Lemma}) and for $\sigma>b+1$,
$r_{1} > \max \{r_{1c}, r'_{1c}\}$ (see Lemma~\ref{Lemma1} and~\ref{Lemma2})
all equilibria are unstable.
Further, in this article for the system~\eqref{sys:lorenz} (and \eqref{13})
we will fix the classical values of parameters $\sigma=10$, $b=\frac{8}{3}$
of real Lorenz model~\eqref{sys:lorenz}
and, following the "laser case",
will specify two additional parameters: $r_{2} = -e = 0.001$.
For instance, when $r_{1} = 28 > r'_{1c}$,
it is possible to observe a self-excited chaotic attractor
with respect to all equilibria
of a stationary set (see Fig.~\ref{fig:lorenz:selfexcited}).
This gives an outer estimation of practical global stability.
\begin{figure*}[!ht]
 \centering
 \subfloat[
 {\scriptsize Initial data near the equilibrium $S_{0}$}
 ] {
 \label{fig:lorenz:attr:hidTrans:hid}
 \includegraphics[width=0.4\textwidth]{./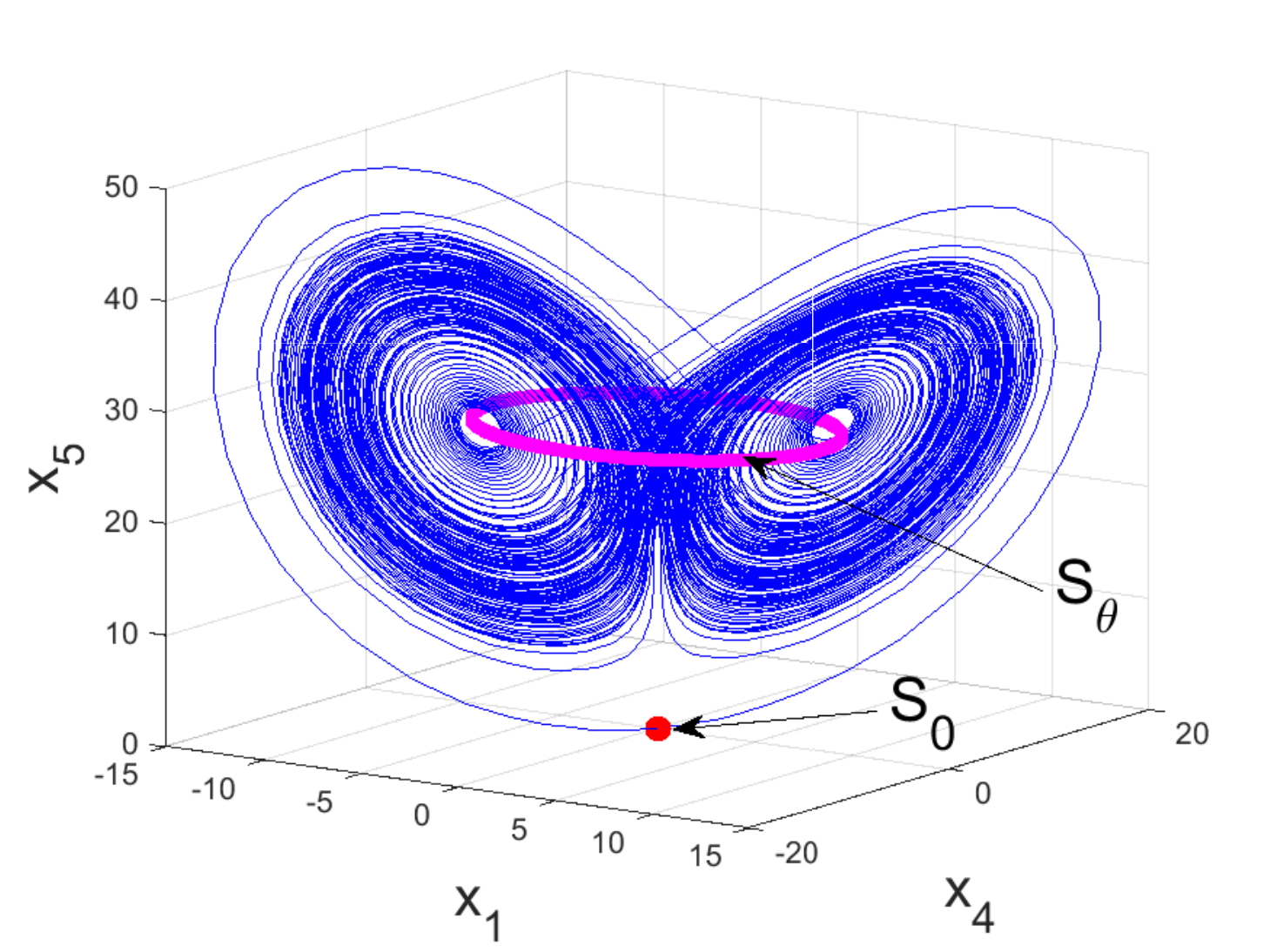}
 }~
 \subfloat[
 {\scriptsize Initial data near an equilibrium $S_{\theta }$}
 ] {
 \label{fig:lorenz:attr:hidTrans:trans}
 \includegraphics[width=0.4\textwidth]{./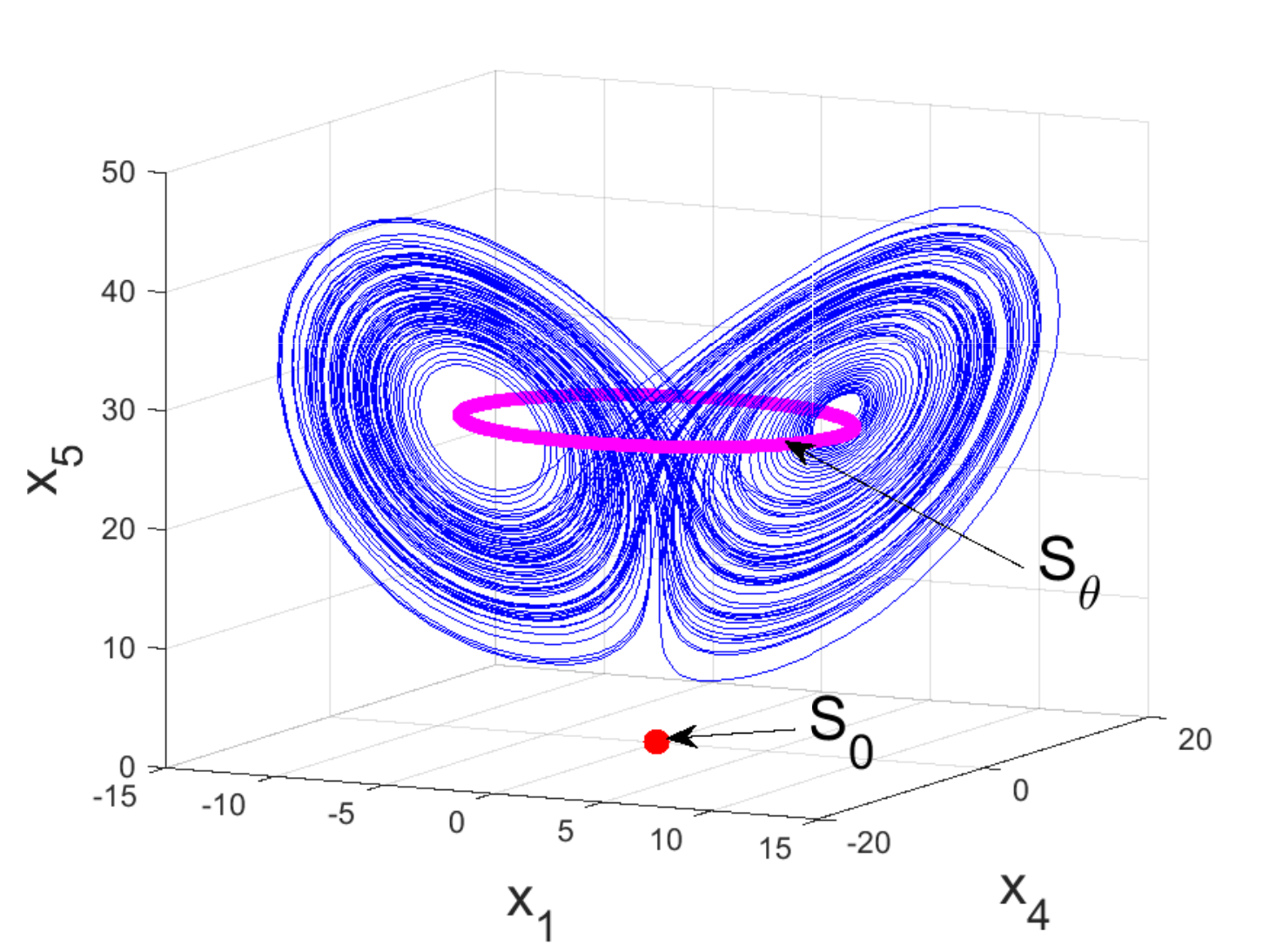}
 }
 \\
 \centering
 \subfloat[
 {\scriptsize Initial data near the equilibrium $S'_{0}$}
 ] {
 \label{fig:lorenz:proj:attr:hidTrans:hid}
 \includegraphics[width=0.4\textwidth]{./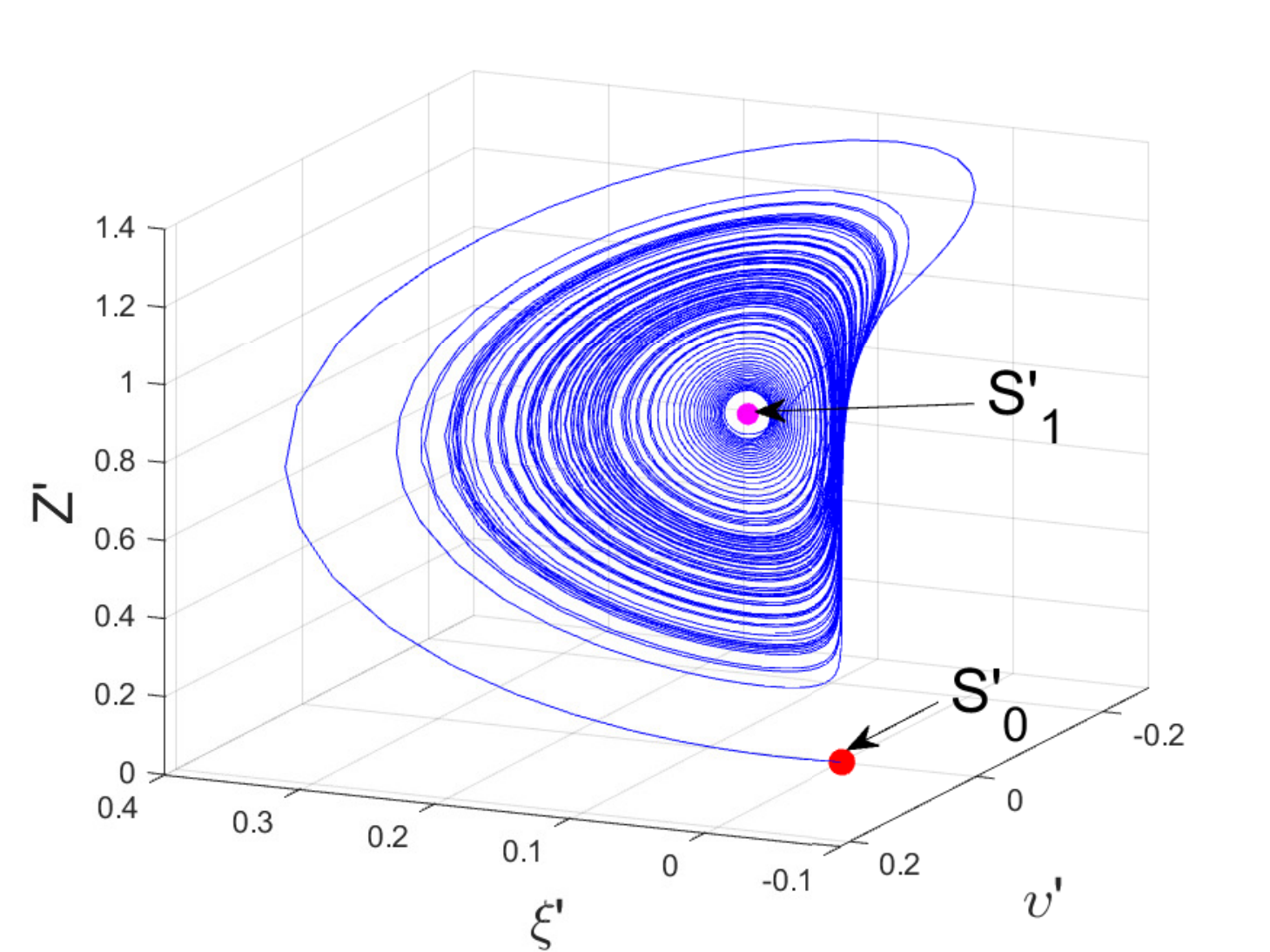}
 }~
 \subfloat[
 {\scriptsize Initial data near the equilibrium $S'_{1}$}
 ] {
 \label{fig:lorenz:proj:attr:hidTrans:trans}
 \includegraphics[width=0.4\textwidth]{./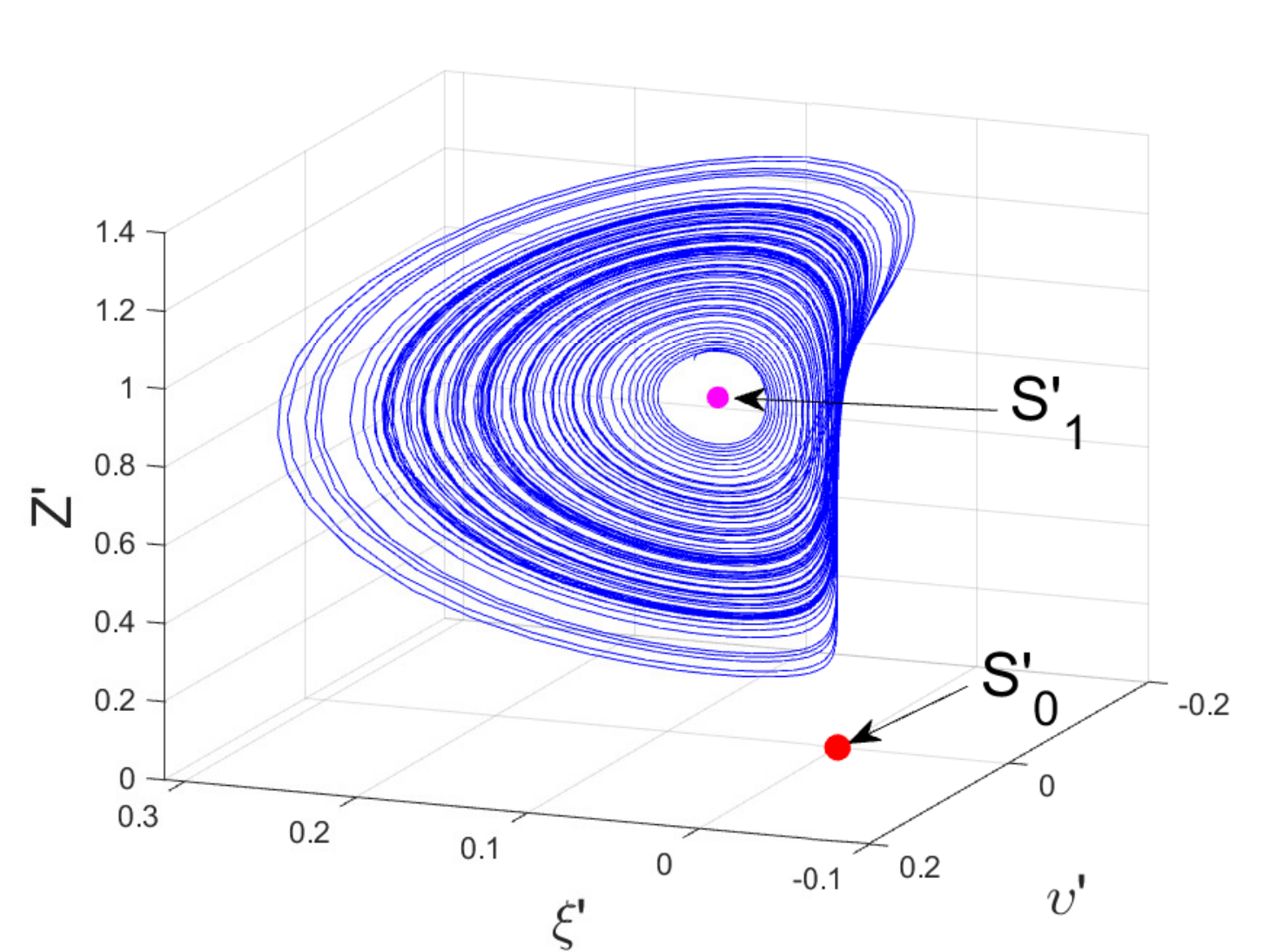}
 }
 \caption{
 (a), (b) Numerical visualization of the self-excited chaotic attractor in system~\eqref{13} with
 $r_{1} = 28$, $\sigma = 10$, $b = \tfrac{8}{3}$, $r_{2} = -e = 0.001$
 by integrating the trajectories with initial data from small
 vicinities of the unstable equilibria $S_{0}$, $S_{\theta}$;
 (c), (d) The corresponding images in the projective space $\mathcal{P}$.
 }
 \label{fig:lorenz:selfexcited}
\end{figure*}
\begin{figure*}[!ht]
 \centering
 \subfloat[
 {\scriptsize Initial data near the equilibrium $S_{0}$}
 ] {
 \label{fig:selfexcited11}
 \includegraphics[width=0.4\textwidth]{./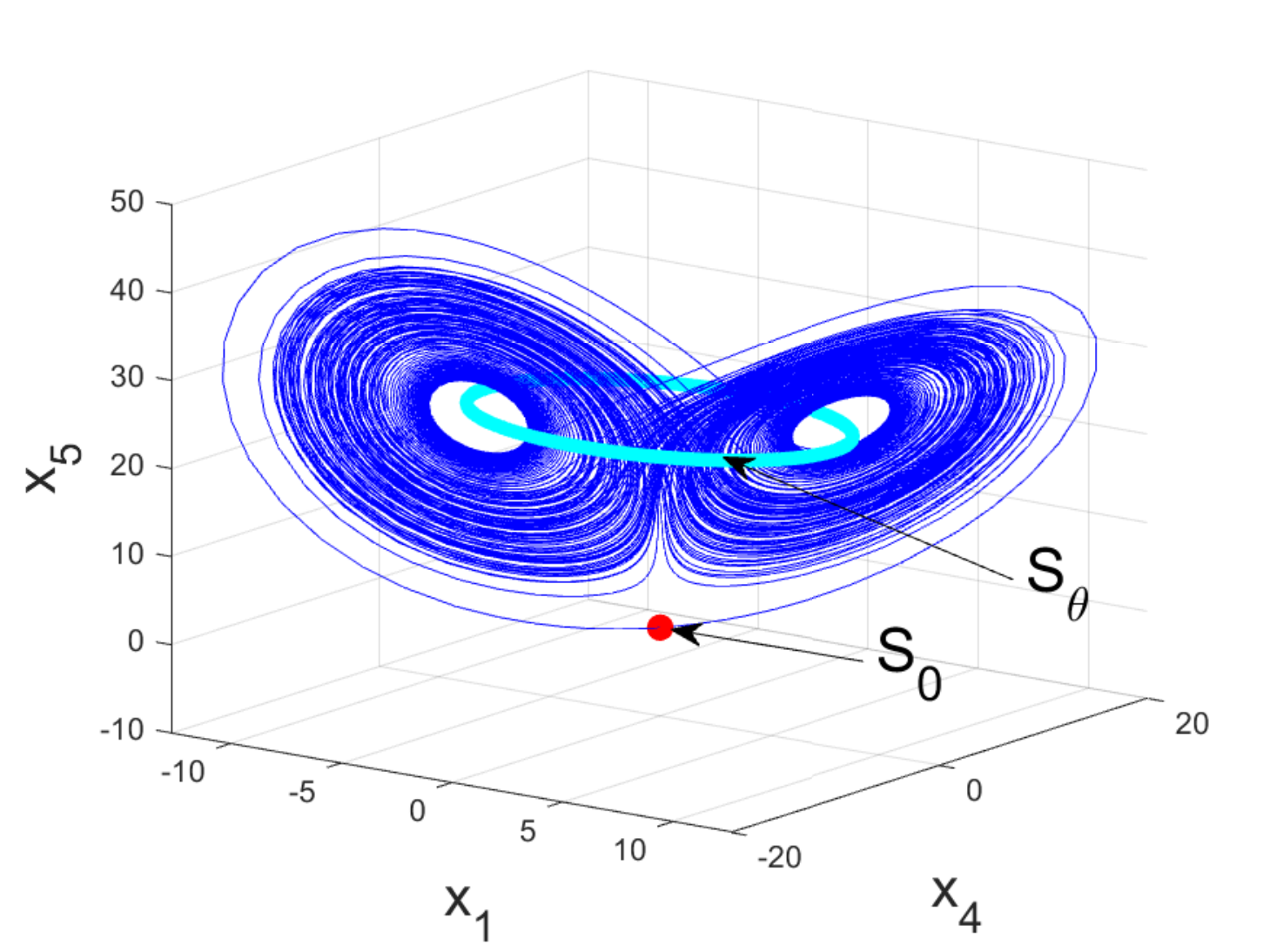}
 }~
 \subfloat[
 {\scriptsize Initial data near an equilibrium $S_{\theta}$}
 ] {
 \label{fig:selfexcited21}
 \includegraphics[width=0.4\textwidth]{./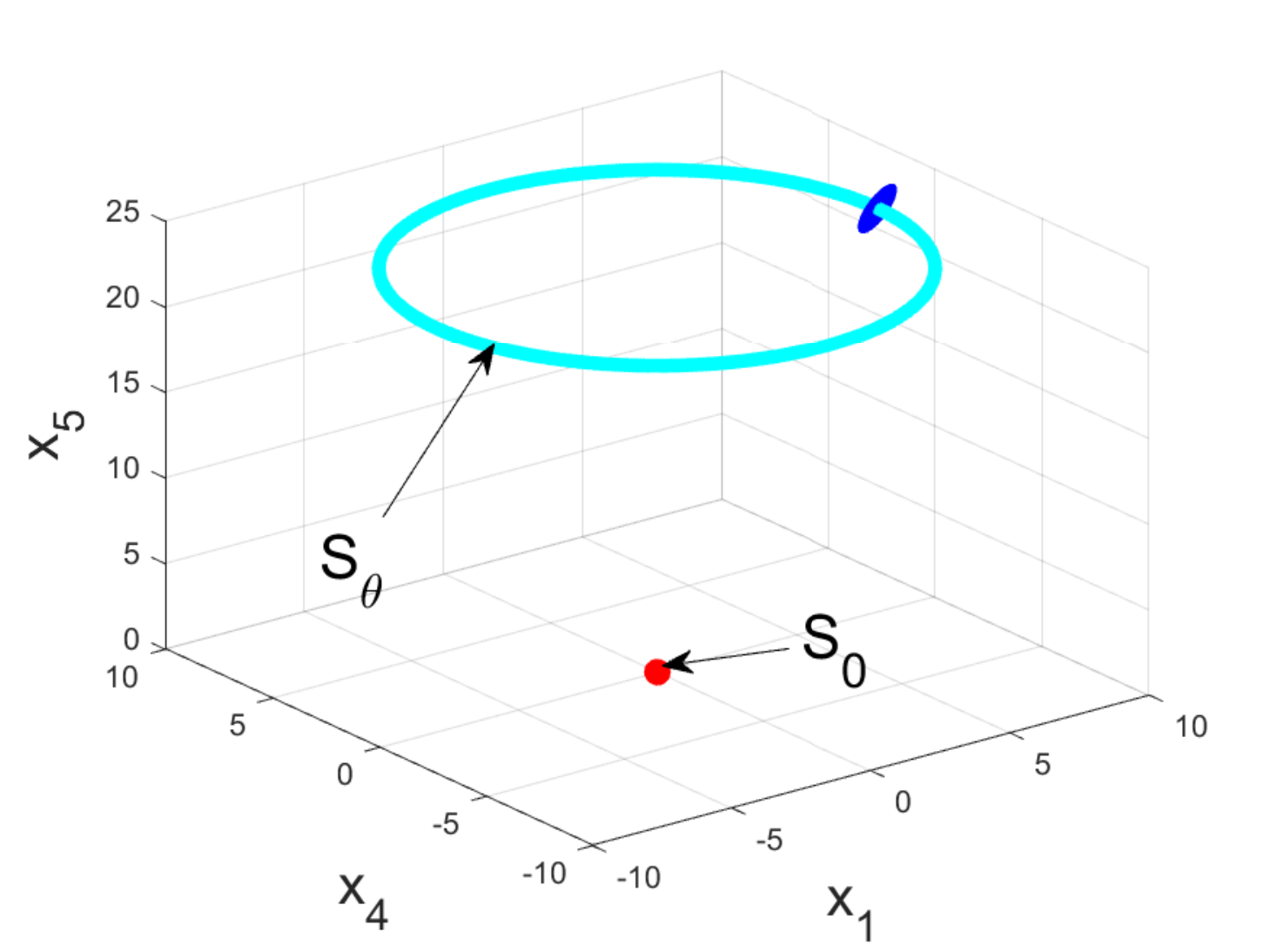}
 }
 \\
 \centering
 \subfloat[
 {\scriptsize Initial data near the equilibrium $S'_{0}$}
 ] {
 \label{fig:projectselfexcited11}
 \includegraphics[width=0.4\textwidth]{./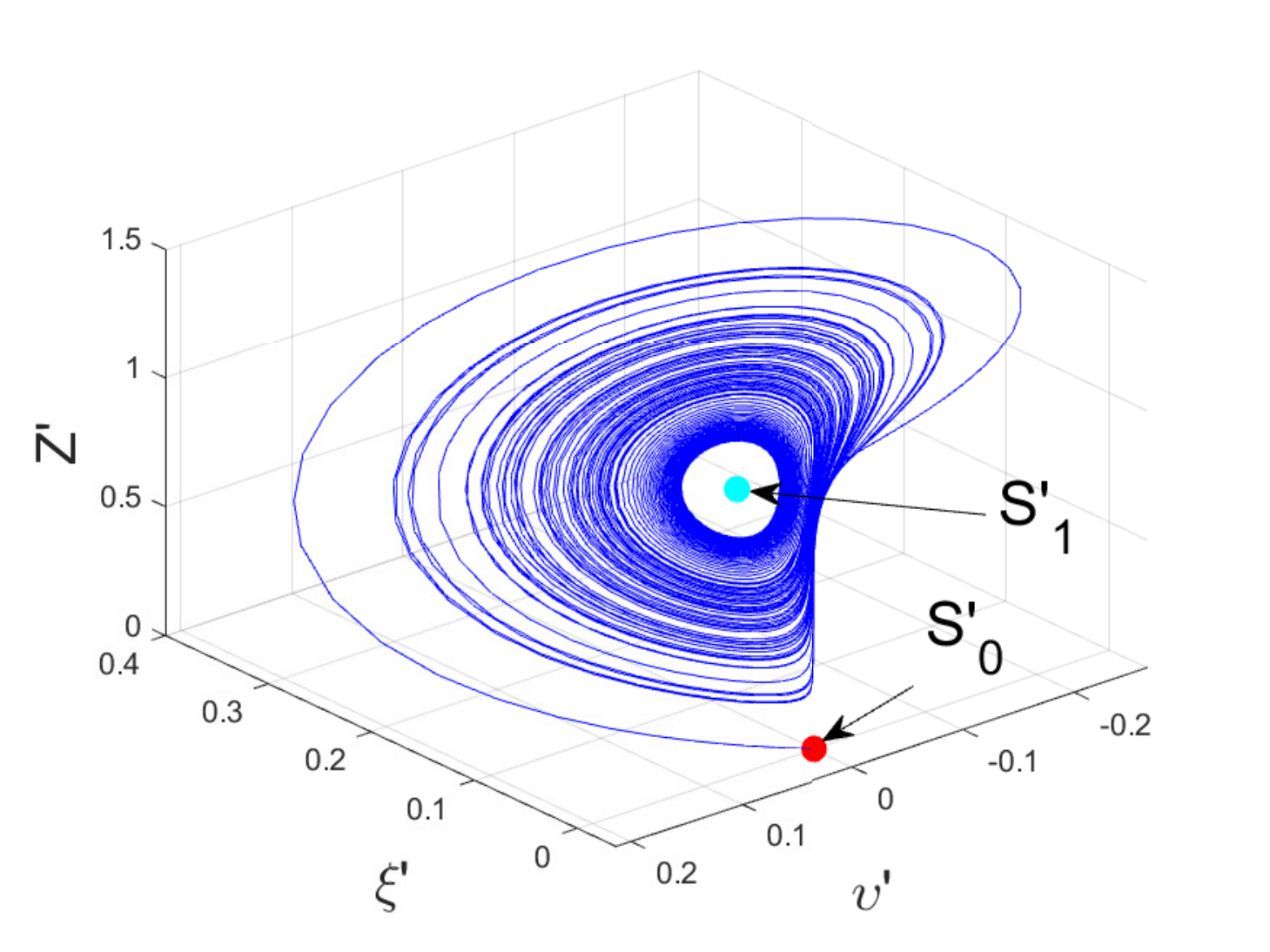}
 }~
 \subfloat[
 {\scriptsize Initial data near the equilibrium $S'_{1}$}
 ] {
 \label{fig:projectselfexcited21}
 \includegraphics[width=0.4\textwidth]{./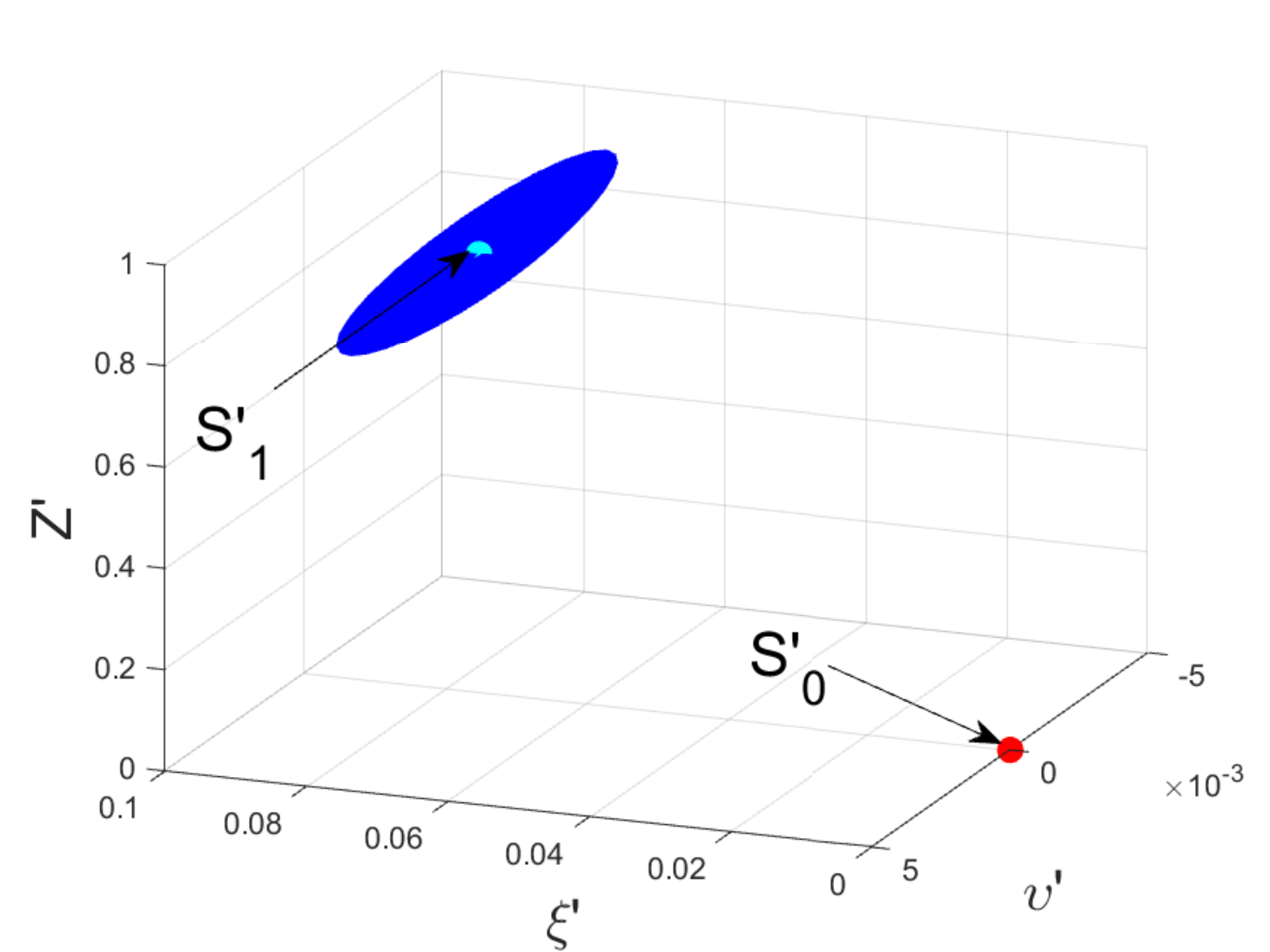}
 }
 \caption{
 (a), (b) Numerical visualization of the self-excited local chaotic attractor in
 system~\eqref{13} with $r_{1} = 24.6$, $\sigma = 10$, $b = \tfrac{8}{3}$, $r_{2} = - e = 0.001$ by a
 trajectory starting in the vicinity of the unstable equilibrium $S_{0}$.
 This attractor coexists with the stable set of equilibria $S_{\theta}$;
 (c), (d) The corresponding images in the projective space $\mathcal{P}$.
 }
 \label{fig:lorenz:selfexcited1}
\end{figure*}
\section{The boundary of practical stability and absence of nontrivial attractors}

Inside the absorbing set, it is possible to study numerically
the presence of nontrivial self-excited and hidden
attractors for parameters $r_{1}, \sigma, r_{2}, b$ not satisfying
conditions (\ref{cond-glonal}) of global stability,
i.e., by fixing $\sigma, r_{2}, b$ and by decreasing $r_{1}$ from $r'_{1c}$.
For $\sigma=10$, $r_{2}=0.001$, $b=\frac{8}{3}$ this gives us the following
region $r_{1}\in(r_{1gs}, r'_{1c})$, where $r_{1gs} \approx 1.96$ and $r'_{1c} \approx 24.73$.

A nontrivial self-excited attractor can be observed
numerically for $24.09 \lessapprox r_{1}<r'_{1c} \approx 24.73$.
In this case of nontrivial multistability, system (\ref{13})
possesses a local chaotic attractor $\mathcal{A}$ which is self-excited with
respect to equilibrium $S_{0}$ and coexists with
the set of trivial attractors $S_{\theta}$ (see Fig.~\ref{fig:lorenz:selfexcited1}).

\vspace{-1mm}
\subsection{Hidden attractor or hidden transient set?}
As we discussed above on the example
of the classical Lorenz model~\eqref{sys:lorenz},
for a system possessing a transient
chaotic set, the time of the transient process depends
strongly on the choice of initial data in the phase space
and also on the parameters of numerical solvers to integrate
a trajectory (e.g., the order of the method, the step of integration,
relative and absolute tolerances) \cite{KuznetsovMKKL-2020}.
This complicates
the task of distinguishing a transient chaotic set
from a sustained chaotic set (attractor) in numerical
experiments.

Consider system \eqref{projec1-system}, which represents
the complex Lorenz system \eqref{sys:complex-Lorenz}
in the projective space $\mathcal{P}$, with $r_{1}=23.5$,
$r_{1}=23.6$ and $r_{1}=23.7$.
For a trajectory with a certain initial point, which is
computed by a certain solver with specific parameters,
we estimate the moment of the end of transient behavior
as the moment when the trajectory falls into a
small vicinity of the stable equilibrium $S'_1$.
Using MATLAB's standard procedure \texttt{ode45} with default
parameters (relative tolerance $10^{-3}$, absolute tolerance
$10^{-6}$) for system \eqref{projec1-system} with parameters
$r_{1}=23.5,\, \sigma=10,\, r_{2} = -e = 0.001,\, b=\frac{8}{3}$
and for initial point $u_{0}=(0.5, 0.5, 0.5, 0.5)$, a transient
chaotic behavior is observed on the time interval $[0,\; 1.78 \times 10^{4}]$,
for initial point $u_{0}=(0.4, 0.4, -0.3, 0.5)$ --- on the time interval $[0,\; 2.7 \times 10^{4}]$
and for initial point $u_{0}=(0.4, 0.4, 0.4, 0.4)$ --- on the time interval $[0,\; 4.6 \times 10^{4}]$.
If we slightly change the bifurcation parameter $r_{1}$, taking
$r_{1}=23.6$ and leaving all other parameters the same,
and choosing initial point $u_{0}=(0.5, 0.5, 0.5, 0.5)$
a transient chaotic behavior is observed on the time interval $[0,\; 3.49\times 10^{7}]$.
For $r_{1}=23.7$ with the same parameters and initial point a transient
chaotic behavior continues over a time interval of more than $[0,\;  10^{8}]$.
Thus, one can observe that a small increment of the bifurcation parameter $r_{1}$
leads to a large increment of the time of transient process.
For $r_{1}\in (23.7, 24.09)$, it is a tough problem to estimate
the time of transient process since the time for numerical integration is very large.
Therefore, inside this interval of parameters
for the particular trajectories
it is hard to say whether the chaotic behavior
is a long-hidden transient process, or a hidden sustained attractor.
Note also that, if we consider parameters
$r_{1}=23.5, \sigma=10, r_{2}=-e=0.001, b=\frac{8}{3}$ with the same initial
points and use \texttt{ode45} procedure with relative tolerance $10^{-6}$,
for all these initial points the
chaotic transient behavior will last over a time interval
of more than $[0,\; 10^{7}]$, and corresponding transient
chaotic sets won't collapse.
To conclude this numerical study of transient chaotic behavior,
we may suggest to specify precisely numerical solver,
its parameters, initial data and time interval, along which
the transient behavior continues.

Further in Appendices A-E, we study in detail the visualization
of hidden transient chaotic sets in the complex Lorenz system in the form~\eqref{13}.
We consider the "laser case", i.e., $e=-r_{2}$, when system~\eqref{13}
has the equilibrium $S_{0}$ and a circle of equilibria $S_{\theta}$.
Two special techniques,
i.e. numerical continuation method (NCM) and perpetual points method,
are utilized to visualize hidden sets.

\section{Conclusion}
In this work, the complex Lorenz system, describing
baroclinic instability in the atmosphere either the physics
of detuned lasers, is considered.
Analytical and numerical results demonstrate
that the complex Lorenz system's dynamics is very rich,
and, in comparison with the real Lorenz system,
the loss of global stability may be connected with
appearance of significantly different dynamical regimes.
As for the Lorenz system,
for the complex Lorenz system, on the one hand, it is demonstrated
the possibility of using the apparatus of Lyapunov functions
to estimate the boundary of global stability.
On the other hand, it is shown that this problem is connected with the localization of
sustained hidden chaotic attractors and their distinguishing from
the hidden long-term transient chaotic sets.
An attempt to study the problem of existence of sustained hidden chaotic attractor
in the complex Lorenz system is made
in the framework of a special analytical transformation,
tacking into account the symmetry of the model.

\section*{Acknowledgement}
This work is supported by the joint Grant from
the Department of Science \& Technology (DST, India)
and the Russian Science Foundation (RSF, Russia)
project 19-41-02002 (sects. I-IV) and
by the General Administration of Missions
(Ministry of Higher Education of Egypt).

\bibliographystyle{elsarticle-num}

\clearpage
\appendix

\section{Localization via numerical continuation method}\label{appendix:A}
One of the effective methods for numerical localization
of hidden attractors in multidimensional dynamical
systems is based on the homotopy and numerical continuation
method (NCM). It is based on the assumption
that the position of the attractor changes continuously
with the parameters changing. The idea is to construct
a sequence of similar systems such that for the first
(starting) system, the initial point for numerical computation
of oscillating solution (starting attractor) can
be obtained analytically. For example, it is often possible
to consider the starting system with a self-excited
starting attractor; then, the transformation of this starting
attractor in the phase space is tracked numerically
while passing from one system to another; the last system
corresponds to the system in which a hidden attractor
is searched \cite{KuznetsovMKKL-2020,KuznetsovLMPS-2018,LeonovKM-2015-CNSNS}.
\par In our experiment, we fix parameters $ \sigma, r_{2}, e, b$, define parameters $r_{1}=26.1-\varepsilon$ and $\sigma=\sigma^{\ast}=10$. For $r_{2}=0.001$, $e=-0.001$, $b=8/3$, $\varepsilon=0.7$, we obtain $r_{1}=r_{1}^{0}=25.4$ and take $P_{0}(r_{1}^{0},\sigma^{\ast})$ as
the initial point of line segment on the plane $(r_{1},\sigma)$.
The eigenvalues of the Jacobian matrix at the equilibria $S'_{0}$, $S'_{1}$  of system \eqref{projec1-system} for these parameters are the following:
\begin{equation*}
\begin{aligned}
S'_{0} :  2.00002, -0.1707, -0.7042, -0.7042,\\
S'_{1} : 0.0013 \pm  0.6238i, -0.8775, -0.7042.
\end{aligned}
\end{equation*}
\par Consider on the plane $(r_{1}, \sigma)$ a line segment, intersecting
a boundary of stability domain of the equilibria $S_{\theta}$ with the final point $P_{2}(r_{1}^{2},\sigma^{\ast})$, where $r_{1}^{2}=r_{1}^{0}-2\varepsilon=24$, i.e. the equilibrium $S_{\theta}$ changes from saddle point to stable focus-node
\begin{equation*}
\begin{aligned}
S'_{0} : 2.00003, -0.1758,  -0.7253,-0.7253 \\
S'_{1} :  -0.0015 \pm 0.6257i, -0.8982, -0.7253.
\end{aligned}
\end{equation*}

The initial point $P_{0}(r_{1}^{0},\sigma^{\ast})$ corresponds to the
parameters for which in the system \eqref{projec1-system}, there exists a self-excited
attractor. Then for the considered line segment,
a sufficiently small partition step is chosen. At each
iteration step of the procedure, an attractor in the phase
space of the system \eqref{projec1-system} is computed. The last computed
point at each step is used as the initial point for the
computation at the next step. In this experiment, we
use NCM with 3 steps on the path $P_{0}(r_{1}^{0},\sigma^{\ast})\rightarrow P_{1}(r_{1}^{1},\sigma^{\ast})\rightarrow P_{2}(r_{1}^{2},\sigma^{\ast}),$ with
$r_{1}^{1}=\frac{1}{2}(r_{1}^{0}+r_{1}^{2})$ (see Fig.~\ref{fig:path}). At the first step, we have
a self-excited attractor with respect to unstable equilibria $S_{0}$ and $S_{\theta}$; at the second step, the equilibria $S_{\theta}$ become
stable, but the attractor remains self-excited with respect to equilibrium $S_{0}$; at the third step,
it is possible to visualize a hidden chaotic set of system \eqref{projec1-system}
(see Fig.~\ref{fig:project:lorenz:attr:hid}).
In Fig.~\ref{fig:hidden:Loenz:project} visualizations of the hidden chaotic set
in both initial space $\mathcal{H}$ and projective space $\mathcal{P}$ are depicted.
\begin{figure}[!ht]
  \centering
  \includegraphics[width=\columnwidth]{./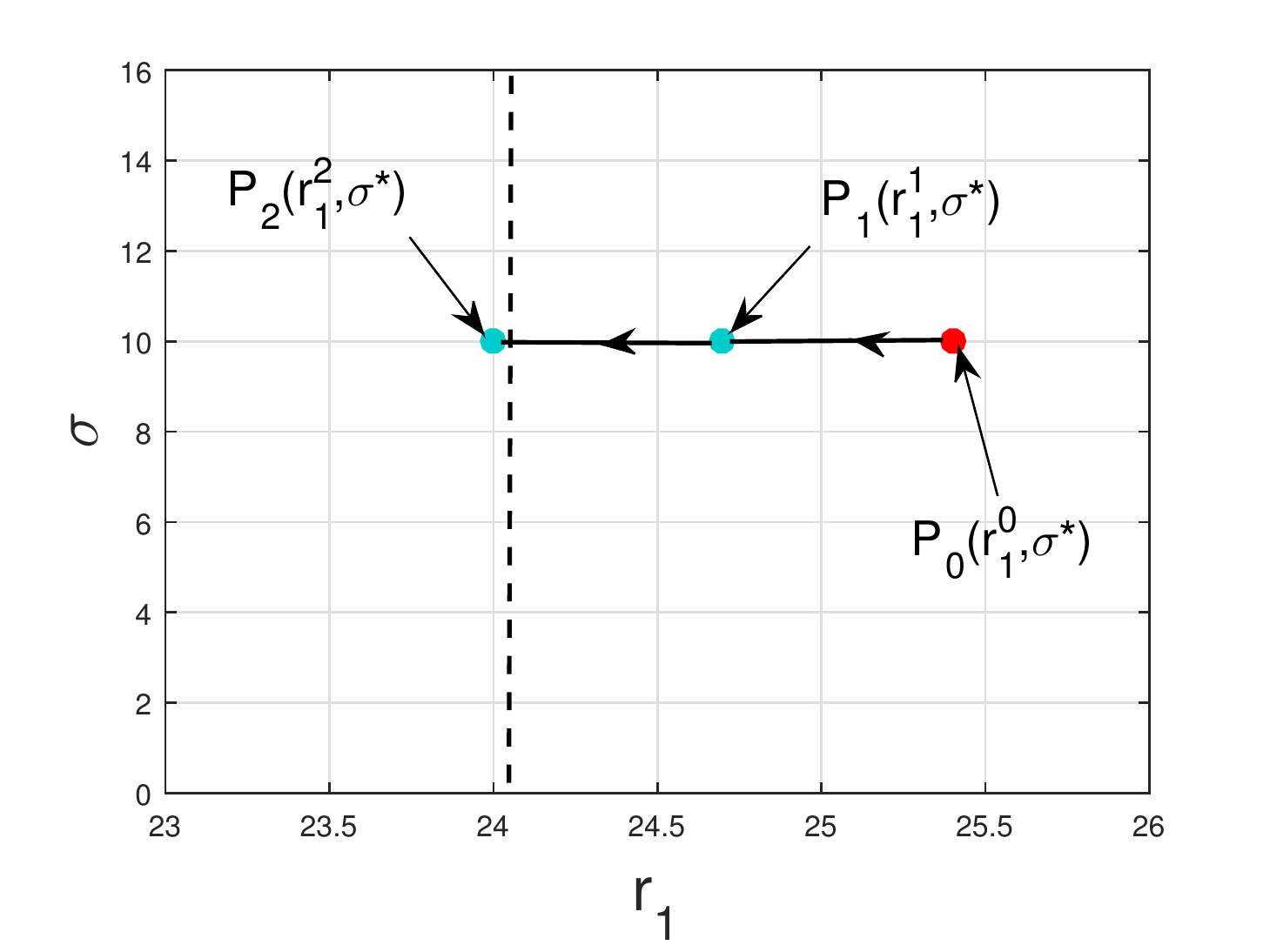}
  \caption{Path
  $P_{0}(r_{1}^{0},\sigma^{\ast}) \rightarrow
   P_{1}(r_{1}^{1},\sigma^{\ast}) \rightarrow P_{2}(r_{1}^{2},\sigma^{\ast})$,
   in parameters plane $(r_{1},\sigma)$ for the localization of hidden chaotic set
   in system~\eqref{projec1-system} with $r_{2}=0.001$, $e=-0.001$, $b=\frac{8}{3}$.
   Here $r_{1}^{0}=25.4, r_{1}^{1}=24.7, r_{1}^{2}=24, \sigma=\sigma^{\ast}=10$;\\
   $(\bullet)~P_{0}(r_{1}^{0},\sigma^{\ast})$ : self-excited attractor w.r.t. $S'_{0,1}$;\\
   $(\bullet)~P_{1}(r_{1}^{1},\sigma^{\ast})$ : self-excited attractor w.r.t. $S'_{0}$;\\
   $(\bullet)~P_{2}(r_{1}^{2},\sigma^{\ast})$ : hidden chaotic set.}
   \label{fig:path}
\end{figure}

\begin{figure*}[!ht]
  \centering
  \includegraphics[width=\textwidth]{./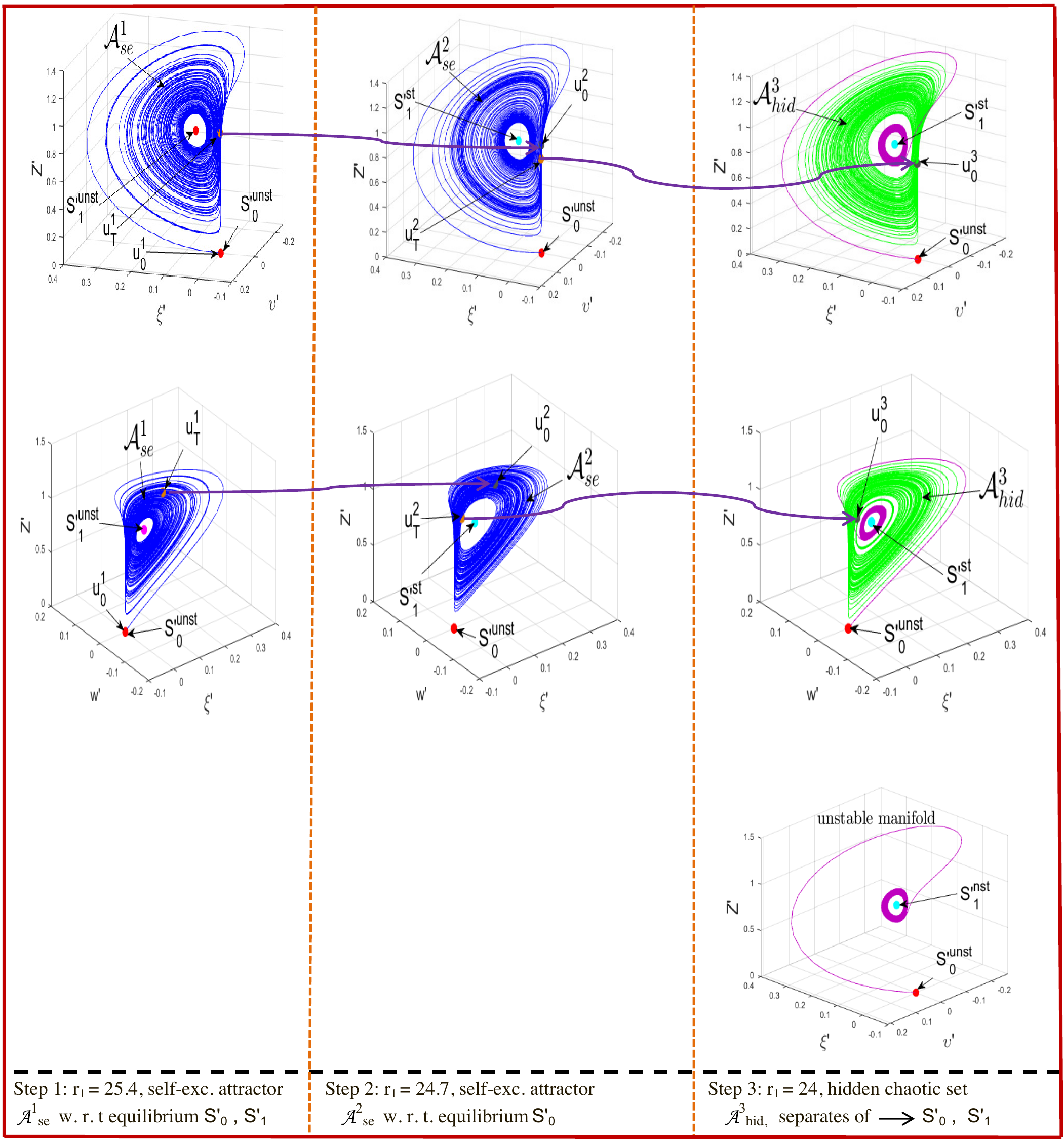}
  \caption{Localization, by NCM, of hidden chaotic set in system~\eqref{projec1-system}
  with $(\sigma, r_{1}, r_{2}, e, b)=( 10,  24, 0.001, -0.001, \frac{8}{3})$.
  Trajectories $ u^{i}(t)=(x_{1},x_{2},x_{3},x_{4},x_{5})$ (blue and green)
  are defined on the time interval $[0, T=10^{3}],$ and initial point (gray)
  on $(i+1)$-th iteration is defined as $u_{0}^{i+1}:=u^{i}_{T}$ (violet arrows),
  where $u^{i}_{T}=u^{i}(T)$ is a final point (orange).}
  \label{fig:project:lorenz:attr:hid}
\end{figure*}
 \begin{figure*}[!ht]
  \centering
  \subfloat[
  {\scriptsize Initial space $\mathcal{H}$ }
  ] {
  \label{fig:lorenz:attr:hid}
  \includegraphics[width=0.4\textwidth]{./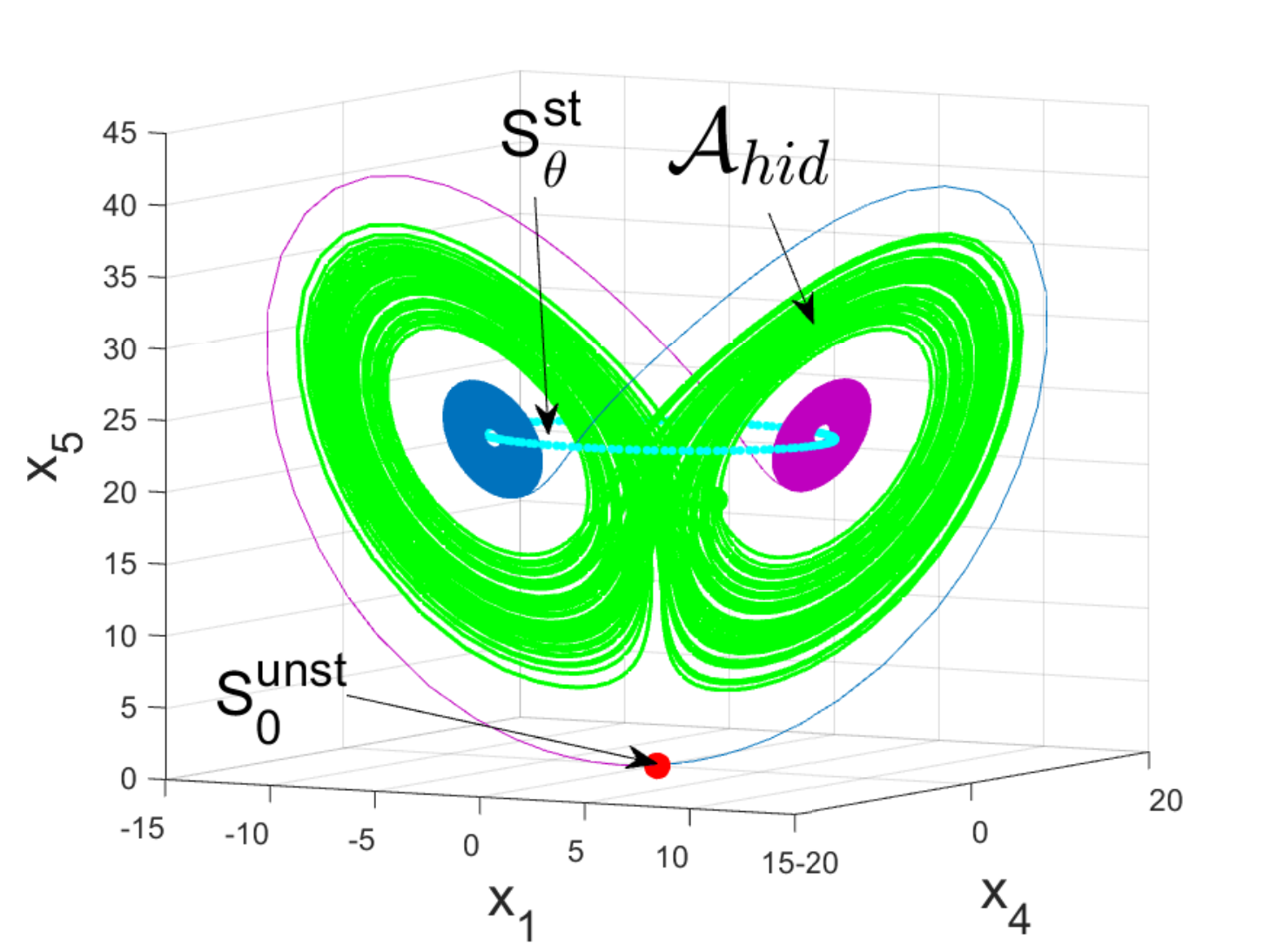}
  }~
  \subfloat[
  {\scriptsize Projective space $P$}
  ] {
  \label{fig:project_hidden1}
  \includegraphics[width=0.4\textwidth]{./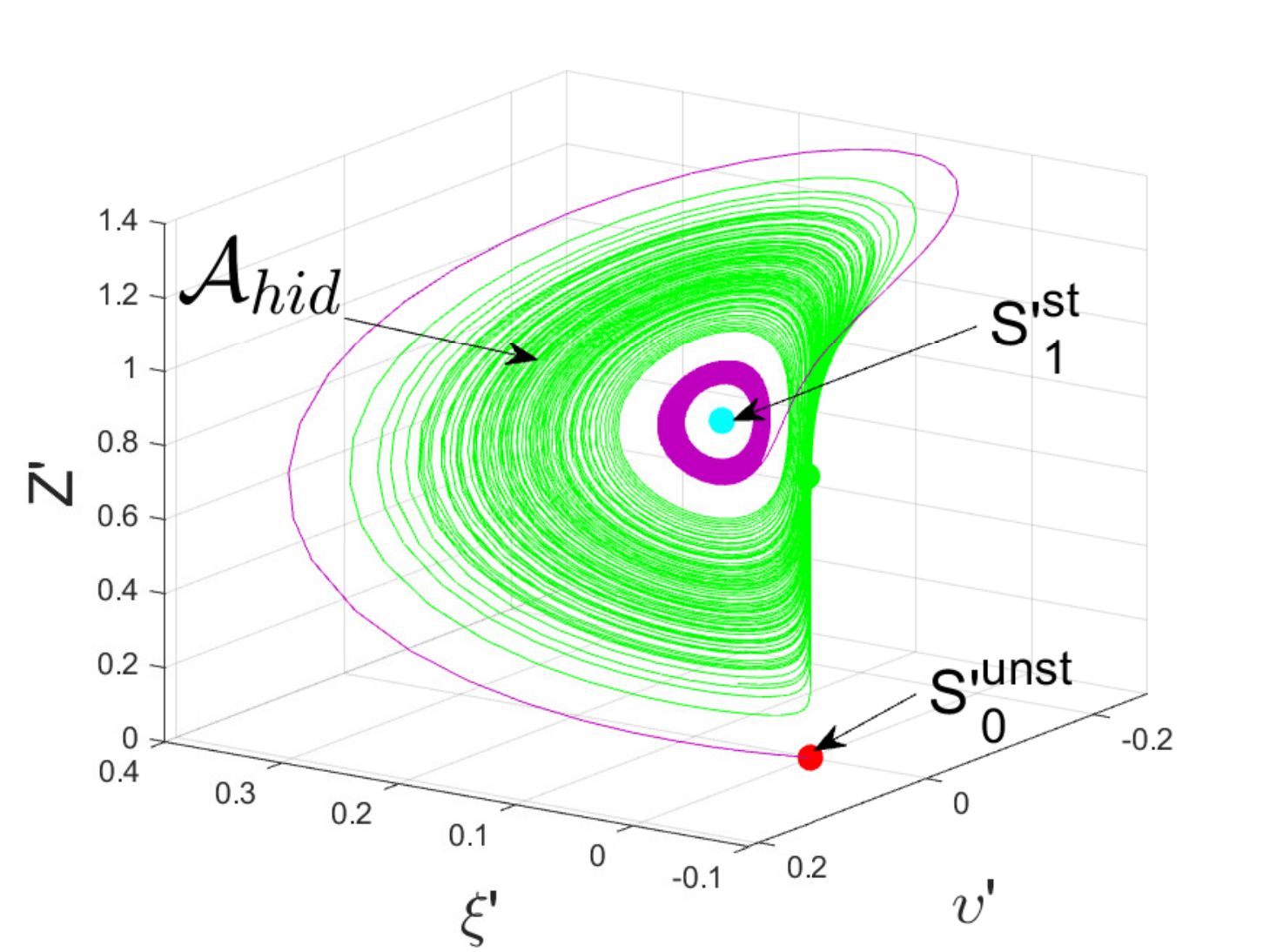}
  }
  \caption{
   Localization of hidden chaotic set in system (\ref{13}) and the corresponding image in system \eqref{projec1-system}
   for $(\sigma, r_{1}, r_{2}, e, b)=( 10, 24, 0.001, -0.001, 8/3)$.
  }
  \label{fig:hidden:Loenz:project}
 \end{figure*}

For $\sigma=10, r_{1}=24, r_{2}=0.002, e=-0.001, b=\frac{8}{3}$,
here the parameter $r_{2}$ is slightly different from that in the above case and $e\neq-r_{2}$.
Therefore $S_{0}$ is the only equilibria of the system \eqref{13}.
In this case, we can also visualize a hidden chaotic set in the system \eqref{13}.
To verify that the set in Fig.~\ref{fig:hiddem:1} is hidden,
around equilibrium $S_{0}$, we choose a small spherical
vicinity of radius $\delta=0.2$ and take $N$ random initial points on it
(in our experiment, $N = 100$ as in Fig.~\ref{fig:sphereinitialpoints}).
Using MATLAB, we integrate system \eqref{13} with these initial points to
explore the obtained trajectories.
We repeat this procedure several times in order to get different initial
points for trajectories on the sphere.
We get the following results: all the obtained trajectories tend to
the torus and do not tend to the chaotic set (see Fig.~\ref{fig:spheretorus}),
this torus is closer\footnote{
  The distance between the torus and the equilibria $S_{\theta}$ can be estimated
  by considering the last point on torus and plane,
  which contains the set of equilibria $S_{\theta}$.
  If we consider in $\mathbb{R}^{5}$ the point $P(x_{1}^{\ast}, x_{2}^{\ast},
  x_{3}^{\ast}, x_{4}^{\ast}, x_{5}^{\ast})$ and the plane
  $A(x_{1}- x_{01})+B(x_{2}-x_{02})+C(x_{3}-x_{03})+D(x_{4}-x_{04})+E(x_{5}-x_{05})=0$,
  then the distance from the point $P$ to the plane is:
  $$
    d = \tfrac{|A x_{1}^{\ast}+B x_{2}^{\ast}+C x_{3}^{\ast}+D x_{4}^{\ast}+E x_{5}^{\ast}+F|}
    {\sqrt{A^{2}+B^{2}+C^{2}+D^{2}+E^{2}}}.
  $$
  To determine the equation of a plane in $\mathbb{R}^{5}$
  one can consider five points, which do not lie on the same straight line.
  The normal vector to a plane can be obtained by taking
  the cross product of four vectors on this plane.
  In our case equation of the plane which contains the set of equilibria $S_{\theta}$ is:
  $x_{5}-23=0$.
  From Fig.~\ref{fig:self:torus} the last point on the
  torus is $P(-2.5011, 7.6357, -2.5958, 7.9228, 22.8110)$,
  then the distance from this point to the plane $x_{5}-23=0$ is:
  $d = 0.1890$.
 } to the set of equilibria $S_{\theta}$ (see Fig.~\ref{fig:self:torus}).
 This gives us a reason to classify the chaotic set,
 obtained in the system~\eqref{13}, as the hidden one (see Fig.~\ref{fig:hiddem:1}).
 \begin{figure*}[!h]
  \centering
  \subfloat[
  {\scriptsize }
  ] {
  \label{fig:self:torus}
  \includegraphics[width=0.4\textwidth]{./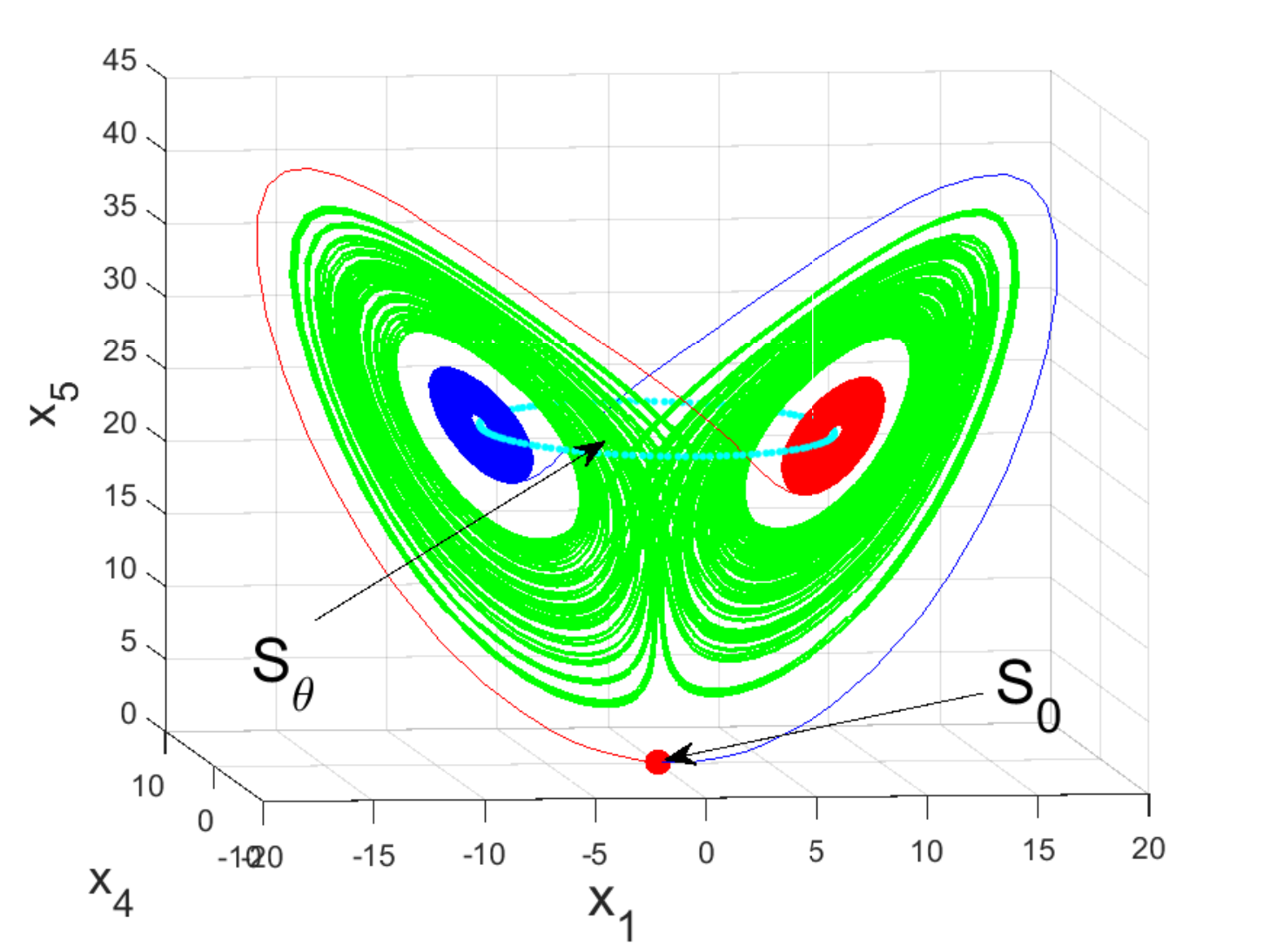}
  }~
  \subfloat[
  {\scriptsize }
  ] {
  \label{fig:hiddem:1}
  \includegraphics[width=0.4\textwidth]{./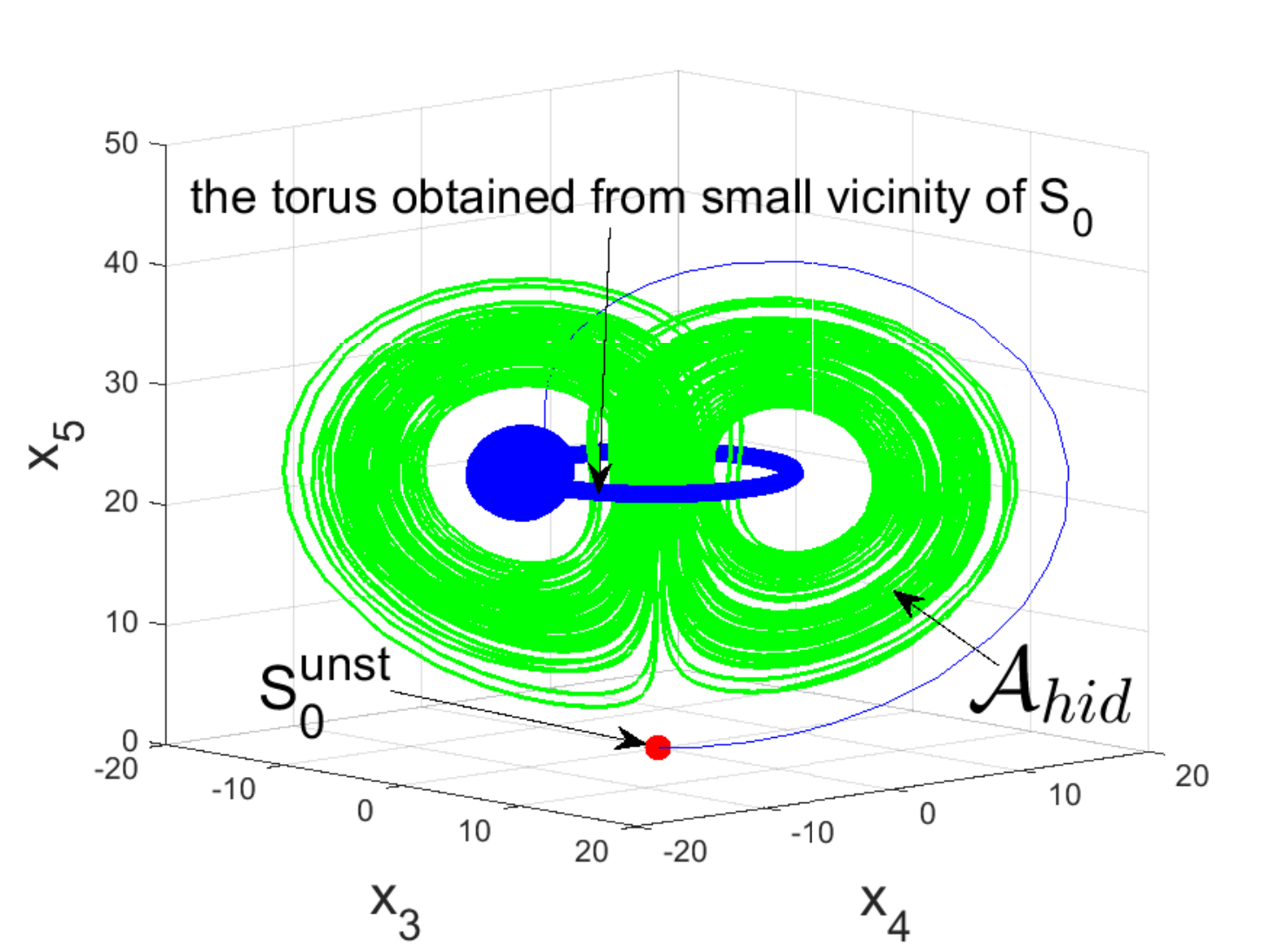}
  }
  \caption{
   (a) A trajectory that attracts to a torus (blue) in the time interval $[0, 7000]$,
   which closer to the equilibria $S_{\theta}$ (magenta)
   (b) Visualization of a hidden chaotic set in system \eqref{13} with time interval $[0, 100]$,
   $(\sigma, r_{1}, r_{2}, e, b)=(10,  24, 0.002, -0.001, \frac{8}{3})$
   and the initial point is $(5, 5, 5, 5, 5)$.
  }
  \label{fig:lorenz:attr:hid1}
 \end{figure*}
 \begin{figure*}[!h]
  \centering
  \subfloat[
  {\scriptsize }
  ] {
  \label{fig:spheretorus}
  \includegraphics[width=0.4\textwidth]{./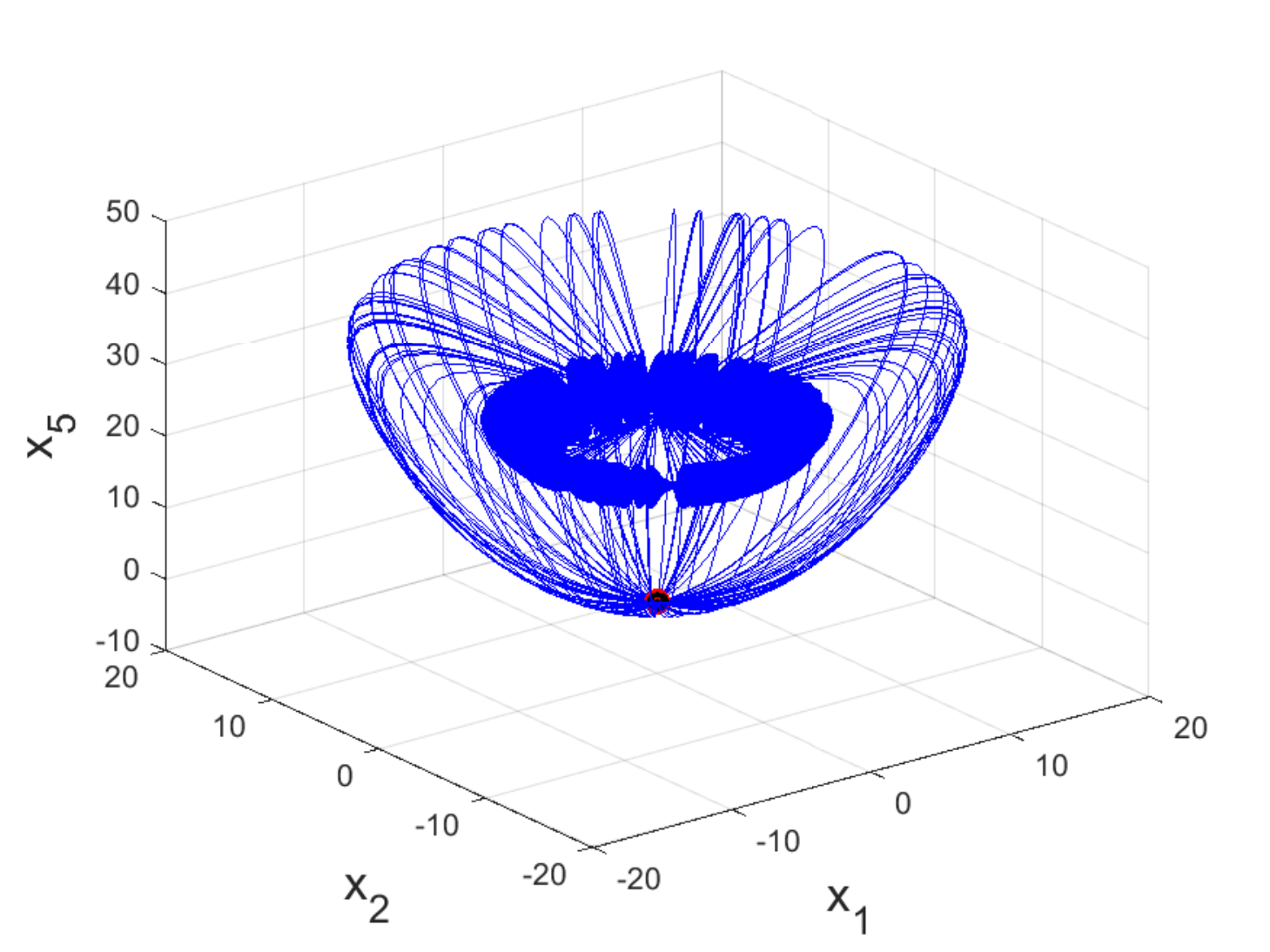}
  }~
  \subfloat[
  {\scriptsize }
  ] {
  \label{fig:sphereinitialpoints}
  \includegraphics[width=0.4\textwidth]{./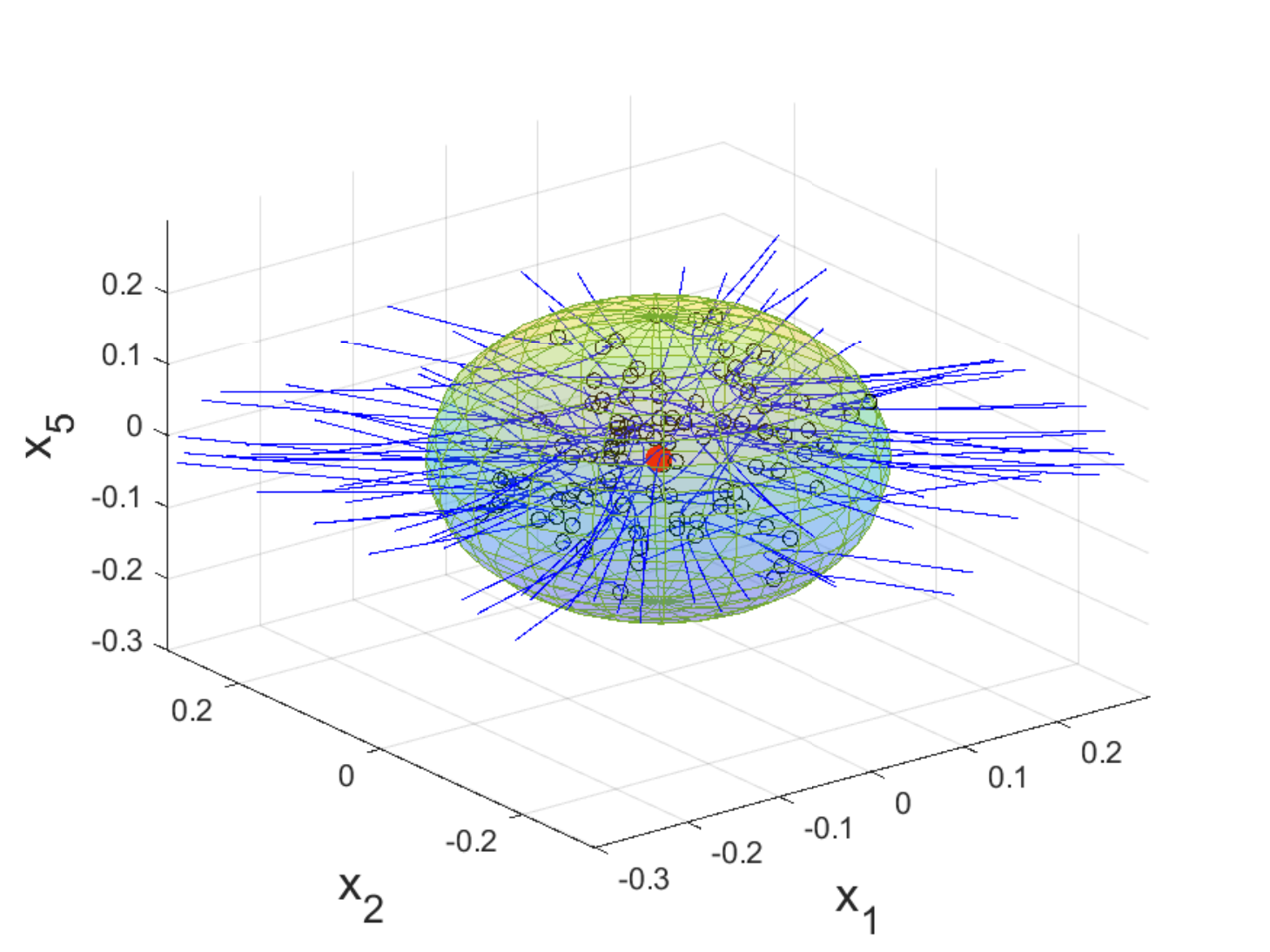}
  }
  \caption{(a) Visualization of trajectories that tend to the torus in the time interval $[0, 7000],$ with starting points from a spherical of random points in a vicinity of $S_{0}$; (b) Magnification of the area around $S_{0}$ to show the spherical of random points and the considered 100 trajectories that tend to the torus.}
  \label{fig:lorenz:spheretorus}
 \end{figure*}

\clearpage
\section{Localization using perpetual points}
Consider system \eqref{13} as an autonomous differential
equation of general form:
\begin{equation}\label{18}
  \dot{u}=f(u),
\end{equation}
where $u=(x_{1},x_{2},x_{3},x_{4},x_{5}) \in \mathbb{R}^{5}$ represents
the right-hand side of system \eqref{13}.
\par The equilibrium points of a dynamical system are the
ones at which the velocity and acceleration of the system
simultaneously become zero.~In this subsection, we
state that there are points, termed as perpetual points
\cite{Prasad-2015}, which may help to visualize hidden attractors.
\par For system \eqref{18}, the equilibrium points $u_{ep}$ are
defined by the equation~$ \dot{u} = f (u_{ep}) = 0$. Consider
a derivative of system \eqref{18} with respect to time:
\begin{equation}\label{perpet}
\ddot{u}=J(u)f(u)=g(u),
\end{equation}
where $J(u)=\big[\frac{\partial f_{i}(u)}{\partial u_{j}}\big]_{i,j=1}^{5}$ is the Jacobian matrix.
Here, $g(u)$ may be termed as an acceleration vector.
System~\eqref{perpet} shows the variation of acceleration in the
phase space. Similar to the equilibrium points estimation, where
we set the velocity vector to zero, we can also get a set
of points, where $\ddot{u}=g(u_{pp})=0$ in \eqref{perpet}, i.e. the points
corresponding to the zero acceleration. At these points,
the velocity  $\dot{u}$ may be either zero or nonzero. This set
includes the equilibrium points $u_{ep}$ with zero velocity
as well as a subset of points with nonzero velocity.
These nonzero velocity points $u_{pp}$ are termed as perpetual
points \cite{DudkowskiJKKLP-2016,Prasad-2015,Prasad-2016}.
\begin{remark}
The analytical formula of the perpetual points of system~\eqref{13} can not be derived,
meanwhile the numerical solution of the algebraic system $\ddot{u}=g(u_{pp})=0$
is $x_{1}=x_{2}=x_{3}=x_{4}=x_{5}=0$, which coincide with the equilibrium $S_{0}$,
so system~\eqref{13} has no perpetual points.
For system~\eqref{projec1-system}, which structure seems simpler than the original system \eqref{13},
$x'_{1}=x'_{2}=x'_{3}=x'_{4}=x'_{5}=0$ is the only solution of the system $\ddot{u}=g(u_{pp})=0$,
thus, there are also no nonzero velocity points $u_{pp}$.
\end{remark}

\section{Numerical verification of basins of attraction near the zero equilibrium point}

For the system \eqref{13} the eigenvalues at $S_{0}$ are: $10.6323 \pm 0.0003i, -21.6323 \pm 0.0007i,  -2.6667.$ So the equilibria $S_{0}$ has 2-dimensional unstable manifold. For the fist two eigenvalues $\lambda_{1,2}=10.6323 \pm 0.0003i$ with positive real part, the corresponding eigenvectors are:
$v_{1}=(-0.3084,-0.3084i,-0.6363,-0.6363i,0)$ and $v_{2}=(-0.3084,0.3084i,-0.6363,0.6363i,0)$ respectively. In the case of complex eigenvalues we can use the following formula to plot trajectories in a vicinity of $S_{0}$
\begin{equation}\label{trajectories}
 u(t)=\Omega~e^{\alpha t}[\eta \sin(\beta t+\delta)+\mu \cos(\beta t+\delta)],
\end{equation}
where $\Omega$ is a constant that represent the size of the vicinity and should be small, $\alpha$, $\beta$ are real and imaginary parts of the eigenvalues respectively, $\delta$ is an auxiliary angle, $\eta ={\rm Re}~v_1$ and $\mu = {\rm Im}~v_1$.
Initial points of unstable manifolds of $S_{0}$ can be obtained by putting $t=0$ in Eq.~(\ref{trajectories}) as
\begin{equation}\label{initialpoints}
  u(0)=\Omega [\eta \sin(\delta)+\mu \cos(\delta)],
\end{equation}
In Fig.~\ref{fig:localstableunstabe} we plotted local stable, unstable manifolds of $S_{0}$ and random initial points on local 2-D unstable manifolds of $S_{0}$.
Fig.~\ref{fig:Projectionsx1x2x5} shows unstable manifolds with random initial points from local 2-D unstable manifolds of $S_{0}$ (in our experiment, we choose 100 random points for the angle $\delta$ inside interval $(0,2\pi)$ and fix the size of vicinity as $\Omega=0.5$. We
repeat this procedure several times in order to get different initial points). From Fig.~\ref{fig:Projectionsx1x2x5} one can see that, these unstable manifolds form by their scrolling a tube around the circle of equilibria $S_{\theta}$. This experiment confirms that all trajectories which start from a vicinity of the unstable equilibria $S_{0}$ go to the circle of equilibria $S_{\theta}$, and this means that the green set in Fig.~\ref{fig:lorenz:attr:hid} is not self-excited. We can do the same experiment in the case of the torus as in Fig~\ref{fig:hiddem:1}. In Fig.~\ref{fig:unstablemanifoldestoruses} we plotted trajectories  with random initial points on local 2-D unstable manifolds of $S_{0}$, which tend to the torus. So the green set in Fig~\ref{fig:hiddem:1} is not self-excited. In order to demonstrate separation between basins of attraction of $S_{\theta}$ and the green set (as in Fig.~\ref{fig:lorenz:attr:hid}) which is located between the blue tube that is formed by scrolling of these trajectories and the domain which is defined by separatricies. We need not only to separate basin of attraction of this green set from these blue manifolds, but also from trajectory going from infinity to $S_{0}$ which can be derived analytically by formula $u(t) = (0, 0, 0, 0, z_0 \exp(-b t))$, $z_0\in \mathbb{R}$. Because of the system \eqref{13} is 5-dimensional it is difficult to show such separation, so we can say that the green set it is possible to be an attractor.
\begin{figure}[!ht]
  \centering
  \includegraphics[width=\columnwidth]{./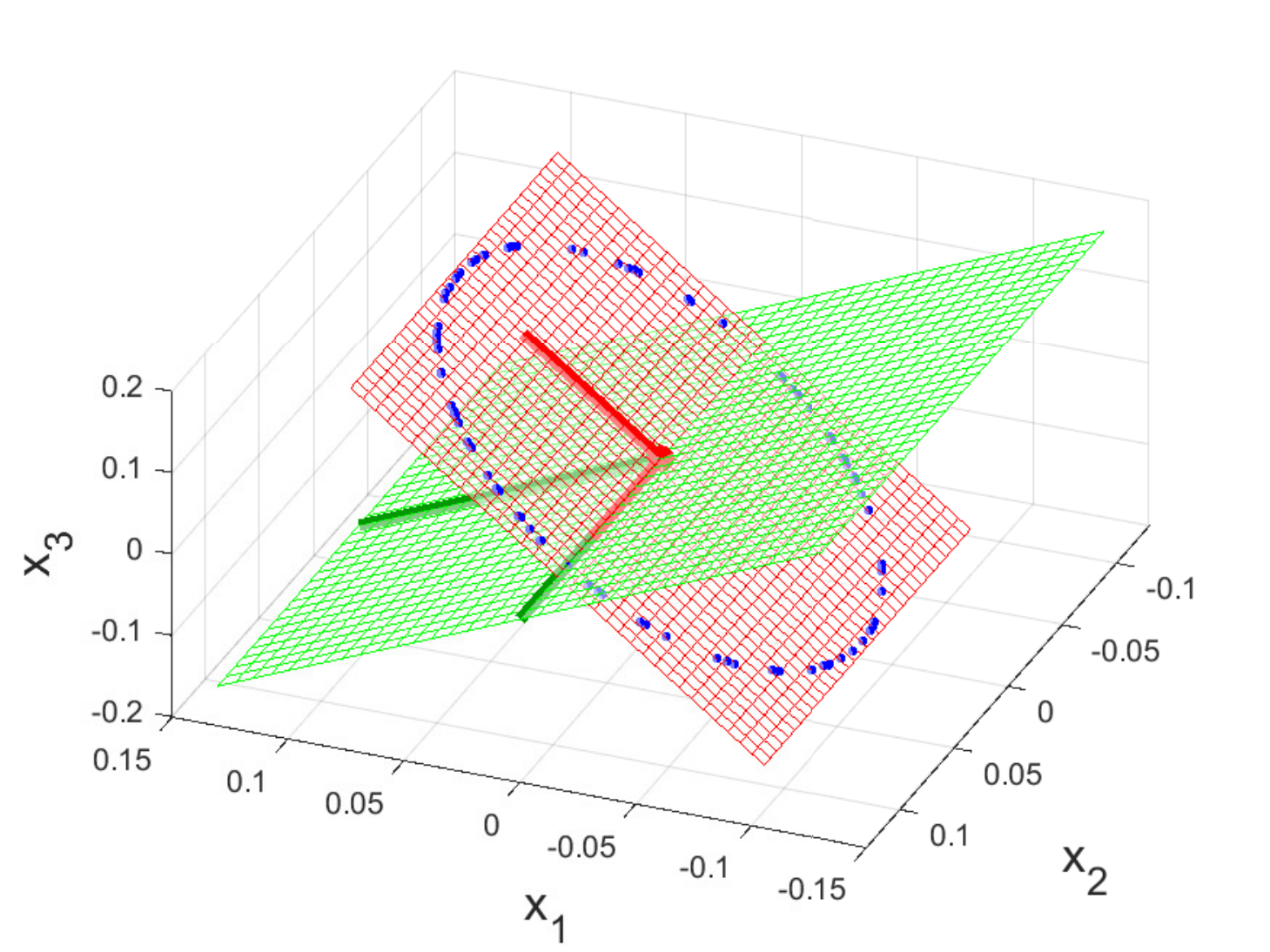}\\
  \caption{Visualization of local stable, unstable manifolds of $S_{0}$
  and random initial points on local 2-D unstable manifolds of $S_{0}$.
  (Red) local 2-D unstable manifold corresponding to $\lambda_{1,2}=10.6323 \pm 0.0003i$,
  (Green) local 2-D stable submanifold corresponding to $\lambda_{3,4}=21.6323 \pm 0.0007i$.}
   \label{fig:localstableunstabe}
\end{figure}
\begin{figure}[!ht]
  \centering
  \includegraphics[width=\columnwidth]{./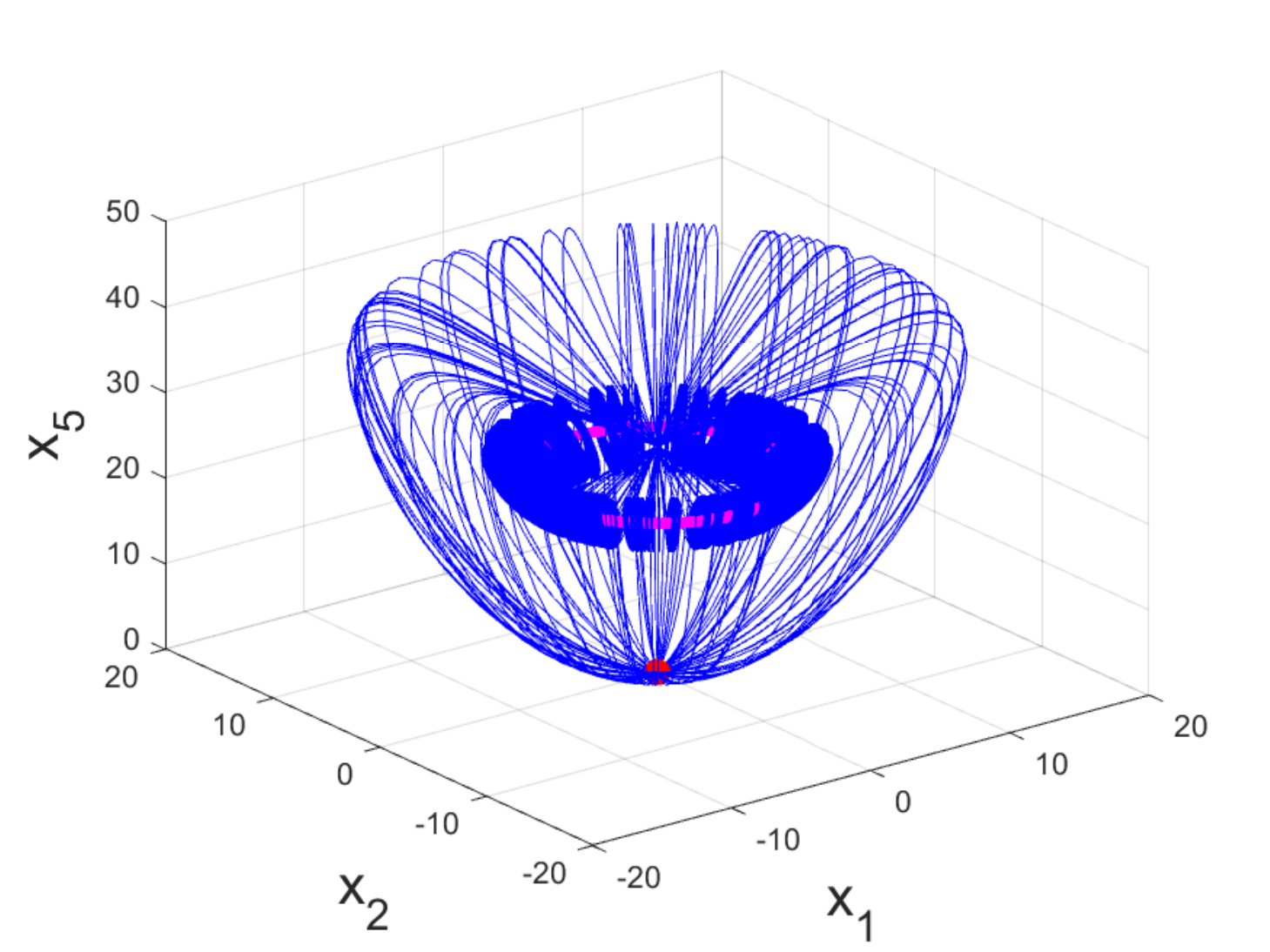}
  \caption{Visualization of unstable manifolds in system~\eqref{13} with
  $\sigma=10$, $r_{1}=24$, $r_{2}= -e = 0.001$, $b=\frac{8}{3}$
  and $100$ random initial points in a vicinity of $S_{0}$,
  that tend to the circle of equilibria $S_{\theta}$.}
  \label{fig:Projectionsx1x2x5}
\end{figure}
 \begin{figure}[!ht]
  \centering
  \includegraphics[width=\columnwidth]{./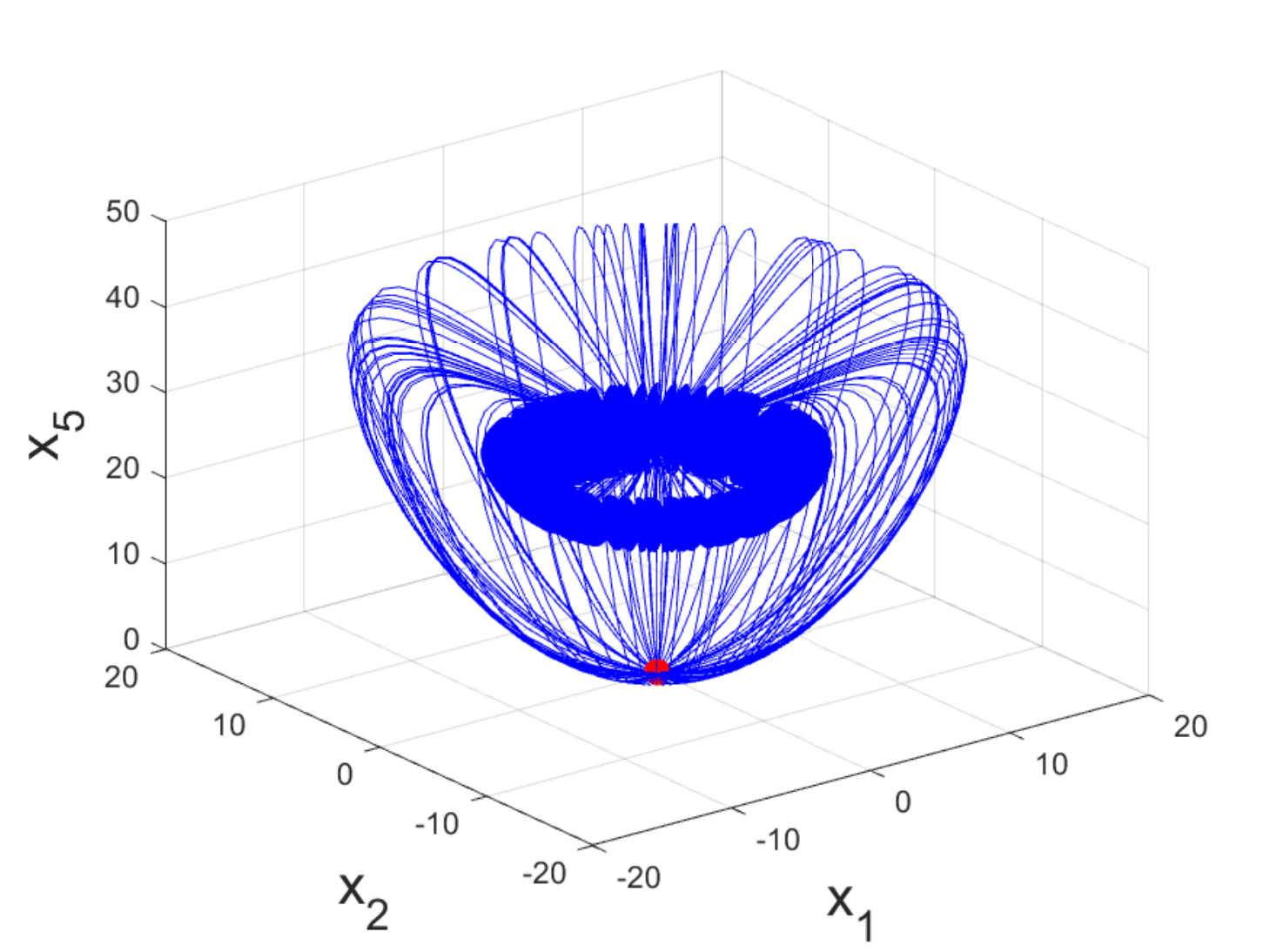}
  \caption{Visualization of trajectories in the time interval $[0, 7000]$
  with in system \eqref{13} with $\sigma=10$, $r_{1}=24$, $r_{2}=0.002$, $e=-0.001$, $b=\frac{8}{3}$
  and $100$ random initial points in a vicinity of $S_{0}$ tending to the torus.}
   \label{fig:unstablemanifoldestoruses}
\end{figure}

\section{Numerical verification of basins of attraction around the set of equilibria}
In order to numerically verify the attractiveness of the chaotic set
the following numerical experiment can be used.
In this experiment, we consider initial points on
the various line segments in the vicinity around
the set of equilibria $S_{\theta}$
in Figs.~\ref{fig:lorenz:attr:hid} and~\ref{fig:hiddem:1}
(see Fig.~\ref{fig:white region}).
From Fig.~\ref{fig:white region} we can conclude the following:
around the equilibria $S_{\theta}$ there is a vicinity such that
trajectories starting from this vicinity approach equilibria $S_{\theta}$
and outside it approach attractor.
For Fig.~\ref{fig:hiddem:1} the vertical and horizontal diagonals are considered
with radius $2.5$ and $2$ respectively and partition step $0.01$.
And for the others line segments as in Figs.~\ref{fig:notdiagonalfar}
and~\ref{fig:notdiagonalnear} we used the same partition step.
In Fig.~\ref{fig:twotrajec} we plotted two trajectories with initial points
from blue and red parts of vertical diagonal.
The same procedure can be used for Fig.~\ref{fig:lorenz:attr:hid} with vertical
and horizontal diagonals have radius $3$ and $2.5$ respectively and partition step $0.01$
on all considered line segments.
Fig.~\ref{fig:twotrajec1} shows two trajectories with starting points
from blue and red parts of vertical diagonal.
In addition, another procedure can be used to verify that the green
set in Fig.~\ref{fig:lorenz:attr:hid} not fall on the circle of equilibria $S_{\theta}$.
Around one scroll of trajectory of an unstable manifold from its outer sides
we consider line segments with length 5 and partition step 0.01 as in Fig.~\ref{fig:outerlines}.
In Fig.~\ref{fig:trajcfromouterlines} we plotted a trajectory with starting point
from the upper outer line segment.
\begin{figure*}[!ht]
 \centering
 \subfloat[
 {\scriptsize The starting trajectories from all points lie on blue parts and red parts of these diagonals approach attractor and equilibria respectively}
 ] {
 \label{fig:diagonallines}
 \includegraphics[width=0.3\textwidth]{./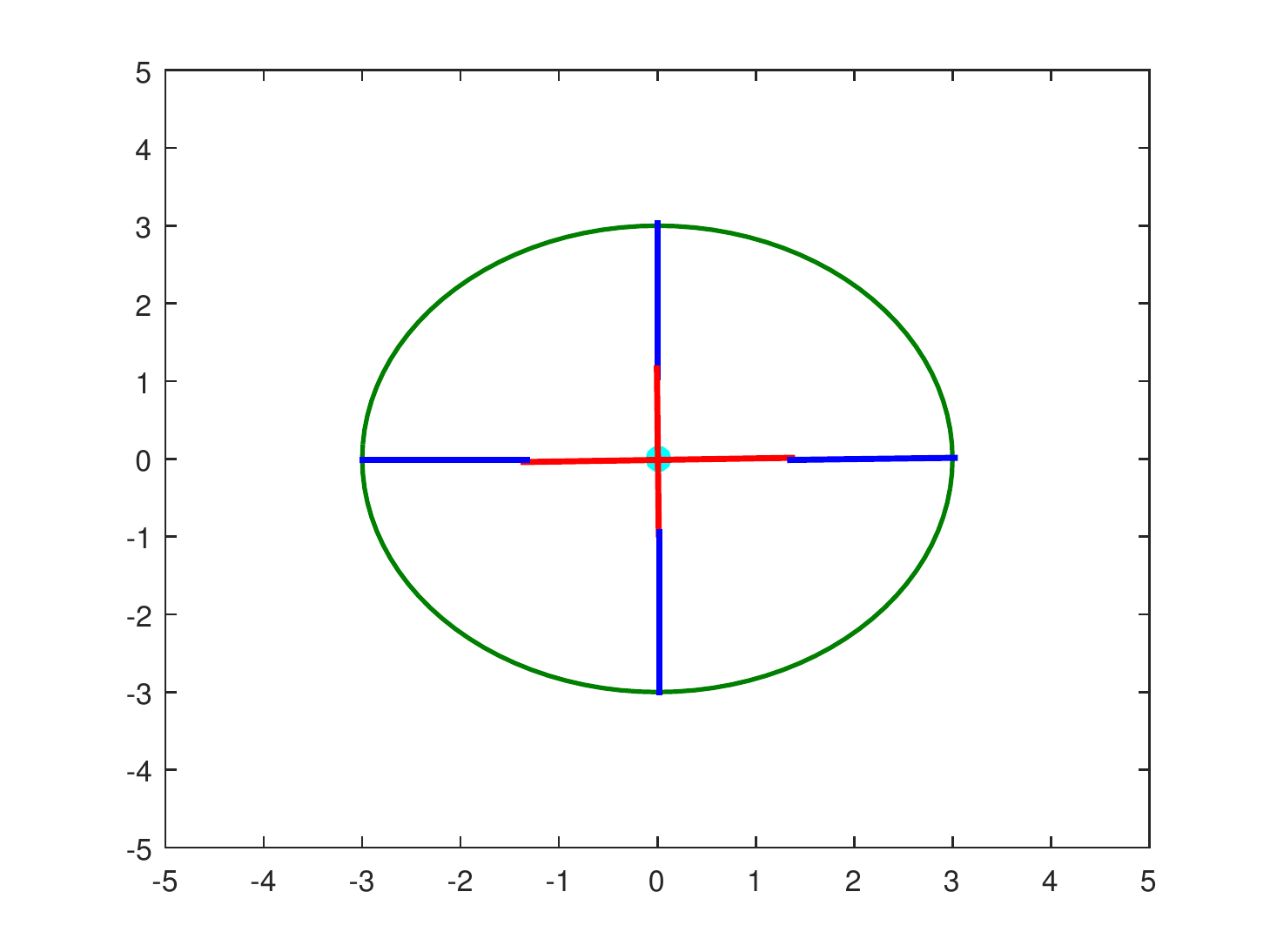}
 }~
 \subfloat[
 {\scriptsize The starting trajectories from all points lie on these blue line segments approach attractor  }
 ] {
 \label{fig:notdiagonalfar}
 \includegraphics[width=0.3\textwidth]{./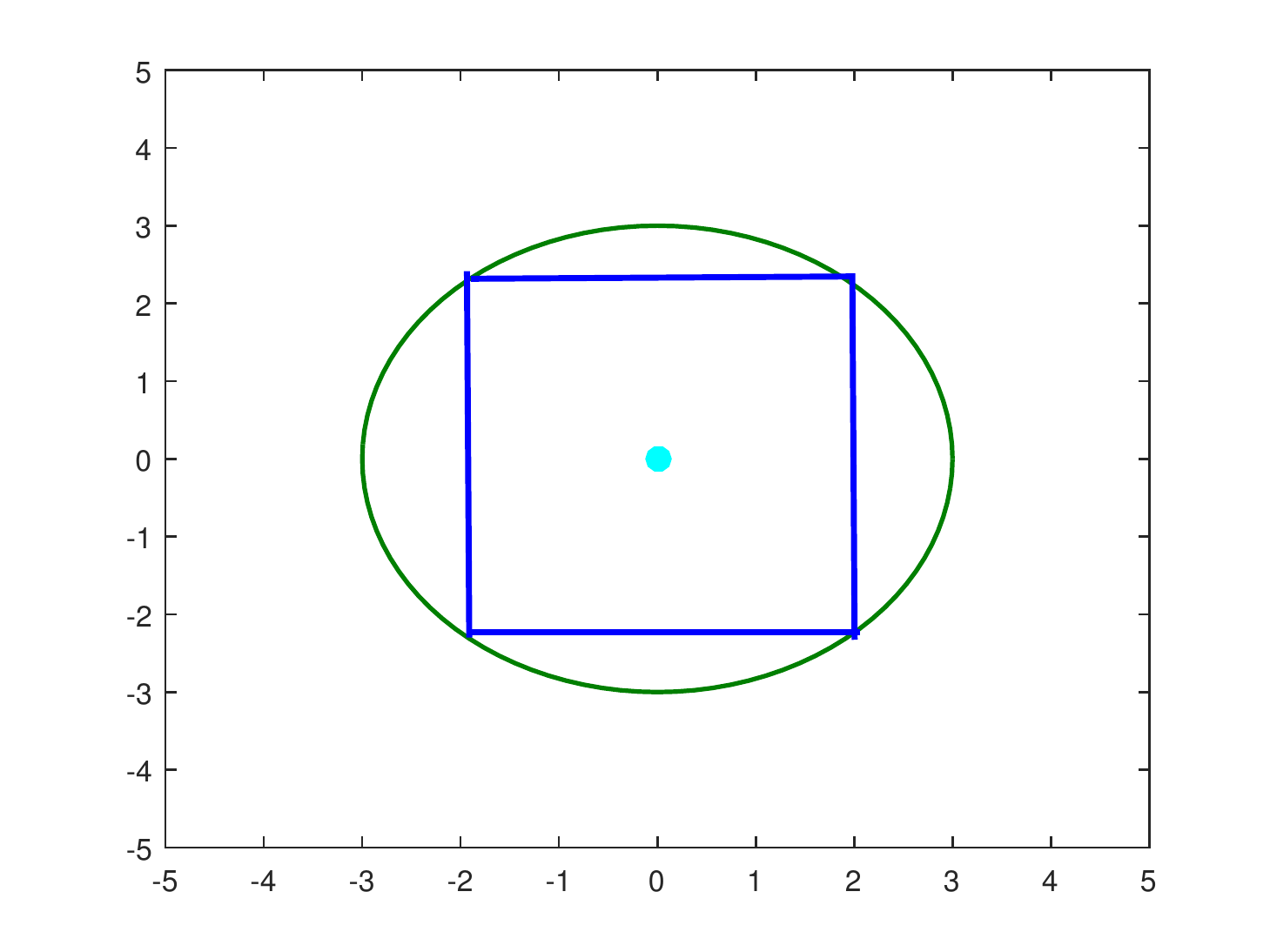}
 }
 \subfloat[{\scriptsize The starting trajectories from all points lie on these red line segments approach equilibria}
 ] {
 \label{fig:notdiagonalnear}
 \includegraphics[width=0.3\textwidth]{./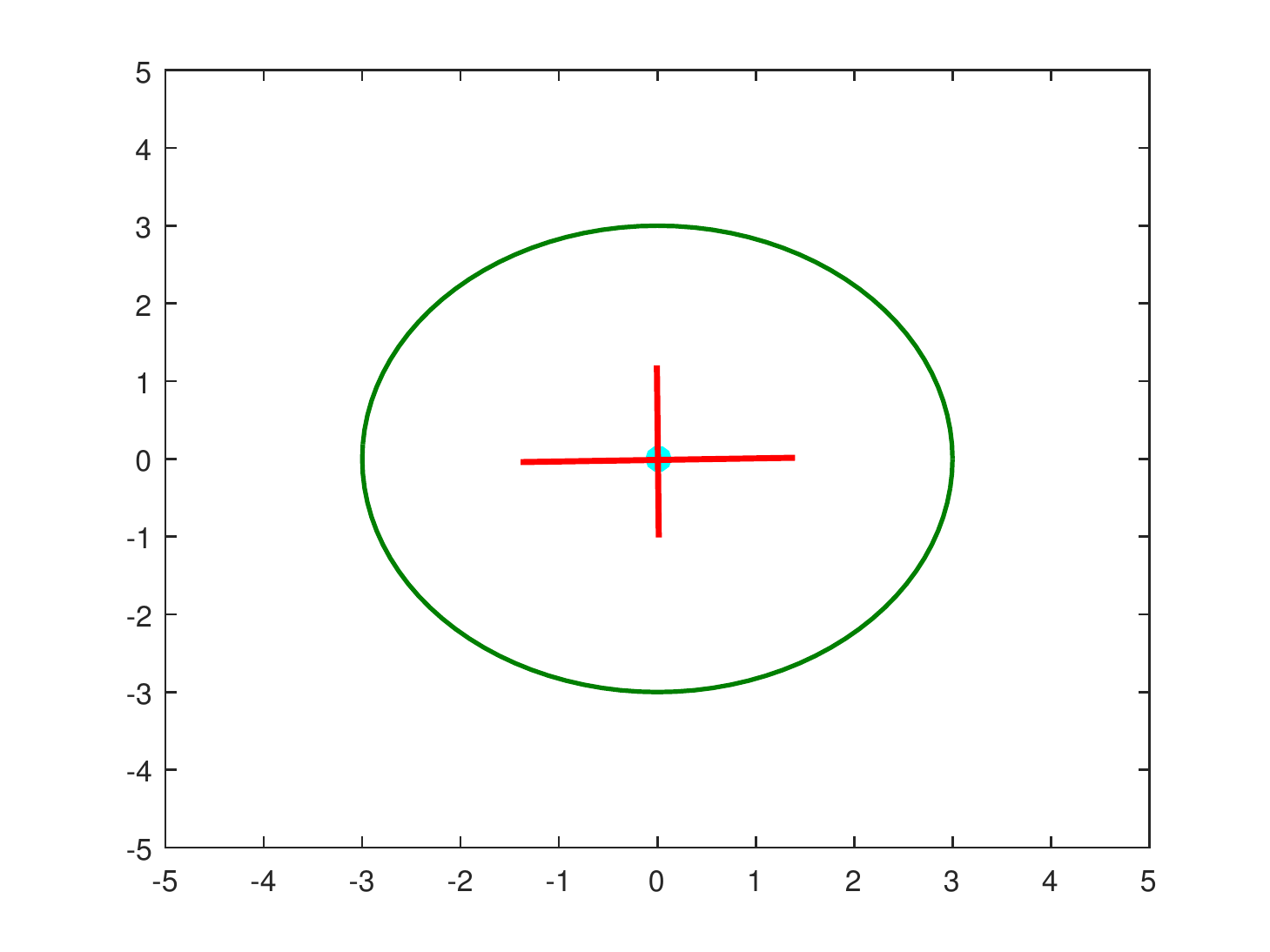}
 }
 \caption{
 Visualization of the white region around the set of equilibria $S_{\theta}$ in Figs.~\ref{fig:lorenz:attr:hid1} and ~\ref{fig:lorenz:attr:hid} and the considered line segments.
 }
 \label{fig:white region}
\end{figure*}
 \begin{figure*}[!ht]
  \centering
  \subfloat[
  {\scriptsize }
  ] {
  \label{fig:trajcfromblue}
  \includegraphics[width=0.4\textwidth]{./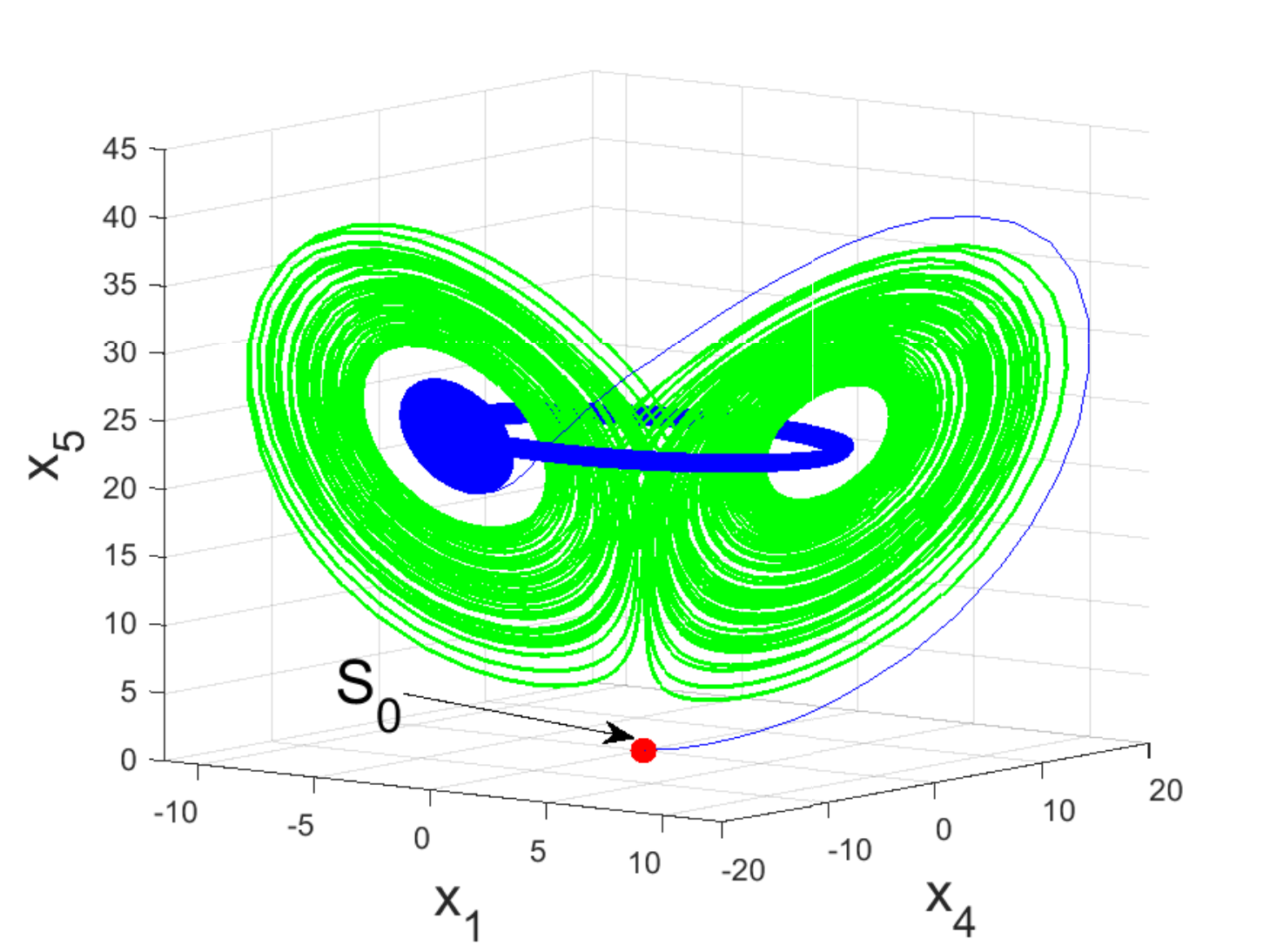}
  }~
  \subfloat[
  {\scriptsize }
  ] {
  \label{fig:trajcfromred}
  \includegraphics[width=0.4\textwidth]{./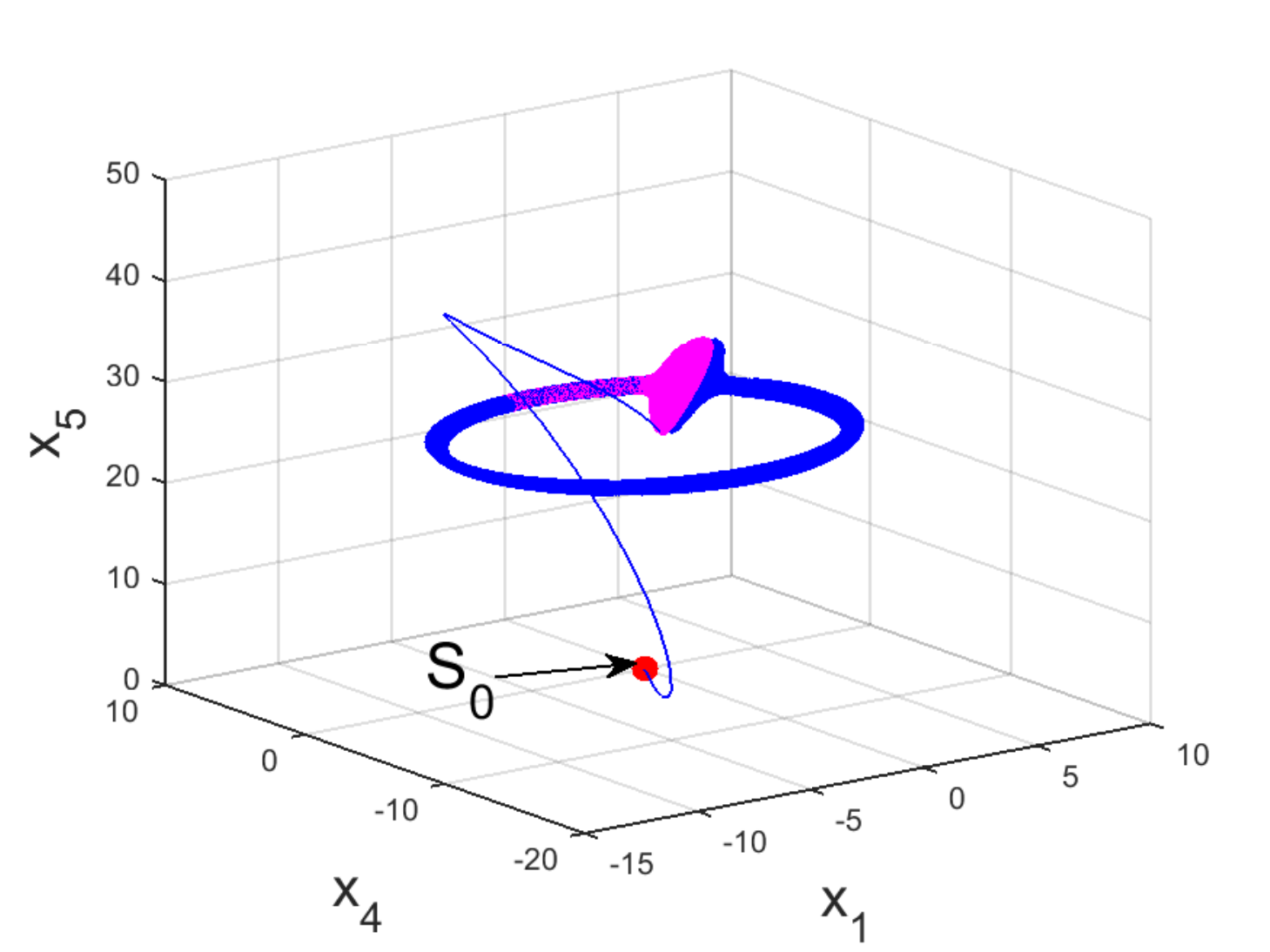}
  }
  \caption{
   Visualization of two trajectories with initial points from blue and red  parts of vertical diagonal in the white gape of Fig. \ref{fig:lorenz:attr:hid1}.
  }
  \label{fig:twotrajec}
 \end{figure*}
 \begin{figure*}[!ht]
  \centering
  \subfloat[
  {\scriptsize }
  ] {
  \label{fig:trajcfromblue1}
  \includegraphics[width=0.4\textwidth]{./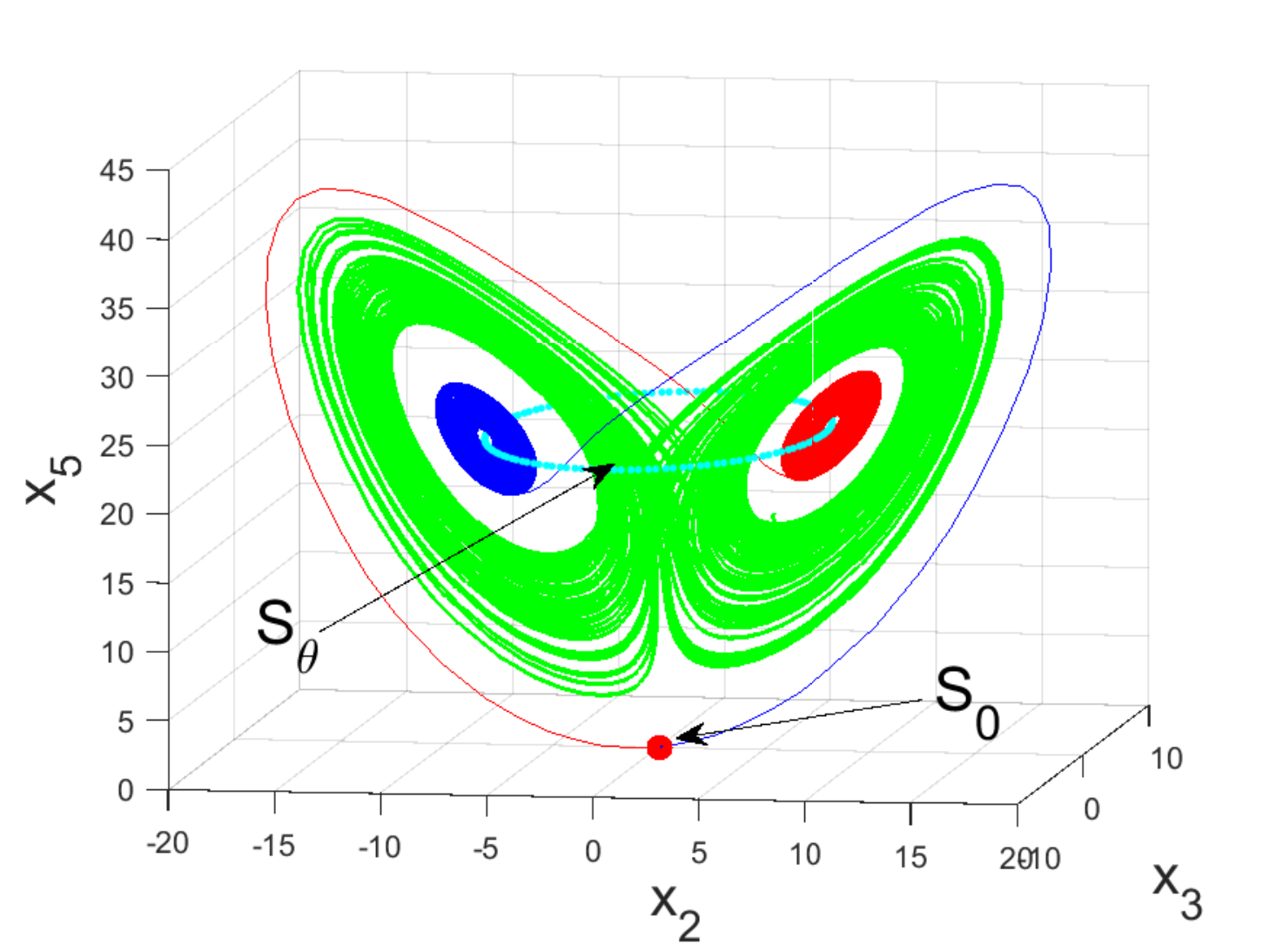}
  }~
  \subfloat[
  {\scriptsize }
  ] {
  \label{fig:trajcfromred1}
  \includegraphics[width=0.4\textwidth]{./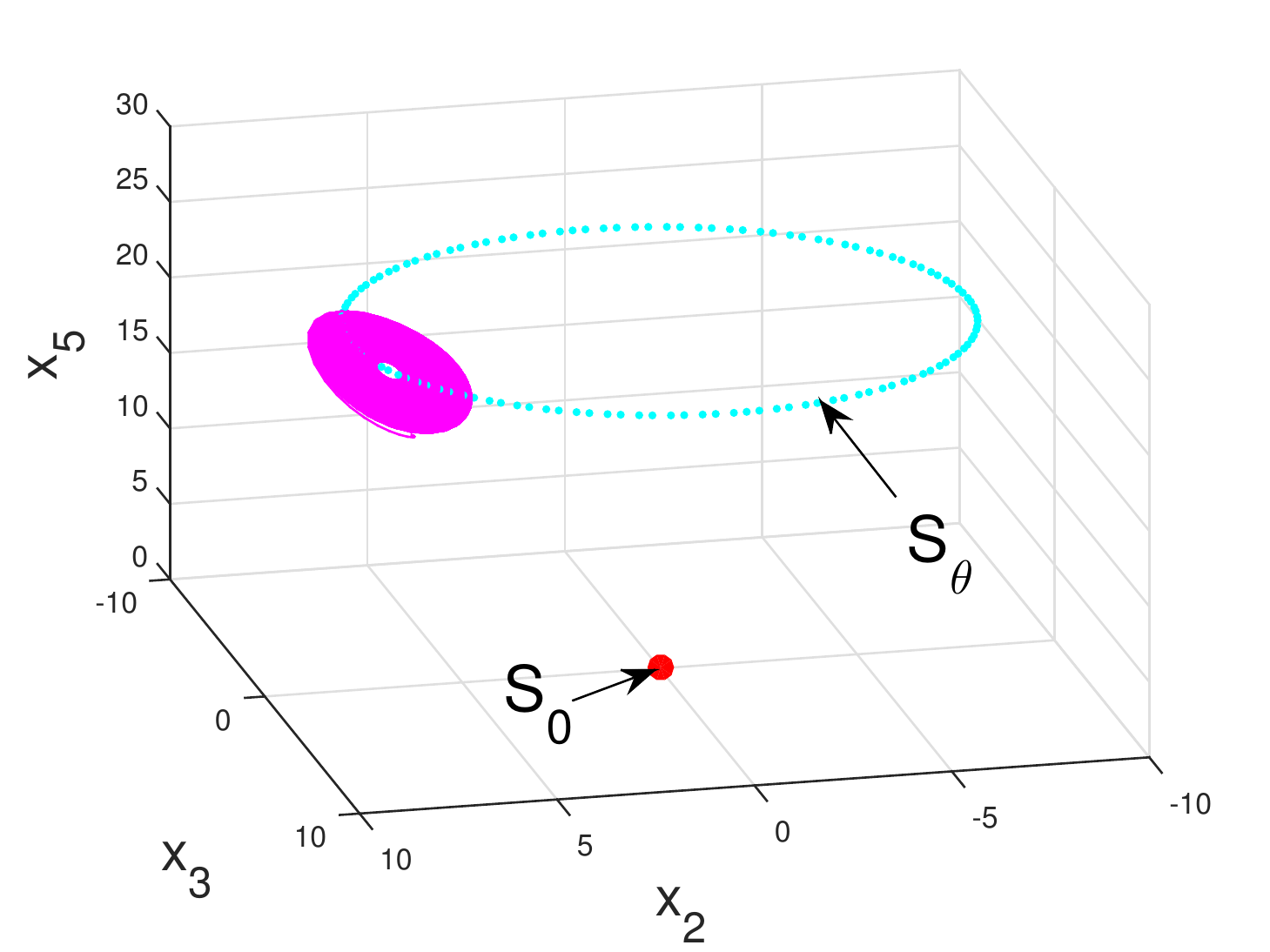}
  }
  \caption{
   Visualization of two trajectories with initial points from blue and red  parts of vertical diagonal in the white gape of Fig.\ref{fig:lorenz:attr:hid}.
  }
  \label{fig:twotrajec1}
 \end{figure*}
 \begin{figure*}[!ht]
  \centering
  \subfloat[
  {\scriptsize  }
  ] {
  \label{fig:outerlines}
  \includegraphics[width=0.4\textwidth]{./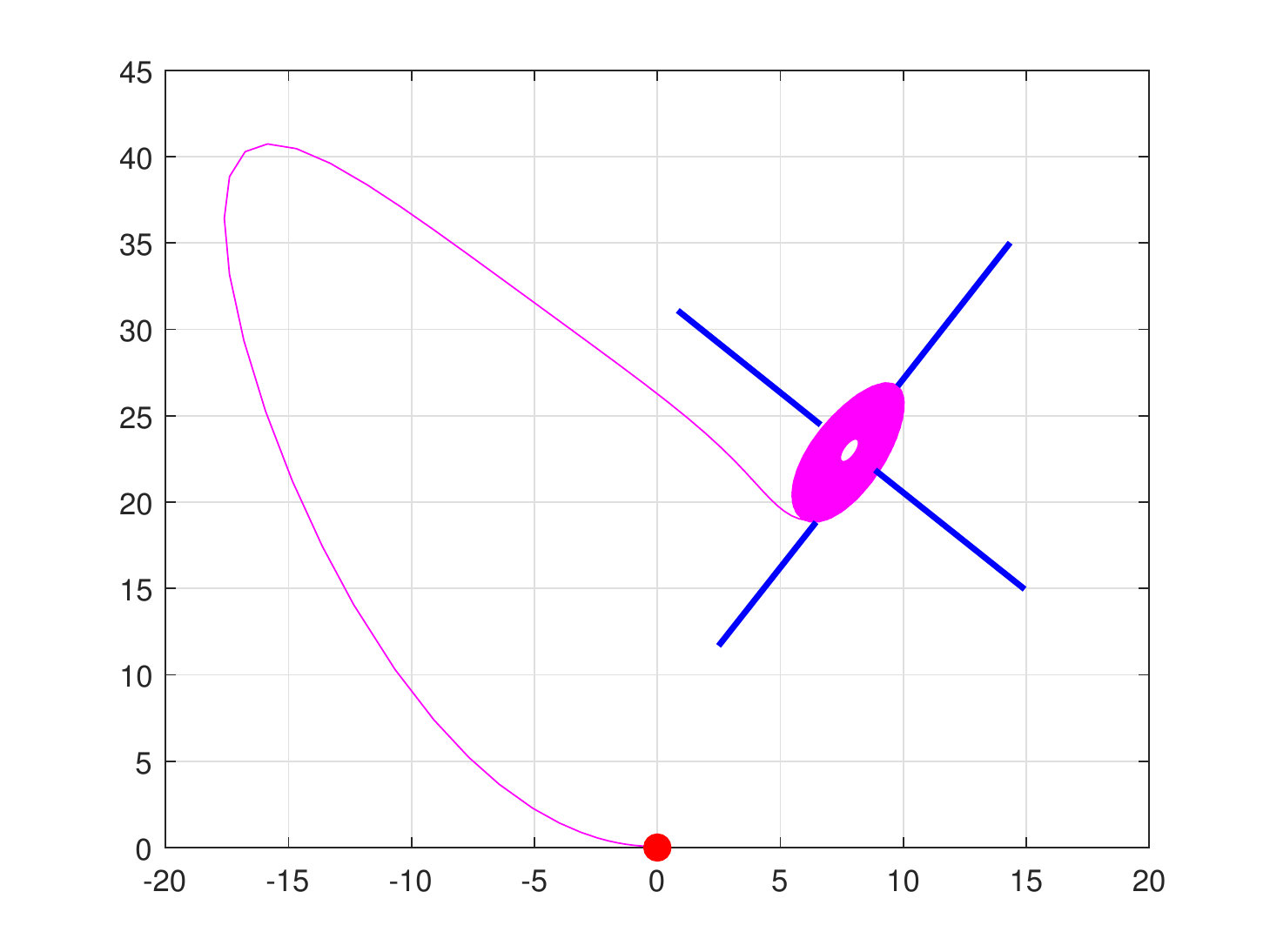}
  }~
  \subfloat[
  {\scriptsize }
  ] {
  \label{fig:trajcfromouterlines}
  \includegraphics[width=0.4\textwidth]{./trajcfromouterlines11.pdf}
  }
  \caption{
  (a) Visualization of line segments around the unstable manifold in Fig.~\ref{fig:lorenz:attr:hid1}, starting trajectories from all points on these line segments approach attractor;
  (b) Localization of a trajectory with initial point on the upper outer line segment.
  }
  \label{fig:twotrajec2}
 \end{figure*}\\\\

 \section{Attempts to prove the attractiveness of the transient set}
To demonstrate strictly the attractiveness of the green sets as in Figs.~\ref{fig:lorenz:attr:hid} and ~\ref{fig:hiddem:1}, another technique can be utilized. The idea of this technique is based on the fact that the complex Lorenz system \eqref{sys:complex-Lorenz} has a projective space, in which the states differing only by a common phase of variables $X$ and $Y$ are considered to be equivalent (see Section \ref{Inner:estimation}).
\par Now, it is important to mention the following properties
of the projective system \eqref{projec1-system} \cite{VladimirovTD-1997,VladimirovTD-1998-IJBC}:\\
(i) All physical information and dynamics in the phase space of the system \eqref{sys:complex-Lorenz} can be derived from the equations of motion \eqref{projec1-system} in the projective space.\\
(ii) The projective system \eqref{projec1-system} has a 1-dimensional unstable manifold and 3-dimensional stable manifold.
\par Note that the projective system \eqref{projec1-system} has the following two equilibria:
$S'_{0}=(0, 0, 0, 0)$  and $S'_{1}=(\xi'_{1},\upsilon'_{1}, w'_{1}, Z'_{1})$, where
\[
\xi'_{1}\!=\!\tfrac{\beta(\mu^{2}\!+ \kappa^{2})(\mu^{2}\!-\kappa^{2})}{2\mu^{4}(\beta \varrho+1)}, \
\upsilon'_{1}\!=\!0, \
w'_{1}\!=\!\tfrac{\beta\upsilon(\mu^{2}\!+\kappa^{2})}{\mu^{3}(\beta \varrho+1)}, \
Z'_{1}\!=\!\tfrac{\mu^{2}\!+\kappa^{2}}{\mu^{2}(\beta \varrho+1)}.
\]
The equilibria $S'_{0}$ and $S'_{1}$ in the projective space represent the projections of the equilibria $S_{0}$ and $S_{\theta}$ of the system \eqref{13} respectively. For the torus as in Fig.~\ref{fig:self:torus} its projection in the projective space $\mathcal{P}$ is a limit cycle $\Gamma$.

Because of the projection system \eqref{projec1-system} has a $1$-dimensional unstable manifold
and the projective mapping \eqref{projec1-coordinates} preserves the ''chaoticity'' of attractors.
So, it is reasonable to use the projection system \eqref{projec1-system} to identify whether
the chaotic sets in Figs.~\ref{fig:lorenz:attr:hid} and ~\ref{fig:hiddem:1}
are attractors or transient chaotic sets.
For the parameters $\sigma=10, r_{1}=24, r_{2}=0.001, e=-0.001, b=\frac{8}{3}$
(see Fig.~\ref{fig:lorenz:attr:hid}), around the equilibria $S'_{0}$ we choose
a small spherical vicinity of radius 0.002 and consider 100 random initial points on it.
From Fig.~\ref{fig:projective} one can observe the following: all the considered unstable manifolds
of $S'_{0}$ go in one direction and approach $S'_{1}$.
We did the same experiment for the case of the torus and got the same conclusion
(see Fig.~\ref{fig:projective:torus}).
 \begin{figure*}[!ht]
  \centering
  \subfloat[
  {\scriptsize }
  ] {
  \label{fig:greentrajc_projective}
  \includegraphics[width=0.45\textwidth]{./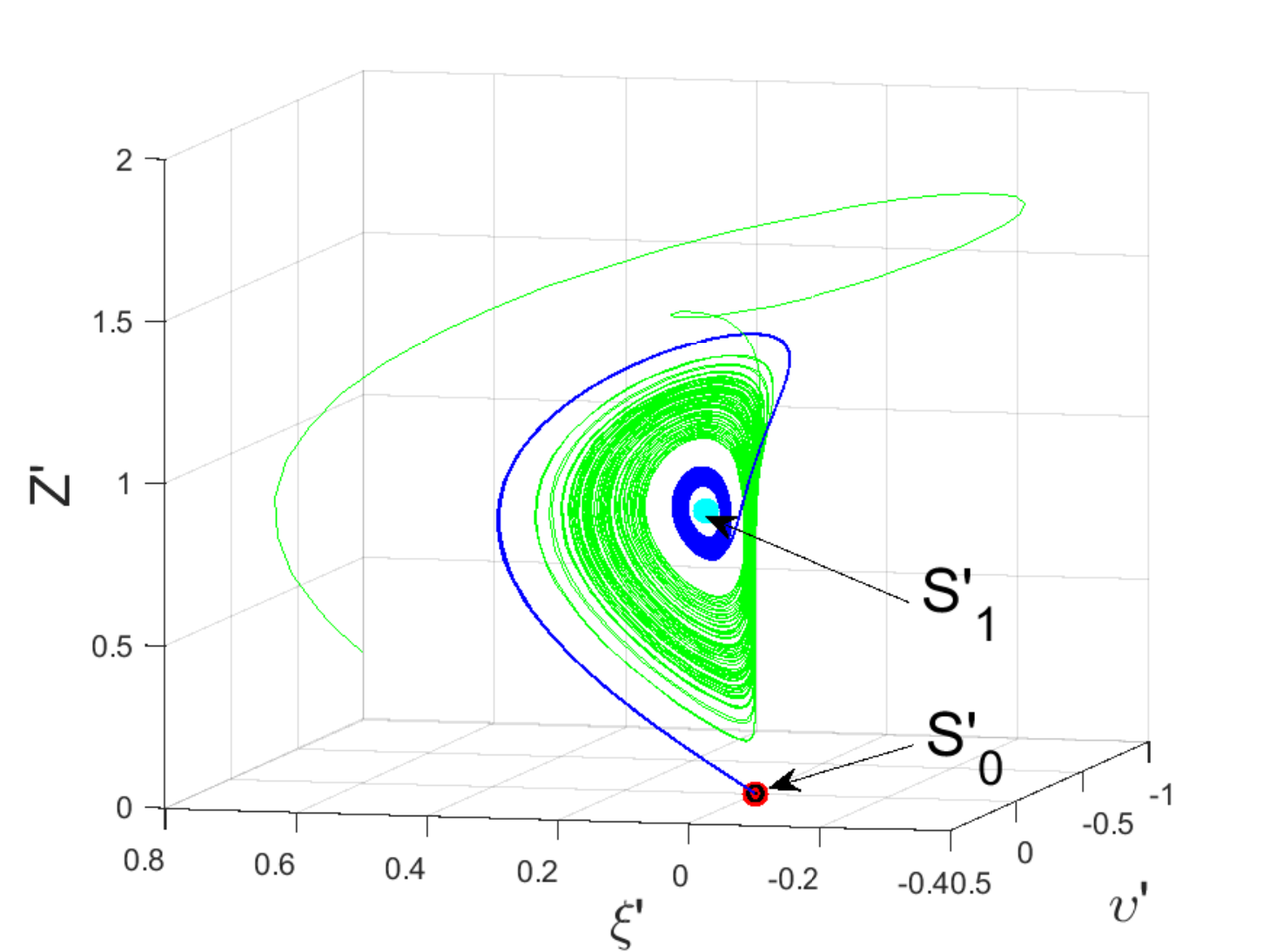}
  }~
  \subfloat[
  {\scriptsize }
  ] {
  \label{fig:sphere_projective}
  \includegraphics[width=0.45\textwidth]{./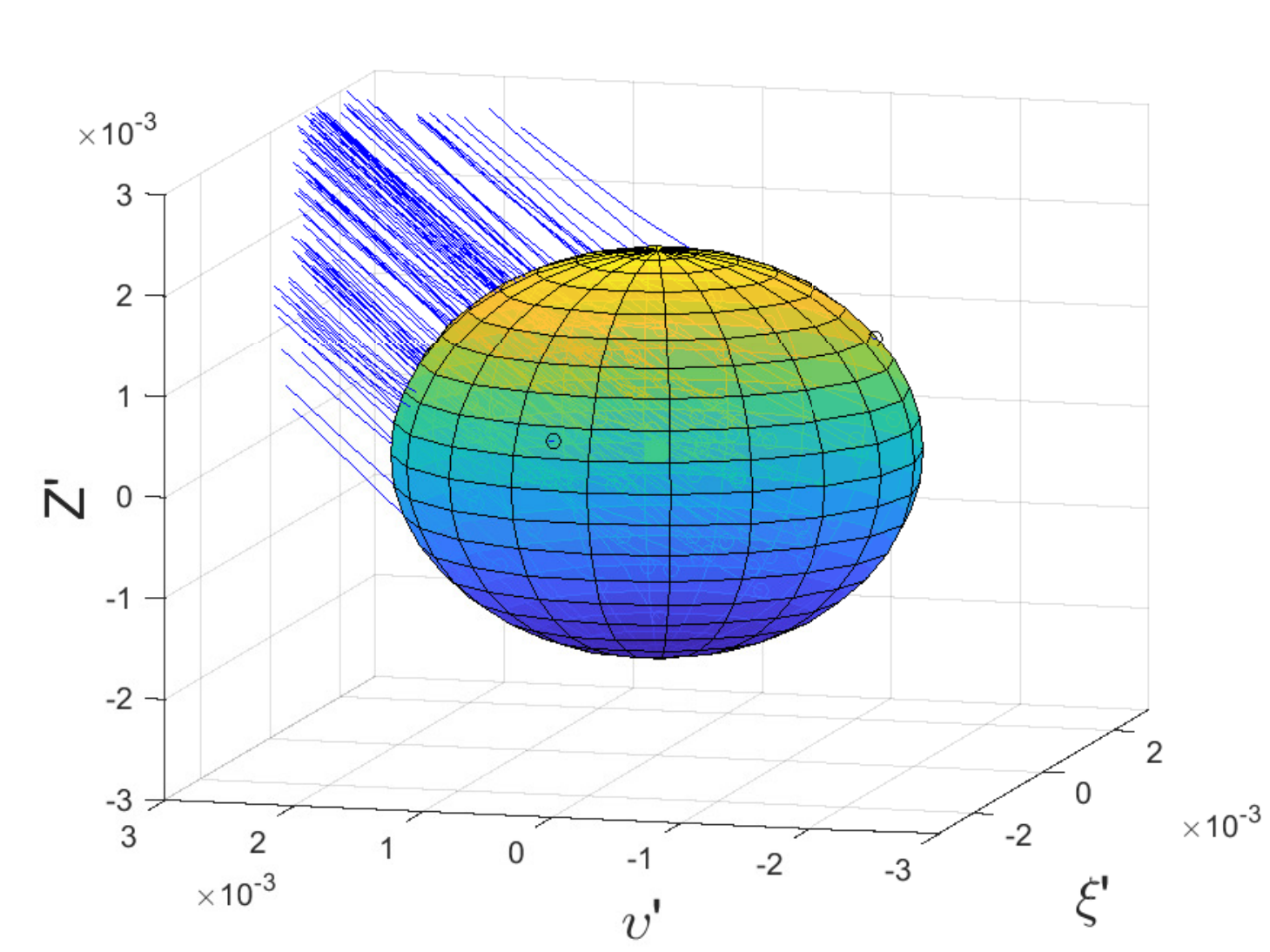}
  }
  \caption{(a) Localization of the chaotic set (green) in the projective space and $100$ unstable manifolds
  with starting points from a spherical of random points in a vicinity of $S'_{0}$,
  with $(\sigma, r_{1}, r_{2}, e, b)=(10, 24, 0.001, -0.001, \frac{8}{3})$;\\
  (b) Magnification of the area around $S'_{0}$ to show the sphere of random initial points
  and the corresponding $100$ trajectories tending to $S'_{1}$.}
  \label{fig:projective}
 \end{figure*}
 \begin{figure*}[!ht]
  \centering
  \subfloat[
  {\scriptsize }
  ] {
  \label{fig:greentrajctorus_projective}
  \includegraphics[width=0.45\textwidth]{./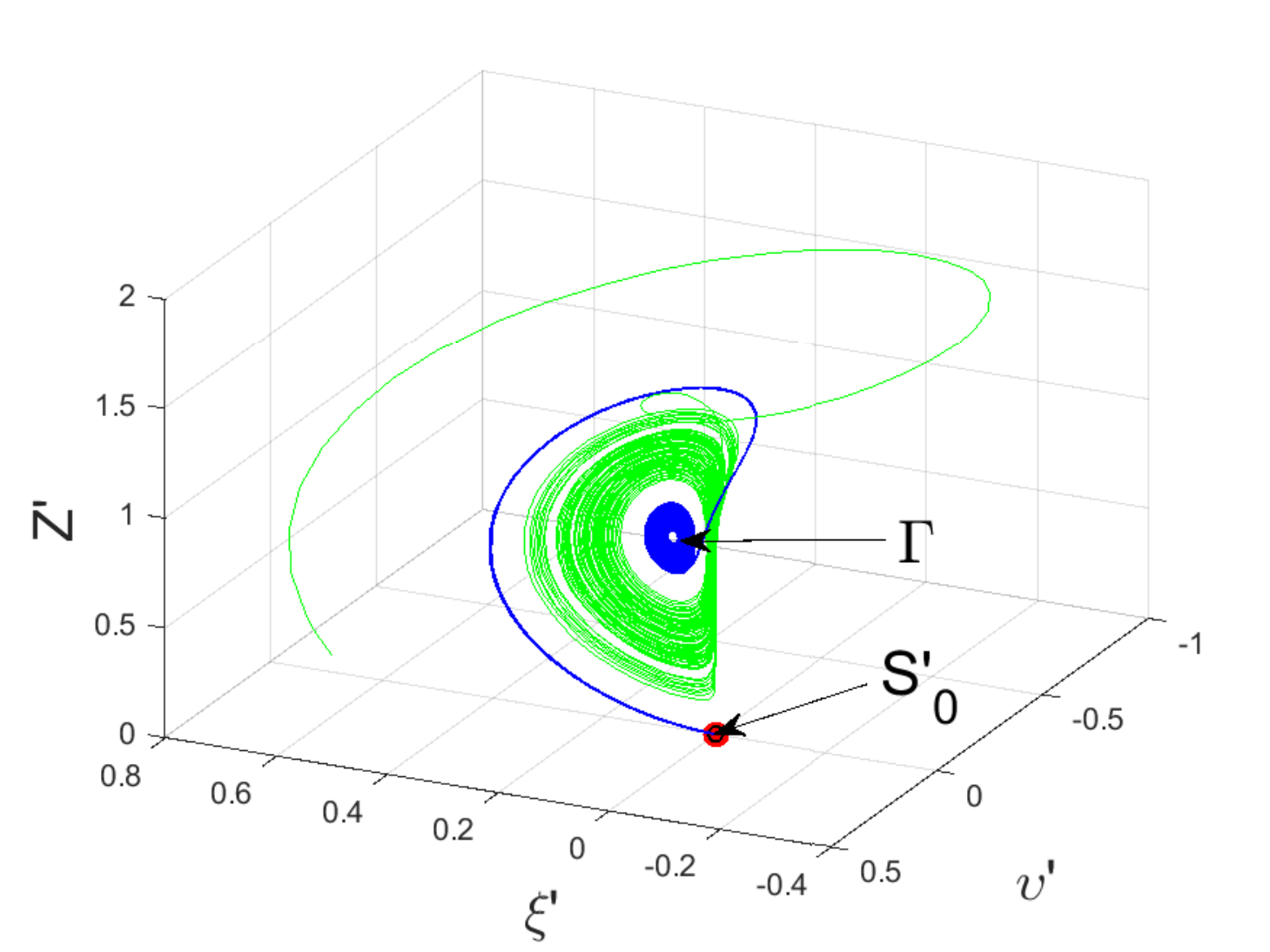}
  }~
  \subfloat[
  {\scriptsize }
  ] {
  \label{fig:spheretorus_projective}
  \includegraphics[width=0.45\textwidth]{./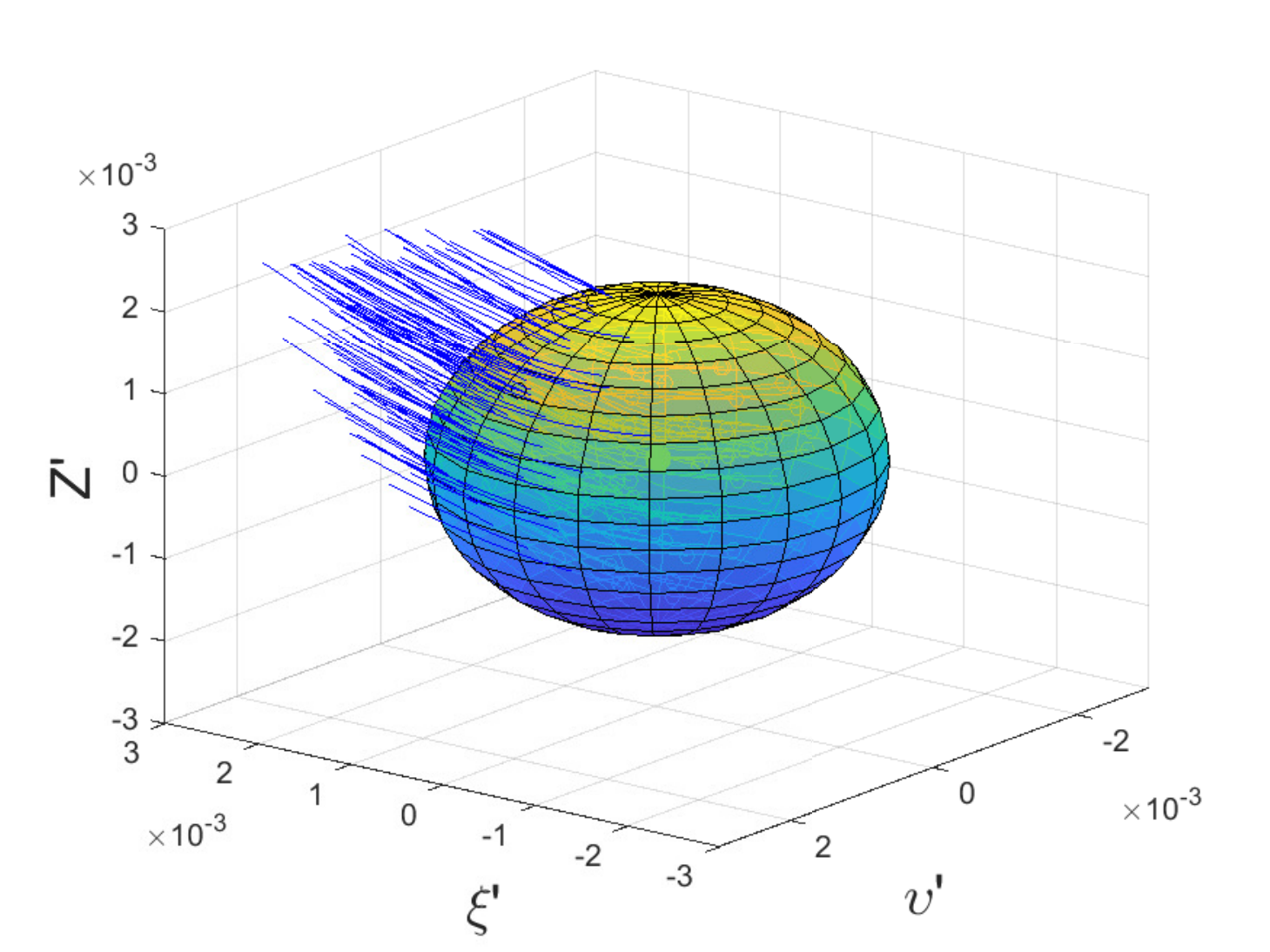}
  }
  \caption{(a) Localization of the chaotic set (green) in the projective space and $100$ unstable manifolds with starting points from a spherical of random points in a vicinity of $S'_{0}$,
  $(\sigma, r_{1}, r_{2}, e, b)=(10, 24, 0.002, -0.001, \frac{8}{3})$; \\
  (b) Magnification of the area around $S'_{0}$ to show the spherical of random points
  and the considered $100$ trajectories tending to the limit cycle $\Gamma$.}
  \label{fig:projective:torus}
 \end{figure*}
\end{document}